\newcommand\CMLLPAR{
\usepackage{cmll}
\newcommand\IPar{\mathord{\parr}}
}
\title{Upper approximating probabilities of convergence in probabilistic coherence spaces}
\author{Thomas Ehrhard\\
  Université de Paris, IRIF, CNRS, F-75013 Paris, France\\
  \url{ehrhard@irif.fr}}
\newtheorem*{lemma*}{Lemma}
\newtheorem*{proposition*}{Proposition}
\newcommand{\proofitem}[1]{\paragraph*{\mdseries\textit{#1}}}
\newcommand{\Beginproof}{\proofitem{Proof.}}
\newcommand{\Endproof}{
  \ifmmode % if math mode, assume display: omit penalty etc.
  \else \leavevmode\unskip\penalty9999 \hbox{}\nobreak\hfill
  \fi
  \quad\hbox{$\Box$}
  \par\medskip}
\newcommand\Eqref[1]{(\ref{#1})}
\newcommand\Eg{\textsl{e.g.}}
\renewcommand{\phi}{\varphi}
\renewcommand\epsilon{\varepsilon}
\newcommand{\Implies}{\Rightarrow}
\newcommand\Equiv{\Leftrightarrow}
\newcommand{\St}{\mid}
\newcommand{\arrow}{\rightarrow}
\newcommand{\Llbot}{{\mathord{\perp}}}
\newcommand{\Top}{\top}
\newcommand\cE{\mathcal{E}}
\newcommand\cK{\mathcal{K}}
\newcommand\cP{\mathcal{P}}
\newcommand\cS{\mathcal{S}}
\newcommand\cU{\mathcal{U}}
\newcommand\Fini{{\mathsf{fin}}}
\newcommand\Union{\bigcup}
\newcommand{\Linarrow}{\multimap}
\newcommand\Myleft{}
\newcommand\Myright{}
\newcommand\Web[1]{\Myleft|{#1}\Myright|}
\newcommand\Supp[1]{\operatorname{\mathsf{supp}}({#1})}
\newcommand\Mset[1]{[{#1}]}
\newcommand\ITens{\otimes}
\newcommand\Tens[2]{{#1}\ITens{#2}}
\newcommand\Tensp[2]{\left({#1}\ITens{#2}\right)}
\newcommand\IWith{\mathrel{\&}}
\newcommand\With[2]{{#1}\IWith{#2}}
\newcommand\IPlus{\oplus}
\newcommand\Plus[2]{{#1}\IPlus{#2}}
\newcommand\Orth[2][]{#2^{\Llbot_{#1}}}
\newcommand\Orthp[2][]{(#2)^{\Llbot_{#1}}}
\newcommand\Bwith{\mathop{\&}}
\newcommand\Bplus{\mathop\oplus}
\newcommand\Bunion{\mathop\cup}
\newcommand\Biorth[1]{#1^{\Llbot\Llbot}}
\newcommand\Triorth[1]{{#1}^{\Llbot\Llbot\Llbot}}
\newcommand\One{1}
\newcommand\Locun[1]{1^J}
\newcommand\Isom\simeq
\newcommand\Funinv[1]{{#1}^{-1}}
\newcommand\Limpl[2]{{#1}\Linarrow{#2}}
\newcommand\Limplp[2]{(\Limpl{#1}{#2})}
\newcommand\Nat{{\mathbb{N}}}
\newcommand\Biind[2]{\genfrac{}{}{0pt}{1}{#1}{#2}}
\newcommand\Snat{\Flat\Nat}
\newcommand\App[2]{({#1}){#2}}
\newcommand\Abst[3]{\lambda#1^{#2}\,{#3}}
\newcommand\List[3]{#1_{#2},\dots,#1_{#3}}
\newcommand\Kronecker[2]{\delta_{{#1},{#2}}}
\newcommand\Subst[3]{{#1}\left[{#2}/{#3}\right]}
\newcommand\Substbis[2]{{#1}\left[{#2}\right]}
\newcommand\Real{\mathbb{R}}
\newcommand\Realp{\mathbb{R}_{\geq 0}}
\newcommand\Realpto[1]{(\Realp)^{#1}}
\newcommand\Realpc{\overline{\Realp}}
\newcommand\Realpcto[1]{\Realpc^{#1}}
\newcommand\Rational{\mathbb Q}
\newcommand\Mfin[1]{\mathcal M_\Fini({#1})}
\newcommand\Ev{\operatorname{\mathsf{Ev}}}
\newcommand\Evlin{\operatorname{\mathsf{ev}}}
\newcommand\Norm[1]{\|{#1}\|}
\newcommand\Normsp[2]{\|{#1}\|_{#2}}
\newcommand\Rel[1]{\mathrel{#1}}
\newcommand\Redst[1]{\mathop{\mathsf{Red}}}
\newcommand\Tuple[1]{\langle{#1}\rangle}
\newcommand\Msetofsubst[1]{\bar F}
\newcommand\Pcoh[1]{\mathsf P{#1}}
\newcommand\Pcohp[1]{\Pcoh{\left(#1\right)}}
\newcommand\Pcohc[1]{\overline{\mathsf P}{#1}}
\newcommand\Base[1]{e_{#1}}
\newcommand\Matapp[2]{{#1}\Compl{#2}}
\newcommand\Matappa[2]{{#1}\cdot{#2}}
\newcommand\PCOH{\mathbf{Pcoh}}
\newcommand\Leftu{\lambda}
\newcommand\Rightu{\rho}
\newcommand\Assoc{\alpha}
\newcommand\Sym{\gamma}
\newcommand\Retri\zeta
\newcommand\Retrp\rho
\newcommand\Impl[2]{{#1}\Rightarrow{#2}}
\newcommand\Tsem[1]{\llbracket{#1}\rrbracket}
\newcommand\Psem[2]{\llbracket{#1}\rrbracket_{#2}}
\newcommand\Psemst[2]{\llbracket{#1}\rrbracket_{#2}}
\newcommand\Psemstate[2]{\llbracket{#1}\rrbracket_{#2}}
\newcommand\Tnat\iota
\newcommand\Fix[1]{\operatorname{\mathsf{fix}}(#1)}
\newcommand\If[3]{\operatorname{\mathsf{if}}(#1,#2,#3)}
\newcommand\Pred[1]{\operatorname{\mathsf{pred}}(#1)}
\newcommand\Succ[1]{\operatorname{\mathsf{succ}}(#1)}
\newcommand\Num[1]{\underline{#1}}
\newcommand\Loop\Omega
\newcommand\Dice[1]{\operatorname{\mathsf{coin}}(#1)}
\newcommand\Tseq[3]{{#1}\vdash{#2}:{#3}}
\newcommand\Tseqst[3]{{#1,#2:#3}\vdash}
\newcommand\Tseqstate[2]{#1\vdash #2}
\newcommand\Timpl\Impl
\newcommand\Simpl\Impl
\newcommand\PCF{\mathsf{PCF}}
\newcommand\Fixpcoh[1]{\operatorname{\mathsf{Fix}}_{#1}}
\newcommand\Weak[1]{\operatorname{\mathsf{w}}_{#1}}
\newcommand\Contr[1]{\operatorname{\mathsf{contr}}_{#1}}
\newcommand\Der[1]{\operatorname{\mathsf{der}}_{#1}}
\newcommand\Digg[1]{\operatorname{\mathsf{dig}}_{#1}}
\newcommand\Fun[1]{\widehat{#1}}
\newcommand\Id{\operatorname{\mathsf{Id}}}
\newcommand\Proj[1]{\pi_{#1}}
\newcommand\Excl[1]{\oc{#1}}
\newcommand\Exclp[1]{\oc({#1})}
\newcommand\Prom[1]{#1^!}
\newcommand\Promm[1]{{#1}^{!!}}
\newcommand\Relincl\eta
\newcommand\Relrestr\rho
\newcommand\Seelyz{\mathsf m^0}
\newcommand\Seelyt{\operatorname{\mathsf m}^2}
\newcommand\Compl{\,}
\newcommand\Curlin{\operatorname{\mathsf{cur}}}
\newcommand\Kl[1]{{#1}_\oc}
\newcommand\Eval[2]{\langle#1,#2\rangle}
\newcommand\Let[3]{\mathsf{let}(#1,#2,#3)}
\newcommand\Ssuc{\overline{\mathsf{suc}}}
\newcommand\Spred{\overline{\mathsf{pred}}}
\newcommand\Vect[1]{\overrightarrow{#1}}
\newcommand\Bnfeq{\mathrel{\mathord:\mathord=}}
\newcommand\Bnfor{\,\,\mathord|\,\,}
\newcommand\Stcons{\cdot}
\newcommand\Stsucc[1]{\mathsf{succ}\Stcons #1}
\newcommand\Stpred[1]{\mathsf{pred}\Stcons #1}
\newcommand\Stif[3]{\mathsf{if}(#1,#2)\Stcons #3}
\newcommand\Stlet[3]{\mathsf{let}(#1,#2)\Stcons #3}
\newcommand\State[2]{\langle#1,#2\rangle}
\newcommand\Stempty{\epsilon}
\newcommand\Len[1]{\mathsf{len}(#1)}
\newcommand\Extleq{\sqsubseteq}
\newcommand\Extc[1]{\underline{#1}}
\newcommand\Extd[1]{\cE_{#1}}
\newcommand\Exto[1]{\Extleq_{#1}}
\newcommand\EPCOH{\PCOH^{\mathsf e}}
\newcommand\Eset[1]{\{#1\}}
\newcommand\Klapp[2]{\Fun{#1}(#2)}
\newcommand\Nerr{\top}
\newcommand\Invi[2]{\mathsf{inv}(#1,#2)}
\newcommand\Starg[2]{\mathsf{arg}(#1).#2}
\newcommand\Errdiv{\Omega}
\newcommand\Errconv{\mho}
\newcommand\Intercc[2]{[#1,#2]}
\newcommand\Flat[1]{#1_{\mathord\perp}}
\newcommand\Flate[1]{\Flat{#1}^\Nerr}
\newcommand\Pcasee[1]{\widetilde{\mathsf{case}}^{#1}}
\newcommand\Funpcasee[1]{\widehat{\mathsf{case}}^{#1}}
\newcommand\Slete[1]{\widetilde{\mathsf{let}}\left(#1\right)}
\newcommand\Redwhd{\beta_{\mathsf{wh}}}
\newcommand\Redwhp[1]{\beta_{\mathsf{wh}}^{#1}}
\newcommand\Figbreak{\\[3pt]}
\newcommand\Cpathl[1]{\Len{#1}}
\newcommand\Cpaths[1]{\operatorname{\mathsf s}(#1)}
\newcommand\Cpatht[1]{\operatorname{\mathsf t}(#1)}
\newcommand\Cpathp[1]{\operatorname{\mathsf{pr}}(#1)}
\newcommand\Cpathset[2]{\operatorname{\mathsf{cp}}\left(#1,#2\right)}
\newcommand\Termsty[2]{\langle#1\vdash #2\rangle}
\newcommand\Termstyf[2]{\langle#1\vdash #2\rangle_\Fini}
\newcommand\Whnormal{$\mathsf{wh}$-normal}
\newcommand\Redwhpr[2]{\operatorname{\mathbb{P}}\left(#1\downarrow#2\right)}
\newcommand\Redwhprc[1]{\operatorname{\mathbb{P}}_\Errconv\left(#1\right)}
\newcommand\Termso{\Extleq}
\newcommand\Bfune[1]{\widetilde{#1}}
\newcommand\Fsucc{\operatorname{\mathsf{s}}}
\newcommand\Fpred{\operatorname{\mathsf{p}}}
\newcommand\Kreval{\operatorname{\cK}}
\newcommand\Redintst[1]{\|#1\|}
\newcommand\Redint[1]{|#1|}
\newcommand\Fixfree{fix-free}
\newcommand\Bb[1]{\mathbb{#1}}
\newcommand\Polynoms[3]{#1\left[#2,#3\right]}
\newcommand\Polynomts[3]{#1\left[#2,#3\right]_{\mathsf{wf}}}
\newcommand\Ptone{1}
\newcommand\Ptzero{0}
\newcommand\Ptvar[2]{#1\triangleleft #2}
\newcommand\Ptcl[4]{[#1\cdot#2,#3\cdot#4]}
\newcommand\Polyofpt{\operatorname{\mathsf{pol}}}
\newcommand\Idl[2]{\Id^{\mathord\sqsubseteq}_{#1,#2}}
\newcommand\Idu[2]{\Id^{\mathord\sqsupseteq}_{#1,#2}}
\newcommand\Meone[2]{{#1}\cdot{#2}}
\newcommand\Natt{\Nat^{\Nerr}}
\newcommand\Scatch[1]{\overline\epsilon_{#1}}
\newcommand\Pole[1]{\mathord\Bot_{#1}}
\newcommand\Andprop{\wedge}
\newtheorem{theorem}{Theorem}
\newtheorem{proposition}[theorem]{Proposition}
\newtheorem{lemma}[theorem]{Lemma}
\newtheorem{example}[theorem]{Example}
\newenvironment{remark}%
{\smallbreak\noindent{\bf Remark.}\nobreak}%
{\normalsize\smallbreak}
\newcommand\Itmath{\par\noindent$\blacktriangleright$\ }
\begin{document}
\maketitle

\begin{abstract}
    We develop a theory of probabilistic coherence spaces equipped with
  an additional extensional structure and apply it to approximating
  probability of convergence of ground type programs of probabilistic
  PCF whose free variables are of ground types.  To this end we define
  an adapted version of Krivine Machine which computes polynomial
  approximations of the semantics of these programs in the model.  These
  polynomials provide approximations from below and from above of
  probabilities of convergence; this is made possible by extending the
  language with an error symbol which is extensionally maximal in the
  model.

\end{abstract}

\section*{Introduction}

Various settings are now available for the denotational
interpretations of probabilistic programming languages.
\begin{itemize}
\item \emph{Game} based models, first proposed
  in~\cite{DanosHarmer00} and further developed by various authors
  (see~\cite{CastellanClairambaultPaquetWinskel18} for an example of
  this approach). From their deterministic ancestors they typically
  inherit good definability features.
\item Models based on Scott continuous functions on domains endowed
  with additional probability related structures. Among these models
  we can mention %Plotkin and Keimel
  \emph{Kegelspitzen}~\cite{KeimelPlotkin17} (domains equipped with an
  algebraic convex structure) and \emph{$\omega$-quasi Borel
    spaces}~\cite{VakarKammarStaton19} (domains equipped with a
  generalized notion of measurability). In contrast with the former,
  the latter uses an adapted probabilistic powerdomain construction.
\item Models based on (a generalization of) Berry stable
  functions. The first category of this kind was that of
  \emph{probabilistic coherence spaces} (PCSs) and power series with
  non-negative coefficients (the Kleisli category of the model of
  Linear Logic developed in~\cite{DanosEhrhard08}) for which could be
  proved adequacy and full abstraction with respect to a probabilistic
  version of $\PCF$~\cite{EhrhardPaganiTasson18}. This
  setting was extended to ``continuous data types'' (such as $\Real$)
  by substituting PCSs with \emph{positive cones} and power series
  with functions featuring an hereditary monotonicity
  called~\emph{stability}~\cite{EhrhardPaganiTasson18b}. 
\end{itemize}
Just as games, probabilistic coherence spaces interpret types using
simply defined combinatorial devices called \emph{webs} which are
countable sets very similar to game arenas, and even simpler since the
order of basic actions is not taken into account. A closed program of
type $\sigma$ is interpreted as a map from the web associated with
$\sigma$ to the non-negative real half-line $\Realp$. Mainly because
of the absence of explicit sequencing information in web elements,
these functions are not always probability sub-distributions. For
instance, a closed term $M$ of type $\Timpl\Tnat\One$ (integer to unit
type) is interpreted as a mapping from $\Mfin\Nat$ (finite multisets
of integers) to $\Realp$, rather seen as an indexed family
$S=(\alpha_\mu)_{\mu\in\Mfin\Nat}$. Given $u\in\Realp^\Nat$ and
$\mu\in\Mfin\Nat$ one sets\footnote{In other words, we consider $\mu$
  as a multiexponent for ``variables'' indexed by $\Nat$ and $u$ is
  considered as a valuation for these variable.}
$u^\mu=\prod_{n\in\Nat}u_n^{\mu(n)}$ (where $\mu(n)$ is the
multiplicity of $n$ in $\mu$) and then we can see $S$ as the function
$\Fun S:\Pcoh\Nat\to\Intercc 01$ defined by
$\Fun S(u)=\sum_{\mu\in\Mfin\Nat}\alpha_\mu u^\mu$ where $\Pcoh\Nat$
is the set of all $u\in\Realpto\Nat$ such that
$\sum_{n\in\Nat}u_n\leq 1$ (subprobability distributions on the
natural numbers). It is often convenient to use $\Fun S$ for
describing $S$ (no information is lost in doing so), what we do now.

Since $\PCOH$ (or rather the associated Kleisli CCC $\Kl\PCOH$) is a
model of probabilistic PCF, one has $\Fun S(u)\in\Intercc 01$ and one
can prove that $\Fun S(u)$ is exactly the probability that the
execution of $M$ converges if we apply it to a random integer
distributed along $u$ (such a random integer has also a probability
$1-\sum_{n=0}^\infty u_n$ to diverge): we call this property
\emph{adequacy} in the sequel. We can consider $S$ as a power series or
analytic function which can be infinite ``in width'' and ``in depth''.
\begin{itemize}
\item \emph{In width} because the $\mu$'s such that
  $\alpha_\mu\not=0$ can contain infinitely many different integers. A
  typical $S$ which is infinite in width is $S_1$ such that
  $\Fun{S_1}(u)=\sum_{n=0}^\infty u_n$ (the relevant $\mu$'s are the
  singleton multisets $\Mset n$, for all $n\in\Nat$).
\item \emph{In depth} in the sense that a given component of
  $u$ can be used an unbounded number of time. A typical $S$ which is
  infinite in depth is $S_2$ such that
  $\Fun{S_2}(u)=\sum_{n\in\Nat}\frac 1{2^{n+1}}u_0^n$.
\end{itemize}
More precisely, $S$ is finite in depth if the expression $\Fun S(u)$
has finitely non-zero terms when $u$ has a finite support. For
instance $S_3$ such that $\Fun{S_3}=\sum_{n=0}^\infty u_n u_0^n$ is
infinite in width but not in depth, in spite of the fact that it has
not a bounded degree in $u_0$. Notice that this notion of ``finiteness
in depth'' is at the core of the concept of finiteness space
introduced in~\cite{Ehrhard00b}. When $S$ arises as the semantics of a
term $M$ of probabilistic PCF (see~\cite{EhrhardPaganiTasson18} and
Section~\ref{sec:PCF-syntax}), it is generally not of finite depth
because $M$ can contain subterms $\Fix P$ representing recursive
definitions: remember that such a term ``reduces'' to $\App P{\Fix P}$.

Given a closed term $M$ of type $\Timpl\Tnat 1$ (or more generally a
closed term $M$ of type $\Timpl{\Tnat^k}1$ but we keep $k=1$ in this
introduction for readability) we are interested in approximating
effectively the probability that $\App MN$ converges when $N$ is a
closed term of type $\Tnat$ which represents a sub-distribution of
probabilities $u$, that is in approximating $\Fun S(u)$. To this end,
we try to find approximations of $S$ itself: then it will be enough to
apply these approximations to the subdistributions $u$'s we want to
consider. Finding approximations from below is not very difficult: it
suffices to consider terms $M_k$ obtained from $M$ by unfolding $k$
times all fixpoint operators it contains, that is replacing
hereditarily in $M$ each subterm of shape $\Fix P$ with the term
$\App P{\App P{\cdots\App P\Errdiv}}$ ($k$ occurrences of $P$ and
$\Errdiv$ is a constant which represents divergence and has semantics
$0$). The powerseries $S_k$ interpreting such an $M_k$ in $\PCOH$ is
then of finite depth\footnote{Because $M_k$ contains no
  fixpoints. This tree is a kind of PCF Böhm tree very similar to
  those considered in game semantics, \Eg~\cite{HylandOng00}.}  and can be
computed (\Eg~as a lazy data structure) by means of an adapted version
$\Kreval$ of the Krivine Machine, see Section~\ref{sec:Krivine-machine}. This
power series $S_k$ can still be an infinite object but since it is of
finite depth, by choosing a finite subset $J$ of $\Nat$, it is
possible to extract effectively from it a finite polynomial $S_k^J$ such
that $\Fun{S_k^J}(u)=\Fun{S_k}(u)$ when the support of $u$ is a subset
of $J$ (this extraction can be integrated in the Krivine Machine
itself, or applied afterwards to the lazy infinite data structure it
yields). The sequence $S_k$ is monotone in $\PCOH$ and has $S$ as lub
hence $\Fun{S_k}(u)$ is a monotone sequence in $\Intercc 01$ which
converges to $\Fun S(u)$. But there is no algorithm which, for any
closed term $P$ of type $1$ and any $p\in\Nat $ yields a $k$ such that
the semantics of $P_k$ is $2^{-p}$-close to that of $P$: such an
algorithm would make deciding almost-sure termination $\Pi^0_1$
whereas we know that this problem is $\Pi^0_2$-complete,
see~\cite{Katoen15}. So we cannot systematically know how good the
estimate $S_k$ is.

Nevertheless it would be nice to be able to approximate $\Fun S(u)$
\emph{from above} by some $\Fun{S^k}(u)$ where $S^k$ is again a
finite depth power series extracted from similar ``finite''
approximations $M^k$ of $M$: if we are lucky enough to find $k$ such
that $\Fun{S^k}(u)-\Fun{S_k}(u)\leq\epsilon$, we are sure that
$\Fun{S_k}(u)$ is an $\epsilon$-approximation of $\Fun S(u)$. This is
exactly what we do in this paper, developing first a denotational
account of these approximations.
% (just as PCSs account for approximation from below).

The PCS denotational account of approximations from below is based on
the fact that any PCS has a least element $0$ that we use to interpret
$\Errdiv$. For approximating from above we would need a maximal
element that we could use to interpret a constant $\Errconv$ to be
added to our PCF: we would then approximate $\Fix P$ with
$\App P{\App P{\cdots\App P\Errconv}}$. The problem is that, for a PCS
$X$, the associated domain $\Pcoh X$ (whose order relation is denoted
as $\leq$) has typically no maximal element; the PCS $\Snat$ of ``flat
integers'' whose web is $\Nat$ and $\Pcoh\Snat$ is the set of all
sub-probability distributions on $\Nat$, has no $\leq$-maximal element
since in this PCS $u\leq v$ means $\forall n\in\Nat\ u_n\leq v_n$.

So we consider PCSs equipped with an additional (pre)order relation
$\Exto{}$ for which such a maximal element can exist: an extensional
PCS is a tuple $X=(\Extc X,\Exto X,\Extd X)$, where $\Extc X$ is a PCS
(the carrier of $X$) and $\Exto X$ is a preorder relation on
$\Extd X$, the set of extensional elements of $X$, which is a subset
of $\Pcoh{\Extc X}$. These objects are a probabilistic analog of
Berry's bidomains~\cite{Berry78}. We prove that these objects form again
a model of classical linear logic $\EPCOH$ whose associated Kleisli
category $\Kl\EPCOH$ has fixpoint operators at all types. The main
features of this model are the following.
\begin{itemize}
\item At function type, $\Extd{\Timpl XY}$ is the set of all
  $s\in\Pcohp{\Timpl{\Extc X}{\Extc Y}}$ which are monotone wrt.~the
  extensional preorder, that is
  $\forall u,v\in\Extd X\ u\Exto Xv\Implies\Fun s(u)\Exto Y\Fun s(v)$
  (where $\Fun s:\Pcoh{\Extc X}\to\Pcoh{\Extc Y}$ is the ``stable''
  function associated with $s$) and, given $s,t\in\Extd{\Timpl XY}$,
  we stipulate that $s\Exto{\Timpl XY}t$ if
  $\forall u\in\Extd X\ \Fun s(u)\Exto Y\Fun t(u)$, that is
  $\Exto{\Timpl XY}$ is the extensional preorder.
\item There is an extensional PCS $\Flate\Nat$ whose web is
  $\Nat\cup\Eset\Nerr$ which is an extension of $\Snat$ in the sense
  that\footnote{$\One_\Nerr$ is the PCS
    $(\Eset\Nerr,\Eset\Nerr\times\Intercc 01)$.}
  $\Extc{\Flate\Nat}\Isom\Plus\Snat{\One_\Nerr}$ and
  $\Pcoh{\Extc{\Flate\Nat}}$ has an $\Exto{}$-maximal element,
  namely $\Base\Nerr$ (the $\Nat\cup\Eset\Nerr$-indexed family of
  scalars which maps $n\in\Nat$ to $0$ and $\Nerr$ to $1$).
\end{itemize}
Accordingly we extend probabilistic PCF
from~\cite{EhrhardPaganiTasson18} with two new constants
$\Errdiv,\Errconv$ of type $\Tnat$. The operational semantics of this
language is defined as a probabilistic rewriting system in the spirit
of~\cite{EhrhardPaganiTasson18}, with new rules for constants $\Errdiv$
and $\Errconv$ which are handled exactly in the same way, as error
exceptions. Then we define a syntactic preorder $\Termso$ on terms
such that $\Errdiv\Termso M\Termso\Errconv$ for all term of type
$\Tnat$, and such that
$M'\Termso M\Andprop N'\Termso \Fix M\Implies\App{M'}{N'}\Termso\Fix
M$ and
$M\Termso M'\Andprop\Fix M\Termso N'\Implies\Fix
M\Termso\App{M'}{N'}$. In particular, for any term $M$ we have
$M_k\Termso M\Termso M^k$ (where $M_k$ and $M^k$ are the ``finite''
approximations of $M$ obtained by unfolding all fixpoints $k$ times as
explained above starting from $\Errdiv$ and $\Errconv$
respectively\footnote{One can define error terms at all types by
  simply adding $\lambda$-abstractions in front of the ground type
  $\Errdiv$ and $\Errconv$.}).  We interpret\footnote{We omit the
  proof that the semantics is invariant by reduction and the proof of
  adequacy as they are simple adaptations of the corresponding proofs
  in~\cite{EhrhardPaganiTasson18}.} this language in $\EPCOH$ and prove
that this interpretation is extensionally monotone: if
$\Tseq{}{M,N}\sigma$ and $M\Termso N$ then
$\Psem M{}\Exto{\Tsem\sigma}\Psem N{}$ (where
$\Psem M{}\in\Extd{\Tsem\sigma}$ is the interpretation of the term $M$
in the interpretation of its type $\sigma$).

We adapt our approximation problem to this extension of PCF, without
changing its nature: assuming $\Tseq{x:\Tnat}M\Tnat$ and
$u\in\Pcoh\Snat$, approximate from above and below the probability $p$
that $\Subst MNx$ reduces to $\Errconv$, knowing that the probability
subdistribution of $N$ is $u$. To address it, we extend the Krivine
Machine $\Kreval$ to handle $\Errconv$. From a term $P$ \emph{without
  fixpoints} and such that $\Tseq{x:\Tnat}{P}{\Tnat}$, $\Kreval$
produces a (generally infinite) ``Böhm tree'' of which we extract a
power series $S$ of finite depth which coincides with the denotational
interpretation of $P$ in $\EPCOH$, or more precisely with the
$\Nerr$-component of this interpretation. So if $P\Termso M$ we have
$\Fun S(u)\leq p$ and if $M\Termso P$ then $p\leq\Fun S(u)$ by the
above monotonicity property and adequacy of the
semantics\footnote{Which guarantees that $p$ is equal to the
  interpretation of $M$ applied to $u$.}. This will hold in particular
if $P=M_k$ or $P=M^k$ respectively. Notice that if $J\subseteq\Nat$ is
finite and if the support of $u$ is a subset of $J$ then computing
$\Fun S(u)$ involves only a finite set of monomials, computable from
$J$.

\section*{Notations}
If $I$ is a set, we use $\Mfin I$ for the set of finite multisets of
elements of $I$, which are functions $\mu:I\to\Nat$ such that the set
$\Supp\mu=\Eset{i\in I\St\mu(i)\not=0}$ is finite. We use
$\Mset{\List i1k}$ for the multiset $\mu$ such that $\mu(i)$ is the
number of indices $l$ such that $i_l=i$. We use $\Mset{}$ for the
empty multiset and $+$ for the sum of multisets.

We use $\Realp$ for the set of real numbers $r$ such that $r\geq 0$
and we set $\Realpc=\Realp\cup\Eset{\infty}$ (the complete half-real
line).

If $i\in I$, we use $\Base i$ for the element of $\Realpto I$ such
that $(\Base i)_j=\Kronecker ij$ (the Kronecker symbol).

\section{Probabilistic coherence spaces (PCS)}\label{sec:PCS}

For the general theory of PCSs we refer
to~\cite{DanosEhrhard08,EhrhardPaganiTasson18}. We recall briefly the
basic definitions for the sake of self-containedness.

% \subsection{Basic definitions on PCSs}\label{sec:basics-PCSs}

Given an at most countable set $I$ and $u,u'\in\Realpcto I$, we set
$\Eval u{u'}=\sum_{i\in I}u_iu'_i\in\Realpc$. Given
$\cP\subseteq\Realpcto I$, we define $\Orth\cP\subseteq\Realpcto I$ as
\begin{align*}
  \Orth\cP=\{u'\in\Realpcto I\St\forall u\in\cP\ \Eval u{u'}\leq 1\}\,.
\end{align*}
Observe that if $\cP$ satisfies
\( \forall a\in I\,\exists u\in\cP\ u_a>0 \) and
\( \forall a\in I\,\exists m\in\Realp \forall u\in\cP\ u_a\leq m \)
then $\Orth\cP\in\Realpto I$ and $\Orth\cP$ satisfies the same two
properties.

A probabilistic pre-coherence space (pre-PCS) is a pair
$X=(\Web X,\Pcoh X)$ where $\Web X$ is an at most countable
set\footnote{This restriction is not technically necessary, but very
  meaningful from a philosophic point of view; the non countable case
  should be handled via measurable spaces and then one has to consider
  more general objects as in~\cite{EhrhardPaganiTasson18} for
  instance.} and $\Pcoh X\subseteq\Realpcto{\Web X}$ satisfies
$\Biorth{\Pcoh X}=\Pcoh X$. A probabilistic coherence space (PCS) is a
pre-PCS $X$ such that
% $\Pcoh X\subseteq\Realpto{\Web X}$ and
   %    \begin{align*}
\(
\forall a\in\Web X\,\exists u\in\Pcoh X\ u_a>0
\) and
\(
\forall a\in\Web X\,\exists m\in\Realp \forall u\in\Pcoh X\ u_a\leq m
\)
% \end{align*}
or equivalently
\( \forall a\in\Web X\quad0<\sup_{u\in\Pcoh X}u_a<\infty \) so that
$\Pcoh X\subseteq\Realpto{\Web X}$.  We define a "norm"
$\Norm\__X:\Pcoh X\to\Intercc 01$ by
$\Normsp x{\Pcohc X}=\inf\{r>0\St x\in r\,\Pcoh X\}$ that we shall use
for describing the coproduct of PCSs. Then
$\Orth X=(\Web X,\Orth{\Pcoh X})$ is also a PCS and $\Biorth X=X$.

Equipped with the order relation $\leq$ defined by $u\leq v$ if
$\forall a\in\Web X\ u_a\leq v_a$, any PCS $X$ is a complete partial
order (all directed lubs exist) with $0$ as least element. In general
this cpo is not a lattice.

Given $t\in\Realpcto{I\times J}$ considered as a matrix (where $I$ and
$J$ are at most countable sets) and $u\in\Realpcto I$, we define
$\Matappa tu\in\Realpcto J$ by $(\Matappa tu)_j=\sum_{i\in I}t_{i,j}u_i$
(usual formula for applying a matrix to a vector), and if
$s\in\Realpcto{J\times K}$ we define the product
$\Matapp st\in\Realpcto{I\times K}$ of the matrix $s$ and $t$ as usual
by $(\Matapp st)_{i,k}=\sum_{j\in J}t_{i,j}s_{j,k}$. This is an
associative operation.

Let $X$ and $Y$ be PCSs, a morphism from $X$ to $Y$ is a matrix
$t\in\Realpto{\Web X\times\Web Y}$ such that
$\forall u\in\Pcoh X\ \Matappa tu\in\Pcoh Y$. It is clear that the
identity matrix is a morphism from $X$ to $X$ and that the matrix
product of two morphisms is a morphism and therefore, PCS equipped
with this notion of morphism form a category $\PCOH$. There is a PCS
$\Limpl XY$ such that $\Web{\Limpl XY}=\Web X\times\Web Y$ and
$\Pcohp{\Limpl XY}$ is exactly the set of these matrices. Given any
$a$, we define $\One_a$ as the PCS whose web is $\Eset a$ and
$\Pcoh{\One_a}$ is $\Intercc 01$ or, pedantically,
$\Eset a\times\Intercc 01$. We write $\One$ instead of $\One_a$ if $a$
is a given element $*$, fixed once and for all.

The condition $t\in\PCOH(X,Y)=\Pcohp{\Limpl XY}$ is equivalent to
% \begin{align*}
\( \forall u\in\Pcoh X\,\forall v'\in\Pcoh{\Orth Y}\ \Eval{\Matappa
  tu}{v'}\leq 1 \)
%\end{align*}
and we have $\Eval{\Matappa tu}{v'}=\Eval t{\Tens u{v'}}$ where
$(\Tens u{v'})_{(a,b)}=u_av'_b$. Given PCS $X$ and $Y$ we define a PCS
$\Tens XY=\Orthp{\Limpl X{\Orth Y}}$ such that
$\Pcohp{\Tens XY}=\Biorth{\{\Tens uv\St w\in\Pcoh X\text{ and
  }v\in\Pcoh Y\}}$ where $\Tensp uv_{a,b}=u_av_b$. Equipped with this
operation $\ITens$ and the unit $\One$, $\PCOH$ is a symmetric
monoidal category (SMC) with isomorphisms of associativity
$\Assoc\in\PCOH(\Tens{\Tensp XY}{Z},\Tens X{\Tensp YZ})$, symmetry
$\Sym\in\PCOH(\Tens XY,\Tens YX)$, neutrality
$\Leftu\in\PCOH(\Tens\One X,X)$ and $\Rightu\in\PCOH(\Tens X\One,X)$
defined in the obvious way. This SMC $\PCOH$ is closed, with internal
hom of $X$ and $Y$ the pair $(\Limpl XY,\Evlin)$ where
$\Evlin\in\PCOH(\Tens{\Limplp XY}{X},Y)$ is given by
$\Evlin_{((a,b),a'),b'}=\Kronecker a{a'}\Kronecker b{b'}$ so that
$\Matappa\Evlin{\Tensp tu}=\Matappa tu$. This SMCC is *-autonomous
wrt.~the dualizing object $\Llbot=\One$ (essentially because
$\Limpl X\Llbot\Isom\Orth X$).

The following property is quite easy and very useful (actually we
already used it).
\begin{lemma}\label{lemma:matapp-eval}
  Let $t\in\Pcohp{\Limpl XY}$, $u\in\Pcoh X$ and
  $v'\in\Pcoh{\Orth Y}$. Then
  $\Eval{\Matappa tx}{v'}=\Eval t{\Tens u{v'}}=\Eval u{\Matappa{\Orth
      t}{v'}}$.
\end{lemma}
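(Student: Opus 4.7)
The plan is to unfold all three expressions according to the definitions of $\Matappa\cdot\cdot$, $\Tens\cdot\cdot$, and $\Eval\cdot\cdot$, show that each equals the same iterated non-negative double sum $\sum_{a\in\Web X,\ b\in\Web Y} t_{a,b}\,u_a\,v'_b$, and then invoke Tonelli's theorem (for the counting measure, i.e.\ the standard fact that non-negative double series may be summed in either order to the same element of $\Realpc$) to conclude.

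More concretely, first I would compute
\[
\Eval{\Matappa tu}{v'}=\sum_{b\in\Web Y}(\Matappa tu)_b\,v'_b=\sum_{b\in\Web Y}\Bigl(\sum_{a\in\Web X}t_{a,b}u_a\Bigr)v'_b,
\]
pull $v'_b$ inside the inner sum (it is a non-negative scalar independent of $a$), and recognize the result as $\sum_{(a,b)}t_{a,b}u_av'_b=\Eval t{\Tens u{v'}}$ by the definitions of $\Tens uv$ and of $\Eval\cdot\cdot$ over the index set $\Web X\times\Web Y$. Similarly, starting from
\[
\Eval u{\Matappa{\Orth t}{v'}}=\sum_{a\in\Web X}u_a\Bigl(\sum_{b\in\Web Y}(\Orth t)_{b,a}v'_b\Bigr),
\]
and using that the transpose matrix $\Orth t$ has entries $(\Orth t)_{b,a}=t_{a,b}$, one again arrives at the same iterated non-negative sum, with the order of summation swapped.

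The only non-routine point is justifying the interchange of the two summations and the validity of the factorisation $u_a v'_b$ inside an unordered double sum. Since all of $t_{a,b}$, $u_a$, $v'_b$ lie in $\Realpc$ with all values non-negative, Tonelli's theorem applies unconditionally and the common value lies in $\Realpc$; the hypothesis $t\in\Pcohp{\Limpl XY}$ together with $u\in\Pcoh X$ and $v'\in\Pcoh{\Orth Y}$ moreover forces $\Matappa tu\in\Pcoh Y$, hence $\Eval{\Matappa tu}{v'}\leq 1$, so the common value is in fact finite. This is really the main (and only) subtlety; everything else is bookkeeping on indices. The lemma then follows by transitivity of equality between the three expanded forms.
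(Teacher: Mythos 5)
Your proof is correct and is exactly the intended argument: the paper offers no proof (it calls the lemma ``quite easy''), and the standard verification is precisely your unfolding of the three expressions into the common non-negative double sum $\sum_{(a,b)}t_{a,b}u_av'_b$ over $\Web X\times\Web Y$, with Tonelli (rearrangement of unordered sums of non-negative terms in $\Realpc$) justifying the interchange. You also correctly identify that finiteness of the common value follows from $\Matappa tu\in\Pcoh Y$ and $v'\in\Pcoh{\Orth Y}$, and you silently fix the paper's typo $\Matappa tx$ for $\Matappa tu$ in the statement.
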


$\PCOH$ is cartesian: if $(X_i)_{i\in I}$ is an at most countable
family of PCSs, then $(\Bwith_{i\in I}X_i,(\Proj i)_{i\in I})$ is the
cartesian product of the $X_i$s, with
$\Web{\Bwith_{i\in I}X_i}=\Bunion_{i\in I}\{i\}\times\Web{X_i}$,
$(\Proj i)_{(j,a),a'}=\Kronecker ij\Kronecker a{a'}$, and
$u\in\Pcohp{\Bwith_{i\in I}X_i}$ if $\Matappa{\Proj i}u\in\Pcoh{X_i}$
for each $i\in I$ (for $u\in\Realpto{\Web{\Bwith_{i\in I}X_i}}$). It
is important to observe that $\Pcohp{\Bwith_{i\in I}X_i}$ is order
isomorphic to $\prod_{i\in I}\Pcoh{X_i}$ by the map
$u\mapsto(\Matappa{\Proj i}u)_{i\in I}$. Given
$\Vect u=(u(i))_{i\in I}\in\prod_{i\in I}\Pcoh{X_i}$ we use
$\Tuple{\Vect u}$ for the corresponding element of
$\Pcohp{\Bwith_{i\in I}X_i}$: $\Tuple{\Vect u}_{i,a}=u(i)_a$ for
$i\in I$ and $a\in\Web{X_i}$. The terminal object (which corresponds
to the case $I=\emptyset$) is the PCS $(\emptyset,\Eset 0)$.

Given $t(i)\in\PCOH(Y,X_i)$ for each $i\in I$, the unique morphism
$t=\Tuple{t(i)}_{i\in I}\in\PCOH(Y,\Bwith_{i\in I}X_i)$ such that
$\Proj i\Compl t=t_i$ is simply defined by
$t_{b,(i,a)}=(t_i)_{b,a}$. The dual operation $\Bplus_{i\in I}X_i$,
which is a coproduct, is characterized by
$\Web{\Bplus_{i\in I}X_i}=\Bunion_{i\in I}\{i\}\times\Web{X_i}$ and
$u\in\Pcohp{\Bplus_{i\in I}X_i}$ if $u\in\Pcohp{\Bwith_{i\in I}X_i}$
and $\sum_{i\in I}\Norm{\Matappa{\Proj i}{u}}_{X_i}\leq 1$. A
particular case is $\Snat=\Bplus_{n\in\Nat}X_n$ where $X_n=\One$ for
each $n$. So that $\Web\Snat=\Nat$ and $u\in\Realpto\Nat$ belongs to
$\Pcoh\Snat$ if $\sum_{n\in\Nat}u_n\leq 1$ (that is, $u$ is a
sub-probability distribution on $\Nat$). There are successor and
predecessor morphisms $\Ssuc,\Spred\in\PCOH(\Snat,\Snat)$ given by
$\Ssuc_{n,n'}=\Kronecker{n+1}{n'}$ and $\Spred_{n,n'}=1$ if $n=n'=0$
or $n=n'+1$ (and $\Spred_{n,n'}=0$ in all other cases). An element of
$\PCOH(\Snat,\Snat)$ is a (sub)stochastic matrix and the very idea of
this model is to represent programs as transformations of this kind,
and their generalizations.

As to the exponentials, one sets $\Web{\Excl X}=\Mfin{\Web X}$ and
$\Pcohp{\Excl X}=\Biorth{\{\Prom u\St u\in\Pcoh X\}}$ where, given
$\mu\in\Mfin{\Web X}$,
$\Prom u_\mu=u^\mu=\prod_{a\in\Web X}u_a^{\mu(a)}$. Then given
$t\in\PCOH(X,Y)$, one defines $\Excl t\in\PCOH(\Excl X,\Excl Y)$ in
such a way that $\Matappa{\Excl t}{\Prom x}=\Prom{(\Matappa tx)}$ (the
precise definition is not relevant here; it is completely determined
by this equation). There are natural transformations
$\Der X\in\PCOH(\Excl X,X)$ and
$\Digg X\in\PCOH(\Excl X,\Excl{\Excl X})$ which are fully
characterized by $\Matappa{\Der X}{\Prom u}=u$ and
$\Matappa{\Digg X}{\Prom u}=\Promm u$ which equip $\Excl\_$ with a
comonad structure. There are also Seely isomorphisms
$\Seelyz\in\PCOH(\One,\Excl\Top)$ and
$\Seelyt_{X,Y}\PCOH(\Tens{\Excl X}{\Excl Y},\Exclp{X\IWith Y})$ which
equip this comonad with a strong monoidal structure from de SMC
$(\PCOH,\Top,\mathord{\IWith})$ to the SMC
$(\PCOH,\One,\mathord{\ITens})$. They are fully characterized by the
equations $\Matappa{\Seelyz}{\Prom 0}=\Base *$ and
$\Matappa{\Seelyt}{\Tensp{\Prom u}{\Prom v}}=\Prom{\Tuple{u,v}}$.

Using these structures one can equip any object $\Excl X$ with a
commutative comonoid structure consisting of a weakening morphism
$\Weak X\in\PCOH(\Excl X,\One)$ and a contraction morphism
$\Contr X\in\PCOH(\Excl X,\Tens{\Excl X}{\Excl X})$ characterized by
$\Matappa{\Weak X}{u}=\Base *$ and
$\Matappa{\Contr X}{\Prom u}=\Tens{\Prom u}{\Prom u}$.

The resulting cartesian closed category\footnote{This is the Kleisli
  category of ``$\oc$'' which has actually a comonad structure that we
  do not make explicit here, again we refer
  to~\cite{DanosEhrhard08,EhrhardPaganiTasson18}.}  $\Kl\PCOH$ can be
seen as a category of functions (actually, of stable functions as
proved in~\cite{Crubille18}). Indeed, a morphism
$t\in\Kl\PCOH(X,Y)=\PCOH(\Excl X,Y)=\Pcohp{\Limpl{\Excl X}{Y}}$ is
completely characterized by the associated function
$\Fun t:\Pcoh X\to\Pcoh Y$ such that
$\Fun t(u)=\Matappa t{\Prom u}=\left(\sum_{\mu\in\Web{\Excl
      X}}t_{\mu,b}u^\mu\right)_{b\in\Web Y}$ so that we consider
morphisms as power series. They are in particular monotonic and Scott
continuous functions $\Pcoh X\to\Pcoh Y$. In this cartesian closed
category, the product of a family $(X_i)_{i\in I}$ is
$\Bwith_{i\in I}X_i$ (written $X^I$ if $X_i=X$ for all $i$), which is
compatible with our viewpoint on morphisms as functions since
$\Pcohp{\Bwith_{i\in I}X_i}=\prod_{i\in I}\Pcoh{X_i}$ up to trivial
iso. The object of morphisms from $X$ to $Y$ is $\Limpl{\Excl X}{Y}$
with evaluation mapping
$(t,u)\in\Pcohp{\Limpl{\Excl X}{Y}}\times\Pcoh X$ to $\Fun t(u)$.
% that we simply denote as $t(x)$ from now on. 
The well defined function $\Pcohp{\Limpl{\Excl X}X}\to\Pcoh X$ which
maps $t$ to $\sup_{n\in\Nat}t^n(0)$ is a morphism of $\Kl\PCOH$ (and
thus can be described as a power series in
$t=(t_{\mu,a})_{\mu\in\Mfin{\Web X},a\in\Web X}$) by standard categorical
considerations using cartesian closeness: it provides us with fixpoint
operators at all types.
% \footnote{It is quite a valuable miracle that the
%   operator which maps a power series in $\Pcohp{\Limpl{\Excl X}X}$ to
%   its least fixed point is itself a power series.}.

Given $t\in\PCOH(X,Y)$, we also use $\Fun t$ for the associated
function $\Pcoh X\to\Pcoh Y$. More generally if for instance
$t\in\Pcohp{\Tens{\Excl X}{Y},Z}$, we use $\Fun t$ for the associated
function $\Pcoh X\times\Pcoh Y\to\Pcoh Z$, given by
$\Fun t(u,v)=\Matappa t{\Tensp{\Prom u}{v}}$. This function fully
characterizes $t$ (that is the mapping $t\mapsto\Fun t$ is injective);
this can be seen by considering
$\Curlin(t)\in\PCOH(\Excl X,\Limpl YZ)$.

\section{Extensional PCS}

Let $X$ be a PCS. A \emph{pre-extensional structure} on $X$ is a pair
$\cU=(\cE,\mathord\Extleq)$ where $\cE\subseteq\Pcoh X$ and $\Extleq$
is a binary relation on $\cE$. We define then the dual pre-extensional
structure $\Orth\cU=(\cE',\mathord{\Extleq'})$ on $\Orth X$ as
follows\footnote{Notice the kind of role swapping between $\cE$ and
  $\Extleq$ in this definition; this justifies our choice of
  presenting these structures as pairs $(\cE,\Extleq)$ and not simply
  as relations $\Extleq$ on $\Pcoh X$.}:
\begin{itemize}
\item if $u'\in\Pcoh{\Orth X}$, one has $u'\in\cE'$ iff
  $\forall u,v\in\cE\ u\Extleq v\Implies \Eval u{u'}\leq\Eval v{u'}$
\item and, given $u',v'\in\cE'$, one has $u'\Extleq' v'$ iff
  $\forall u\in\cE\ \Eval u{u'}\leq\Eval u{v'}$.
\end{itemize}
Let $\cU_1=(\cE_1,\mathord{\Extleq_1})$ and
$\cU_2=(\cE_2,\mathord{\Extleq_2})$ be pre-extensional structures on
$X$, we write $\cU_1\subseteq\cU_2$ if $\cE_1\subseteq\cE_2$ and
$\mathord{\Extleq_1}\subseteq\mathord{\Extleq_2}$.

\begin{lemma}\label{lemma:ext-struct-dual-monotone}
  If $\cU_1\subseteq\cU_2$ then
  $\Orth{\cU_2}\subseteq\Orth{\cU_1}$. One has
  $\cU\subseteq\Biorth\cU$ and therefore $\Orth\cU=\Triorth\cU$.
\end{lemma}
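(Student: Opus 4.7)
The proof follows the standard pattern for biorthogonal closure operations in linear logic semantics, adapted to track two pieces of data (a carrier set and a preorder) instead of one. Throughout, write $\cU_i = (\cE_i, \mathord{\Extleq_i})$, $\Orth{\cU_i} = (\cE_i', \mathord{\Extleq_i'})$, $\Biorth\cU = (\cE'', \mathord{\Extleq''})$, and so on, and in each step simply unfold the definition of $\Orth{(-)}$ given before the lemma.

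For the antimonotonicity claim, assume $\cE_1 \subseteq \cE_2$ and $\mathord{\Extleq_1} \subseteq \mathord{\Extleq_2}$. To show $\cE_2' \subseteq \cE_1'$, I would take $u' \in \cE_2'$ and verify the defining implication for $\cE_1'$: given $u, v \in \cE_1$ with $u \Extleq_1 v$, we have $u, v \in \cE_2$ and $u \Extleq_2 v$ by the hypothesis, so $\Eval u{u'} \leq \Eval v{u'}$ follows from $u' \in \cE_2'$. To show $\mathord{\Extleq_2'} \subseteq \mathord{\Extleq_1'}$, I would take $u' \Extleq_2' v'$ (so in particular $u', v' \in \cE_2' \subseteq \cE_1'$) and check $\Eval u{u'} \leq \Eval u{v'}$ for all $u \in \cE_1$, which is immediate since $\cE_1 \subseteq \cE_2$.

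For $\cU \subseteq \Biorth\cU$, I would first show $\cE \subseteq \cE''$: given $u \in \cE$, and any $u', v' \in \cE'$ with $u' \Extleq' v'$, the inequality $\Eval u{u'} \leq \Eval u{v'}$ is exactly what the definition of $\Extleq'$ guarantees when tested on $u \in \cE$. Then for $\mathord{\Extleq} \subseteq \mathord{\Extleq''}$, given $u \Extleq v$ in $\cE$, both $u, v$ lie in $\cE''$ by what we just showed, and for every $u' \in \cE'$ the inequality $\Eval u{u'} \leq \Eval v{u'}$ holds by the defining condition of $\cE'$ applied to the pair $u \Extleq v$.

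The equality $\Orth\cU = \Triorth\cU$ then follows formally: applying $\cU \subseteq \Biorth\cU$ to $\Orth\cU$ in place of $\cU$ gives $\Orth\cU \subseteq \Triorth\cU$, while applying the antimonotonicity (first part) to the inclusion $\cU \subseteq \Biorth\cU$ gives $\Triorth\cU \subseteq \Orth\cU$. I do not foresee a serious obstacle; the only thing to be careful about is the asymmetry in the definition of $\Orth{(-)}$, where membership in $\cE'$ is controlled by the relation $\Extleq$ of $\cU$ while the relation $\Extleq'$ is controlled by the set $\cE$ of $\cU$, which is why the statement is organized around pairs rather than relations alone.
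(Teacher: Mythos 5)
Your proof is correct and is exactly the routine unfolding of the definition of $\Orth{(-)}$ that the paper leaves implicit (the lemma is stated without proof there); in particular you correctly handle the role-swapping between $\cE$ and $\Extleq$ that the paper's footnote warns about, and the formal derivation of $\Orth\cU=\Triorth\cU$ from the two previous parts is standard. Nothing to add.
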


Given a pre-extensional structure $(\cE,\Extleq)$, when we write
$u\Extleq v$, we always assume implicitly that $u,v\in\cE$.  An
\emph{extensional structure} on $\Pcoh X$ is a pre-extensional
structure $\cU$ such that $\cU=\Biorth\cU$.
\begin{proposition}\label{prop:pre-ext-basic-props}
  If $\cU=(\cE,\mathord\Extleq)$ is an extensional structure on the
  PCS $X$, then $\Extleq$ is a transitive and reflexive (that is, a
  preorder) relation on $\cE$. Moreover,
  $\forall u,v\in\cE\ u\leq v\Implies u\Extleq v$, $0\in\cE$, $0$ is
  $\Extleq$-minimal, $\cE$ and $\Extleq$ are sub-convex, closed under
  multiplication by scalars in $[0,1]$ and closed under lubs of
  $\leq$-increasing $\omega$-chains. Concerning $\Extleq$, this means
  that
  \begin{itemize}
  \item if $I$ is a finite set, $\lambda_i\in\Realp$ for each $i\in I$
    with $\sum_{i\in I}\lambda_i\leq 1$, $u(i)\Extleq v(i)$ for
    each $i\in I$ then
    $\sum_{i\in I}\lambda_iu(i)\Extleq\sum_{i\in I}\lambda_iv(i)$
  \item and if $(u(n))_{n\in\Nat}$ and $(v(n))_{n\in\Nat}$ are
    $\leq$-monotone in $\Pcoh X$ and such that $u(n)\Extleq v(n)$ for
    each $n\in\Nat$, then
    $\sup_{n\in\Nat}u(n)\Extleq\sup_{n\in\Nat}v(n)$ (this can be
    generalized to directed families).
  \end{itemize}
\end{proposition}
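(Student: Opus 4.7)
The plan is to prove every property by exploiting the biorthogonality condition $\cU=\Biorth\cU$. Writing $\Orth\cU=(\cE',\Extleq')$, by the definition of the dual applied twice, for $u,v\in\Pcoh X$:
\begin{itemize}
\item $u\in\cE$ iff $\forall u',v'\in\cE'\ u'\Extleq' v'\Implies \Eval u{u'}\leq\Eval u{v'}$;
\item $u\Extleq v$ iff both $u,v\in\cE$ and $\forall u'\in\cE'\ \Eval u{u'}\leq\Eval v{u'}$.
\end{itemize}
Throughout I shall also use that $\Pcoh X$ is sub-convex, closed under scalar multiplication by elements of $[0,1]$, and closed under sups of $\leq$-increasing $\omega$-chains (the latter because $\Pcoh X$ is a cpo with $0$).

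First I would dispatch the easy items directly from the $\Extleq''$-characterization: reflexivity is trivial ($\Eval u{u'}\leq\Eval u{u'}$), transitivity follows by composing inequalities, and $u\leq v\Implies u\Extleq v$ is immediate from the pointwise non-negativity of every $u'\in\Pcoh{\Orth X}$. For $0\in\cE$ and its $\Extleq$-minimality, both sides of the required inequality collapse to $0$, so the conditions hold vacuously. These arguments use only the characterization of $\Extleq$ and the fact that $0\in\Pcoh X$.

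Next I would handle the closure properties, all of which rest on the bilinearity and monotonicity of $\Scal{\_}{\_}$ and the assumption $\cE=\Biorth\cE$. For sub-convexity: given $u(i)\in\cE$ and $\lambda_i\in\Realp$ with $\sum_{i\in I}\lambda_i\leq 1$, first note that $\sum_i\lambda_iu(i)\in\Pcoh X$ by sub-convexity of $\Pcoh X$. Then for any $u'\Extleq' v'$ in $\cE'$, linearity of the pairing yields
\[
\Eval{\textstyle\sum_i\lambda_iu(i)}{u'}=\sum_i\lambda_i\Eval{u(i)}{u'}\leq\sum_i\lambda_i\Eval{u(i)}{v'}=\Eval{\textstyle\sum_i\lambda_iu(i)}{v'},
\]
so $\sum_i\lambda_iu(i)\in\cE$. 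The sub-convexity of $\Extleq$ (with $u(i)\Extleq v(i)$) is proved by exactly the same linear manipulation, now applied with a fixed $u'\in\cE'$, to conclude $\sum_i\lambda_iu(i)\Extleq\sum_i\lambda_iv(i)$. Closure under scalar multiplication by $[0,1]$ is a special case.

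Finally, for closure under $\omega$-sups, let $u(n)\in\cE$ with $u(n)\leq u(n+1)$ and $u=\sup_n u(n)$. Then $u\in\Pcoh X$ by completeness of $\Pcoh X$, and for $u'\Extleq' v'$ in $\cE'$, monotone convergence for non-negative sums gives $\Eval u{u'}=\sup_n\Eval{u(n)}{u'}$ and similarly for $v'$, so
\[
\Eval u{u'}=\sup_n\Eval{u(n)}{u'}\leq\sup_n\Eval{u(n)}{v'}=\Eval u{v'},
\]
hence $u\in\cE$. The same argument with fixed $u'$ and the hypothesis $u(n)\Extleq v(n)$ yields $\sup_n u(n)\Extleq\sup_n v(n)$; the extension to directed families follows since $\Pcoh X$-directed lubs are $\omega$-sups plus a cofinal choice. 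The only mildly delicate point is the monotone convergence interchange, but this is standard for series of non-negative reals and is the real (and mild) engine of the $\omega$-continuity clause.
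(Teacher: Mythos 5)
Your proof is correct and follows the intended route: the paper states this proposition without an explicit proof, and the natural argument is exactly the one you give — unfold $\cU=\Biorth\cU$ into the two biorthogonal characterizations of $\cE$ and $\Extleq$, then push everything through the (bi)linearity, monotonicity and monotone-convergence properties of the pairing $\Eval{\cdot}{\cdot}$. Two cosmetic nitpicks only: for the $\Extleq$-minimality of $0$ it is just the left-hand side that vanishes (the right-hand side $\Eval v{u'}$ is merely non-negative), and the parenthetical reduction of directed lubs to ``an $\omega$-sup plus a cofinal choice'' is loose since a directed set need not contain a cofinal chain — the clean fix is to run the monotone-convergence interchange directly over the directed family.
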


An \emph{extensional PCS} is a triple $X=(\Extc X,\Extd X,\Exto X)$
where $\Extc X$ is a PCS (the carrier) and $(\Extd X,\Exto X)$ is an
extensional structure on $\Extc X$. The dual of $X$ is then
$\Orth X=(\Orth{\Extc X},\Orth{(\Extd X,\Exto X)})$, so that
$\Biorth X=X$ by definition.  An extensional PCS $X$ is
\emph{discrete} if $\Extd X=\Pcoh X$ and $u\Exto Xv$ iff $u\leq v$ (in
$\Pcoh X$). Observe that if $X$ is discrete then $\Orth X$ is also
discrete. Of course any PCS can be turned into an extensional PCS by
endowing it with its discrete extensional structure.

\subsection{Extensional PCS as a model of Linear Logic}
\begin{lemma}\label{lemma:ext-struct-limpl-charact}
  Let $X$ and $Y$ be extensional PCS, we define a pre-extensional
  structure $(\cE,\mathord\Extleq)$ on the PCS
  $\Limpl{\Extc X}{\Extc Y}$ by:
  \begin{itemize}
  \item given $t\in\Pcohp{\Limpl{\Extc X}{\Extc Y}}$, one has
    $t\in\cE$ if $\forall u\in\Extd X\ \Matappa tu\in \Extd Y$ and
    $\forall u(1),u(2)\in\Extd X\ u(1)\Exto Xu(2)\Implies\Matappa
    t{u(1)}\Exto Y\Matappa t{u(2)}$
  \item and given $t(1),t(2)\in\Pcohp{\Limpl{\Extc X}{\Extc Y}}$, one
    has $t(1)\Extleq t(2)$ if
    $\forall u\in\Extd X\ \Matappa{t(1)}{u}\Exto Y\Matappa{t(2)}{u}$.
  \end{itemize}
  We denote as $\Limpl XY$ the extensional PCS defined by
  $\Extc{\Limpl XY}=\Limpl{\Extc X}{\Extc Y}$, $\Extd{\Limpl XY}=\cE$
  and $\mathord{\Exto{\Limpl XY}}=\mathord\Extleq$.
\end{lemma}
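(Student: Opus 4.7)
The plan is to verify that the pair $(\cE,\mathord\Extleq)$ defined in the statement is an extensional structure on the PCS $\Limpl{\Extc X}{\Extc Y}$, i.e., that $(\cE,\mathord\Extleq)=\Biorth{(\cE,\mathord\Extleq)}$. One inclusion is provided by Lemma~\ref{lemma:ext-struct-dual-monotone}, so the real work is to show that any $t$ (resp.\ pair $t(1),t(2)$) witnessing membership in $\Biorth{(\cE,\mathord\Extleq)}$ already lies in $\cE$ (resp.\ satisfies $\Extleq$).

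My main tool is Lemma~\ref{lemma:matapp-eval}, which lets me test $t\in\Pcohp{\Limpl{\Extc X}{\Extc Y}}$ against the ``separable'' matrices $\Tens u{v'}$ via $\Eval t{\Tens u{v'}} = \Eval{\Matappa tu}{v'}$. First I would establish two preparatory facts: \emph{(i)} whenever $u\in\Extd X$ and $v'\in\Extd{\Orth Y}$, the matrix $\Tens u{v'}$ belongs to the first component of $\Orth{(\cE,\mathord\Extleq)}$; and \emph{(ii)} this assignment is monotone in each argument, so that $u(1)\Exto X u(2)$ yields $\Tens{u(1)}{v'}$ related to $\Tens{u(2)}{v'}$ in the second component of $\Orth{(\cE,\mathord\Extleq)}$, and symmetrically for $\Exto{\Orth Y}$. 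Both facts reduce to the very definitions of $\cE$ and $\Extleq$: for instance \emph{(i)} says that $t(1)\Extleq t(2)$ gives $\Matappa{t(1)}u \Exto Y \Matappa{t(2)}u$, which combined with the defining property of $v'\in\Extd{\Orth Y}$ produces the required numerical inequality.

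Given \emph{(i)} and \emph{(ii)}, I take $t$ in the first component of $\Biorth{(\cE,\mathord\Extleq)}$ and check the two clauses defining $\cE$. For any $u\in\Extd X$, verifying $\Matappa tu\in\Extd Y$ unfolds (using that $Y$ is an extensional PCS) to a family of inequalities of the form $\Eval{\Matappa tu}{v'(1)}\leq\Eval{\Matappa tu}{v'(2)}$ for $v'(1)\Exto{\Orth Y}v'(2)$; Lemma~\ref{lemma:matapp-eval} rewrites each as $\Eval t{\Tens u{v'(1)}}\leq\Eval t{\Tens u{v'(2)}}$, and this follows from $t\in\Biorth{(\cE,\mathord\Extleq)}$ combined with \emph{(i)} and \emph{(ii)}. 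Monotonicity of $t$ in its input is treated identically, with the roles of $u$ and $v'$ swapped. The corresponding statement for the relation $\Extleq$ is obtained by the same calculation: if $t(1),t(2)$ are $\Extleq$-related in $\Biorth{(\cE,\mathord\Extleq)}$, testing against $\Tens u{v'}$ with $u\in\Extd X$, $v'\in\Extd{\Orth Y}$ gives $\Matappa{t(1)}u \Exto Y\Matappa{t(2)}u$ for each $u$, which is $t(1)\Extleq t(2)$ in the sense of $\cE$.

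The only real obstacle is bookkeeping: one must keep straight which orthogonality operator is being used at each step -- the biorthogonality of the \emph{ambient} PCS $\Limpl{\Extc X}{\Extc Y}$ (which justifies Lemma~\ref{lemma:matapp-eval}), the biorthogonalities of the extensional structures $X$ and $Y$ (which let me unfold conditions such as $\Matappa tu\in\Extd Y$ into numerical inequalities indexed by $\Extd{\Orth Y}$), and the biorthogonality of $(\cE,\mathord\Extleq)$ itself, which is what I am ultimately trying to establish. Once these layers are disentangled, the proof is a direct, if tedious, unfolding of definitions, with no new conceptual input beyond Lemmas~\ref{lemma:matapp-eval} and~\ref{lemma:ext-struct-dual-monotone}.
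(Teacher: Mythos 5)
Your proof is correct and follows essentially the same route as the paper: the paper exhibits $(\cE,\mathord\Extleq)$ as $\Orth{(\cE',\mathord{\Extleq'})}$ for the explicit pre-extensional structure $\cE'=\Eset{\Tens u{v'}\St u\in\Extd X,\ v'\in\Extd{\Orth Y}}$ on $\Tens{\Extc X}{\Orth{\Extc Y}}$, and the two halves of that equality are exactly your facts \emph{(i)}--\emph{(ii)} and your main unfolding via Lemma~\ref{lemma:matapp-eval}. Showing directly that $\Biorth{(\cE,\mathord\Extleq)}\subseteq(\cE,\mathord\Extleq)$ by testing against the separable tensors is just a repackaging of the same computation.
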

\begin{proof}
  Let $(\cE',\Extleq')$ be the pre-extensional structure on
  $\Orthp{\Limpl{\Extc X}{\Extc Y}}=\Tens{\Extc X}{\Orth{\Extc Y}}$
  defined by
  \begin{itemize}
  \item
    $\cE'=\Eset{\Tens u{v'}\St u\in\Extd X\text{ and }v'\in\Extd{\Orth
        Y}}$
  \item
    $\Extleq'=\Eset{(\Tens{u(1)}{v'(1)},\Tens{u(2)}{v'(2)})\St
      u(1)\Exto X u(2)\text{ and }v'(1)\Exto{\Orth Y}v'(2)}$.
  \end{itemize}
  We prove that $(\cE,\mathord\Extleq)=\Orth{(\cE',\Extleq')}$.

  Let $(\cE'',\Extleq'')=\Orth{(\cE',\Extleq')}$ which is an
  extensional structure on $\Limpl{\Extc X}{\Extc Y}$.

  Let $t\in\cE$ and let us prove that $t\in\cE''$. So let
  $u(1)\Exto X u(2)$ and $v'(1)\Exto{\Orth Y}v'(2)$, we have
  \begin{align*}
    \Eval t{\Tens{u(1)}{v'(1)}}=\Eval{\Matappa
    t{u(1)}}{v'(1)}\leq\Eval{\Matappa t{u(1)}}{v'(2)} 
  \end{align*}
  since $\Matappa t{u(1)}\in\Extd Y$, and we have
  $\Eval{\Matappa t{u(1)}}{v'(2)}\leq\Eval{\Matappa t{u(2)}}{v'(2)}$
  because $\Matappa t{u(1)}\Exto Y\Matappa t{u(2)}$. Conversely let
  $t\in\cE''$ and let us prove that $t\in\cE$. First, let
  $u\in\Extd X$, we must prove that $\Matappa tu\in\Extd Y$. So let
  $v'(1)\Exto{\Orth Y}v'(2)$, we must prove that
  $\Eval{\Matappa t{u}}{v'(1)}\leq\Eval{\Matappa t{u}}{v'(2)}$, which
  results from our assumption on $t$ and from
  Lemma~\ref{lemma:matapp-eval}. Next we must prove that, if
  $u(1)\Exto Xu(2)$, then $\Matappa t{u(1)}\Exto Y\Matappa t{u(2)}$. So
  let $y'\in\Extd{\Orth Y}$, we must show that
  $\Eval{\Matappa t{u(1)}}{y'}\leq\Eval{\Matappa t{u(2)}}{y'}$ which
  results from Lemma~\ref{lemma:matapp-eval} and from the assumption
  that $t\in\cE''$.

  Let now $t(1)\Extleq t(2)$ and let us prove that
  $t(1)\Extleq'' t(2)$.  So let $u\in\Extd X$ and
  $v'\in\Extd{\Orth Y}$, we must prove that
  $\Eval{t(1)}{\Tens u{v'}}\leq \Eval{t(2)}{\Tens u{v'}}$ which again
  results from Lemma~\ref{lemma:matapp-eval} and from the fact that
  $\Matappa{t(1)}u\Extleq\Matappa{t(2)}u$. Conversely assume that
  $t(1)\Extleq'' t(2)$ and let us prove that $t(1)\Extleq t(2)$. So
  let $u\in\Extd X$, we must prove that
  $\Matappa{t(1)}{u}\Exto Y\Matappa{t(2)}{u}$. So let
  $v'\in\Extd{\Orth Y}$, we must prove
  $\Eval{\Matappa{t(1)}{u}}{v'}\leq\Eval{\Matappa{t(2)}{x}}{y'}$ which
  again results from Lemma~\ref{lemma:matapp-eval}.
\end{proof}

From this definition it results that if $s\in\Extd{\Limpl XY}$ and
$t\in\Extd{\Limpl YZ}$, one has $\Matapp ts\in\Extd{\Limpl XZ}$, and
also that $\Id_{\Extc X}\in\Extd{\Limpl XX}$.  So we have defined a
category $\EPCOH$ whose objects are the extensional PCS and where
$\EPCOH(X,Y)=\Extd{\Limpl XY}$, identities and composition being
defined as in $\PCOH$.
  
\begin{remark}
  It is very important to notice that a morphism $t\in\EPCOH(X,Y)$
  acts on \emph{all} the elements of $\Pcoh{\Extc X}$ and not only on
  those which belong to $\Extd X$ (intuitively, the extensional
  elements).
\end{remark}

\begin{lemma}\label{lemma:traspose-in-epcoh}
  Let $t\in\Pcohp{\Limpl XY}$, we have $t\in\EPCOH(X,Y)$ iff
  $\Orth t\in\EPCOH(\Orth Y,\Orth X)$. Let $t(1),t(2)\in\EPCOH(X,Y)$,
  one has $t(1)\Exto{\Limpl XY}t(2)$ iff
  $\Orth{t(1)}\Exto{\Limpl{\Orth Y}{\Orth X}}\Orth{t(2)}$.
\end{lemma}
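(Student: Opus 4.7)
My plan is to reduce everything to the characterization of $\Extd{\Limpl XY}$ and $\Exto{\Limpl XY}$ given by Lemma~\ref{lemma:ext-struct-limpl-charact} and then apply the transposition identity $\Eval{\Matappa tu}{v'}=\Eval u{\Matappa{\Orth t}{v'}}$ of Lemma~\ref{lemma:matapp-eval}. Intuitively, the extensional structure on $\Limpl XY$ is the biorthogonal of tensor pairs $\Tens u{v'}$ with $u\in\Extd X$ and $v'\in\Extd{\Orth Y}$, and this description is manifestly symmetric between $X$ and $\Orth Y$, which is exactly the symmetry needed to swap $t$ with $\Orth t$.

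Concretely, for the first assertion, I would unfold $t\in\EPCOH(X,Y)$ into two conditions: (a) for every $u\in\Extd X$ the vector $\Matappa tu$ lies in $\Extd Y$, i.e.\ $\Eval{\Matappa tu}{v'(1)}\leq\Eval{\Matappa tu}{v'(2)}$ whenever $v'(1)\Exto{\Orth Y}v'(2)$; and (b) for every $u(1)\Exto Xu(2)$, $\Matappa t{u(1)}\Exto Y\Matappa t{u(2)}$, i.e.\ $\Eval{\Matappa t{u(1)}}{v'}\leq\Eval{\Matappa t{u(2)}}{v'}$ for all $v'\in\Extd{\Orth Y}$. Unfolding $\Orth t\in\EPCOH(\Orth Y,\Orth X)$ analogously (and using $\Biorth X=X$, $\Biorth Y=Y$) yields the two conditions (a$'$) $\Eval{u(1)}{\Matappa{\Orth t}{v'}}\leq\Eval{u(2)}{\Matappa{\Orth t}{v'}}$ for $u(1)\Exto Xu(2)$ and $v'\in\Extd{\Orth Y}$, and (b$'$) $\Eval u{\Matappa{\Orth t}{v'(1)}}\leq\Eval u{\Matappa{\Orth t}{v'(2)}}$ for $u\in\Extd X$ and $v'(1)\Exto{\Orth Y}v'(2)$. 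Lemma~\ref{lemma:matapp-eval} rewrites each left/right hand side of (a$'$) and (b$'$) as $\Eval{\Matappa t{u(i)}}{v'}$ and $\Eval{\Matappa tu}{v'(i)}$ respectively, and one reads off that (a)$\Leftrightarrow$(b$'$) and (b)$\Leftrightarrow$(a$'$). This gives the equivalence.

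For the second assertion, I would similarly unpack $t(1)\Exto{\Limpl XY}t(2)$ as the family of inequalities $\Eval{\Matappa{t(1)}u}{v'}\leq\Eval{\Matappa{t(2)}u}{v'}$ for $u\in\Extd X$, $v'\in\Extd{\Orth Y}$, and likewise $\Orth{t(1)}\Exto{\Limpl{\Orth Y}{\Orth X}}\Orth{t(2)}$ as $\Eval u{\Matappa{\Orth{t(1)}}{v'}}\leq\Eval u{\Matappa{\Orth{t(2)}}{v'}}$ for the same $u,v'$; Lemma~\ref{lemma:matapp-eval} shows these are literally the same inequalities. There is no real obstacle here — the only subtlety to get right is the bookkeeping around $\Biorth X=X$ and $\Biorth Y=Y$ so that ``extensional'' on $\Orth{\Orth X}$ means ``extensional on $X$'', which is immediate from the fact that extensional structures are defined as biorthogonally closed pre-extensional structures.
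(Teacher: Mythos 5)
Your proof is correct: the paper states this lemma without proof, and your argument — unfolding both memberships via the definition of the dual extensional structure (using $\Biorth X=X$, $\Biorth Y=Y$ on the appropriate sides) and matching the resulting inequalities pairwise through the transposition identity of Lemma~\ref{lemma:matapp-eval} — is exactly the intended one, using precisely the machinery the paper deploys in the proof of Lemma~\ref{lemma:ext-struct-limpl-charact}. The pairing (a)$\Leftrightarrow$(b$'$), (b)$\Leftrightarrow$(a$'$) is the right bookkeeping, and the second assertion is indeed the same family of inequalities read on either side of the transposition.
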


\subsubsection{Multiplicative structure}
As usual we set $\Tens XY=\Orthp{\Limpl X{\Orth Y}}$ so that
$\Extc{\Tens XY}=\Tens{\Extc X}{\Extc Y}$.

\begin{lemma}\label{lemma:tens-extensional}
  If $u\in\Extd X$ and $v\in\Extd Y$, we have
  $\Tens uv\in\Extd{\Tens XY}$. And if $u(1)\Exto Xu(2)$ and
  $v(1)\Exto Yv(2)$ then
  $\Tens{u(1)}{v(1)}\Exto{\Tens{X}{Y}}\Tens{u(2)}{v(2)}$.
\end{lemma}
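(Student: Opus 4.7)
My plan is to unfold $\Tens XY=\Orthp{\Limpl X{\Orth Y}}$, so that the extensional structure on $\Tens XY$ is by construction the orthogonal of the one exhibited by Lemma~\ref{lemma:ext-struct-limpl-charact} on $\Limpl X{\Orth Y}$. Both assertions then amount to testing the relevant element or pair against an arbitrary $t\in\Extd{\Limpl X{\Orth Y}}$, rewriting $\Eval t{\Tens uv}=\Eval{\Matappa tu}{v}$ by Lemma~\ref{lemma:matapp-eval}.

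For the first assertion, given $u\in\Extd X$ and $v\in\Extd Y$, I would pick $t(1)\Exto{\Limpl X{\Orth Y}}t(2)$ in $\Extd{\Limpl X{\Orth Y}}$ and aim at $\Eval{t(1)}{\Tens uv}\leq\Eval{t(2)}{\Tens uv}$. Lemma~\ref{lemma:ext-struct-limpl-charact}, applied to $u\in\Extd X$, gives $\Matappa{t(1)}{u}\Exto{\Orth Y}\Matappa{t(2)}{u}$. Since $v\in\Extd Y=\Extd{\Biorth Y}$, the defining condition of the biorthogonal then yields $\Eval{\Matappa{t(1)}u}{v}\leq\Eval{\Matappa{t(2)}u}{v}$, and Lemma~\ref{lemma:matapp-eval} on both sides closes the argument.

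For the second assertion, given $u(1)\Exto X u(2)$ and $v(1)\Exto Y v(2)$, I would fix $t\in\Extd{\Limpl X{\Orth Y}}$ and factor the comparison $\Eval t{\Tens{u(1)}{v(1)}}\leq\Eval t{\Tens{u(2)}{v(2)}}$ through the intermediate quantity $\Eval{\Matappa t{u(2)}}{v(1)}$. The first half uses the monotonicity clause of Lemma~\ref{lemma:ext-struct-limpl-charact} to promote $u(1)\Exto X u(2)$ to $\Matappa t{u(1)}\Exto{\Orth Y}\Matappa t{u(2)}$, and then exploits $v(1)\in\Extd{\Biorth Y}$. The second half directly invokes the definition of $v(1)\Exto Y v(2)$ at the element $\Matappa t{u(2)}\in\Extd{\Orth Y}$, which belongs to $\Extd{\Orth Y}$ because $t\in\Extd{\Limpl X{\Orth Y}}$ and $u(2)\in\Extd X$.

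I do not foresee any real obstacle: the only point to watch is to keep separate the two facets of biorthogonality, namely that belonging to $\Extd Y$ compares pairs of $\Extd{\Orth Y}$ ordered by $\Exto{\Orth Y}$, while the relation $\Exto Y$ itself is tested by single elements of $\Extd{\Orth Y}$. Both are available because $Y=\Biorth Y$, and once these two roles are distinguished the whole verification is a short chain of inequalities.
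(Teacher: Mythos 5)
Your argument is correct. The paper states this lemma without proof, and your direct verification --- unfolding $\Tens XY=\Orthp{\Limpl X{\Orth Y}}$, testing $\Tens uv$ against ordered pairs $t(1)\Exto{\Limpl X{\Orth Y}}t(2)$ for membership, testing the pair $(\Tens{u(1)}{v(1)},\Tens{u(2)}{v(2)})$ against single $t\in\Extd{\Limpl X{\Orth Y}}$ for the preorder, and rewriting $\Eval t{\Tens uv}=\Eval{\Matappa tu}{v}$ via Lemma~\ref{lemma:matapp-eval} --- goes through, including the interpolation through $\Eval{\Matappa t{u(2)}}{v(1)}$ and your careful separation of the two facets of $\Biorth Y=Y$ (membership in $\Extd Y$ is tested against $\Exto{\Orth Y}$-related pairs, while $\Exto Y$ is tested against single elements of $\Extd{\Orth Y}$). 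It is worth noting that a shorter route is already available from the surrounding material: the proof of Lemma~\ref{lemma:ext-struct-limpl-charact}, applied to $X$ and $\Orth Y$, shows that $(\Extd{\Limpl X{\Orth Y}},\Exto{\Limpl X{\Orth Y}})=\Orth{\cU'}$ where $\cU'$ is precisely the pre-extensional structure on $\Tens{\Extc X}{\Extc Y}$ whose elements are the pure tensors $\Tens uv$ with $u\in\Extd X$, $v\in\Extd Y$, and whose relation consists of the pairs $(\Tens{u(1)}{v(1)},\Tens{u(2)}{v(2)})$ with $u(1)\Exto Xu(2)$ and $v(1)\Exto Yv(2)$ (using $\Biorth Y=Y$). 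The extensional structure of $\Tens XY$ is therefore $\Biorth{\cU'}$, and the lemma is exactly the inclusion $\cU'\subseteq\Biorth{\cU'}$ given by Lemma~\ref{lemma:ext-struct-dual-monotone}. Your computation in effect re-derives that easy inclusion by hand; it buys self-containedness at the cost of repeating a calculation the paper has already packaged.
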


For proving the categorical properties of a $\Tens{}{}$ operation
defined in that way, one always starts with proving a lemma
characterizing \emph{bilinear} morphisms.

\begin{lemma}\label{lemma:bilinear-epcs-charact}
  Let $t\in\PCOH(\Tens{\Extc X}{\Extc Y},\Extc Z)$. One has
  $t\in\EPCOH(\Tens XY,Z)$ iff the following conditions hold:
  \begin{enumerate}
  \item if $u\in\Extd X$ and $v\in\Extd Y$ then
    $\Matappa t{\Tensp uv}\in\Extd Z$\label{it:bilinear-epcs-charact-1}
  \item if $u(1)\Exto X u(2)$ and $v(1)\Exto Y v(2)$ then
    $\Matappa t{\Tensp{u(1)}{v(1)}}\Exto Z\Matappa
    t{\Tensp{u(2)}{v(2)}}$.\label{it:bilinear-epcs-charact-2}
  \end{enumerate}
  Let $t(1),t(2)\in\PCOH(\Tens{\Extc X}{\Extc Y},\Extc Z)$, one has
  $t(1)\Exto{\Limpl{\Tens{\Extc X}{\Extc Y}}Z}t(2)$ iff for all
  $u\in\Extd X$ and $v\in\Extd Y$, one has
  $\Matappa{t(1)}{\Tensp uv}\Exto Z\Matappa{t(2)}{\Tensp uv}$.
\end{lemma}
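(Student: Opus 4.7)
The plan is to reduce the bilinear characterization to Lemma~\ref{lemma:ext-struct-limpl-charact} via the definition $\Tens XY=\Orthp{\Limpl X{\Orth Y}}$ together with the transposition Lemma~\ref{lemma:traspose-in-epcoh}, exploiting the fact that the pure tensors $\Tens uv$ with $u\in\Extd X$, $v\in\Extd Y$ generate $\Extd{\Tens XY}$ by biorthogonality. The forward direction of the first equivalence is immediate: assuming $t\in\EPCOH(\Tens XY,Z)$, Lemma~\ref{lemma:tens-extensional} furnishes $\Tens uv\in\Extd{\Tens XY}$ whenever $u\in\Extd X$ and $v\in\Extd Y$, and $\Tens{u(1)}{v(1)}\Exto{\Tens XY}\Tens{u(2)}{v(2)}$ whenever $u(1)\Exto Xu(2)$ and $v(1)\Exto Yv(2)$; feeding these into the defining conditions of $\Extd{\Limpl{\Tens XY}Z}$ from Lemma~\ref{lemma:ext-struct-limpl-charact} then yields~\Itref{it:bilinear-epcs-charact-1} and~\Itref{it:bilinear-epcs-charact-2}.

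For the converse, assume~\Itref{it:bilinear-epcs-charact-1} and~\Itref{it:bilinear-epcs-charact-2}. By Lemma~\ref{lemma:traspose-in-epcoh} it suffices to establish $\Orth t\in\EPCOH(\Orth Z,\Limpl X{\Orth Y})$. Applying Lemma~\ref{lemma:ext-struct-limpl-charact} once at the outer arrow and once at the inner arrow $\Limpl X{\Orth Y}$, this reduces to checking that for every $z'\in\Extd{\Orth Z}$ and every $u\in\Extd X$, the vector $\Matappa{\Matappa{\Orth t}{z'}}{u}$ lies in $\Extd{\Orth Y}$, together with the appropriate monotonicity in $z'$ and in $u$. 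Using Lemma~\ref{lemma:matapp-eval} twice, testing such a vector against $v\in\Extd Y$ rewrites as the scalar $\Eval{z'}{\Matappa t{\Tens uv}}$. Membership in $\Extd{\Orth Y}$ therefore amounts to $\Matappa t{\Tens uv}\in\Extd Z$, which is exactly~\Itref{it:bilinear-epcs-charact-1}, while each monotonicity inequality (in $u$, $v$ or $z'$) is a direct consequence of~\Itref{it:bilinear-epcs-charact-2} combined with the defining pairing properties of $\Exto{\Orth Z}$ and $\Exto{\Orth Y}$; reflexivity of $\Extleq$ (Proposition~\ref{prop:pre-ext-basic-props}) is used to freeze one tensor factor while varying the other.

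The second equivalence about $\Exto{\Limpl{\Tens{\Extc X}{\Extc Y}}Z}$ is handled by the same mechanism. The forward direction applies the $\Extleq$-characterization of Lemma~\ref{lemma:ext-struct-limpl-charact} to the tensor $\Tens uv\in\Extd{\Tens XY}$. For the converse, the hypothesis yields $\Eval{z'}{\Matappa{t(1)}{\Tens uv}}\leq\Eval{z'}{\Matappa{t(2)}{\Tens uv}}$ for all $u\in\Extd X$, $v\in\Extd Y$ and $z'\in\Extd{\Orth Z}$; unwinding through Lemma~\ref{lemma:matapp-eval} and applying Lemma~\ref{lemma:ext-struct-limpl-charact} gives $\Orth{t(1)}\Exto{}\Orth{t(2)}$ in the iterated function space $\Limpl{\Orth Z}{\Limpl X{\Orth Y}}$, and Lemma~\ref{lemma:traspose-in-epcoh} then returns $t(1)\Exto{\Limpl{\Tens{\Extc X}{\Extc Y}}Z}t(2)$. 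The only real obstacle is bookkeeping: each step of the converse direction must carefully identify which pairing against a pure tensor is being exposed by Lemma~\ref{lemma:matapp-eval}, so that the bilinear hypotheses~\Itref{it:bilinear-epcs-charact-1} and~\Itref{it:bilinear-epcs-charact-2} actually apply; the rest is routine.
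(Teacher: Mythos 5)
Your overall strategy is the right one. The paper states this lemma without giving a proof, but the route you take—reduce to $\Orth t\in\EPCOH(\Orth Z,\Limpl X{\Orth Y})$ via Lemma~\ref{lemma:traspose-in-epcoh}, unfold Lemma~\ref{lemma:ext-struct-limpl-charact} at both arrows, and convert every test into a scalar $\Eval{\Matappa t{\Tensp uv}}{z'}$ with Lemma~\ref{lemma:matapp-eval}—is exactly the dualization pattern the paper deploys in the surrounding proofs, and it does work. The forward directions and the second equivalence are handled correctly, including the use of reflexivity from Proposition~\ref{prop:pre-ext-basic-props} to freeze one factor.

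There is, however, a concrete error in your accounting of the converse of the first equivalence: you attach each hypothesis to the wrong obligation. Dualization exchanges the roles of membership and order (this is precisely the ``role swapping'' the paper's footnote on $\Orth\cU$ warns about). Membership of $\Matappa{(\Matappa{\Orth t}{z'})}{u}$ in $\Extd{\Orth Y}$ is, by definition of the dual structure, the statement that $v(1)\Exto Yv(2)$ implies $\Eval{\Matappa t{\Tensp u{v(1)}}}{z'}\leq\Eval{\Matappa t{\Tensp u{v(2)}}}{z'}$; this is discharged by condition~\Itref{it:bilinear-epcs-charact-2} (taking $u\Exto Xu$) \emph{together with} $z'\in\Extd{\Orth Z}$ — not by condition~\Itref{it:bilinear-epcs-charact-1} as you state. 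Dually, the monotonicity in $z'$, namely that $z'(1)\Exto{\Orth Z}z'(2)$ implies $\Eval{\Matappa t{\Tensp uv}}{z'(1)}\leq\Eval{\Matappa t{\Tensp uv}}{z'(2)}$, holds precisely because $\Matappa t{\Tensp uv}\in\Extd Z$ (that is what $\Exto{\Orth Z}$ tests against), so it is condition~\Itref{it:bilinear-epcs-charact-1} that is needed there, not~\Itref{it:bilinear-epcs-charact-2}. As written, condition~\Itref{it:bilinear-epcs-charact-1} is never invoked where it is actually required and the $z'$-monotonicity is left without a valid justification. The repair is purely local — swap the two attributions — but in a proof whose entire content is this bookkeeping, the swap is the one thing that must be gotten right.
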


\begin{lemma}\label{lemma:epcs-morph-tensor}
  Let $t(i)\in\EPCOH(X_i,Y_i)$ for $i=1,2$, then $\Tens{t(1)}{t(2)}$,
  which is an element of
  $\PCOH(\Tens{\Extc{X_1}}{\Extc{X_2}},\Tens{\Extc{Y_1}}{\Extc{Y_2}})$,
  satisfies
  $\Tens{t(1)}{t(2)}\in\EPCOH(\Tens{X_1}{X_2},\Tens{Y_1}{Y_2})$.
\end{lemma}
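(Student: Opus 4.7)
The plan is to apply the bilinear characterization provided by Lemma~\ref{lemma:bilinear-epcs-charact}. We view $\Tens{t(1)}{t(2)}$ as a morphism from $\Tens{\Extc{X_1}}{\Extc{X_2}}$ to $\Tens{\Extc{Y_1}}{\Extc{Y_2}} = \Extc{\Tens{Y_1}{Y_2}}$, and recall that on a pure tensor it acts by $\Matappa{\Tens{t(1)}{t(2)}}{\Tens{u_1}{u_2}} = \Tens{\Matappa{t(1)}{u_1}}{\Matappa{t(2)}{u_2}}$. So it suffices to verify the two conditions of Lemma~\ref{lemma:bilinear-epcs-charact} with $X=X_1$, $Y=X_2$ and $Z=\Tens{Y_1}{Y_2}$.

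For condition~\itref{it:bilinear-epcs-charact-1}, take $u_1\in\Extd{X_1}$ and $u_2\in\Extd{X_2}$. Since $t(i)\in\EPCOH(X_i,Y_i)$, Lemma~\ref{lemma:ext-struct-limpl-charact} gives $\Matappa{t(i)}{u_i}\in\Extd{Y_i}$ for $i=1,2$, and then Lemma~\ref{lemma:tens-extensional} yields $\Tens{\Matappa{t(1)}{u_1}}{\Matappa{t(2)}{u_2}}\in\Extd{\Tens{Y_1}{Y_2}}$, which is precisely $\Matappa{\Tens{t(1)}{t(2)}}{\Tens{u_1}{u_2}}\in\Extd{Z}$.

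For condition~\itref{it:bilinear-epcs-charact-2}, suppose $u_1(1)\Exto{X_1}u_1(2)$ and $u_2(1)\Exto{X_2}u_2(2)$. By the extensional monotonicity clause of Lemma~\ref{lemma:ext-struct-limpl-charact}, applied to each $t(i)\in\EPCOH(X_i,Y_i)$, we obtain $\Matappa{t(i)}{u_i(1)}\Exto{Y_i}\Matappa{t(i)}{u_i(2)}$ for $i=1,2$. Then the monotonicity clause of Lemma~\ref{lemma:tens-extensional} gives
\begin{equation*}
\Tens{\Matappa{t(1)}{u_1(1)}}{\Matappa{t(2)}{u_2(1)}}\;\Exto{\Tens{Y_1}{Y_2}}\;\Tens{\Matappa{t(1)}{u_1(2)}}{\Matappa{t(2)}{u_2(2)}},
\end{equation*}
which is exactly what condition~\itref{it:bilinear-epcs-charact-2} demands. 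Applying Lemma~\ref{lemma:bilinear-epcs-charact} concludes that $\Tens{t(1)}{t(2)}\in\EPCOH(\Tens{X_1}{X_2},\Tens{Y_1}{Y_2})$.

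No step looks delicate here: the whole point of having singled out Lemmas~\ref{lemma:tens-extensional} and~\ref{lemma:bilinear-epcs-charact} is precisely to factor the verification of categorical properties of $\ITens$ in $\EPCOH$ through a check on pure tensors, and the present lemma is the paradigmatic example of such a factoring. The only minor subtlety is to remember that bilinearity is automatic from $\Tens{t(1)}{t(2)}$ already being a $\PCOH$-morphism on $\Tens{\Extc{X_1}}{\Extc{X_2}}$, so nothing about the behaviour of $\Tens{t(1)}{t(2)}$ on non-tensor elements of $\Pcohp{\Tens{\Extc{X_1}}{\Extc{X_2}}}$ needs to be examined separately.
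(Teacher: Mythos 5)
Your proof is correct and follows essentially the same route as the paper's: both reduce the claim to the two conditions of Lemma~\ref{lemma:bilinear-epcs-charact} and discharge each of them by combining the extensionality of the $t(i)$ with the two clauses of Lemma~\ref{lemma:tens-extensional}. Nothing to add.
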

\Beginproof
We apply Lemma~\ref{lemma:bilinear-epcs-charact}, so let first
$u(i)\in\Extd{X_i}$ for $i=1,2$. We have
$\Matappa{\Tensp{t(1)}{t(2)}}{\Tensp{u(1)}{u(2)}}
=\Tens{(\Matappa{t(1)}{u(1)})}{(\Matappa{t(2)}{u(2)})}
\in\Extd{\Tens{Y_1}{Y_2}}$
by Lemma~\ref{lemma:tens-extensional} since we have
$\Matappa{t(i)}{u(i)}\in\Extd{Y_i}$ for $i=1,2$.
Next assume that $u^1(i)\Exto{X_i}u^2(i)$ for $i=1,2$. We have
$\Matappa{t(i)}{u^1(i)}\Exto{X_i}\Matappa{t(i)}{u^2(i)}$ for $i=1,2$ and
hence
$\Tens{(\Matappa{t(1)}{u^1(1)})}{(\Matappa{t(2)}{u^2(1)})}
\Exto{\Tens{Y_1}{Y_2}}\Tens{(\Matappa{t(1)}{u^1(2)})}{(\Matappa{t(2)}{u^2(2)})}$
by Lemma~\ref{lemma:tens-extensional}.
\Endproof

Let $X$, $Y$ and $Z$ be
extensional PCS, we have an isomorphism
\begin{align*}
  \Assoc:\PCOH(\Tens{\Tensp{\Extc X}{\Extc Y}}{\Extc Z}, \Tens{\Extc
  X}{\Tensp{\Extc Y}{\Extc Z}}) 
\end{align*}
defined in the obvious way:
$\Assoc_{((a,b),c),(a',(b',c'))}
=\Kronecker{a}{a'}\Kronecker{b}{b'}\Kronecker{c}{c'}$.
We prove first that
$\Assoc\in\EPCOH(\Tens{\Tensp XY}{Z},\Tens X{\Tensp YZ})$. For this,
by Lemma~\ref{lemma:traspose-in-epcoh}, it suffices to prove that
$\Orth\Assoc\in\EPCOH(\Orthp{\Tens X{\Tensp YZ}},\Orthp{\Tens{\Tensp
    XY}{Z}})$, that is
\begin{align*}
  \Orth\Assoc\in\EPCOH(\Limpl X{\Limplp Y{\Orth Z}},\Limpl{\Tens
  XY}{\Orth Z})\,.
\end{align*}
Let $t\in\Extd{\Limpl X{\Limplp Y{\Orth Z}}}$, we prove that
$\Matappa{\Orth\Assoc}t\in\Extd{\Limpl{\Tens XY}{\Orth Z}}$
applying Lemma~\ref{lemma:bilinear-epcs-charact}.
Let first $u\in\Extd X$ and $v\in\Extd Y$, we have
$\Matappa{(\Matappa{\Orth\Assoc}t)}{\Tensp uv}=\Matappa{(\Matappa
  tu)}v\in\Extd{\Orth Z}$.
Next let $u(1)\Exto X u(2)$ and $v(1)\Exto Y v(2)$, we have
$\Matappa t{u(1)}\Exto{\Limpl Y{\Orth Z}}\Matappa t{u(2)}$, hence
$\Matappa{(\Matappa t{u(1)})}{v(1)} \Exto{\Orth Z}\Matappa{(\Matappa
  t{u(2)})}{v(1)}$ and since
$\Matappa t{u(2)}\in\Extd{\Limpl{Y}{\Orth Z}}$, we have
\begin{align*}
  \Matappa{(\Matappa t{u(2)})}{v(1)}
  \Exto{\Orth Z}\Matappa{(\Matappa t{u(2)})}{v(1)}
  \Exto{\Orth Z}\Matappa{(\Matappa t{u(2)})}{v(2)}
\end{align*}
so that
$\Matappa{(\Matappa t{u(2)})}{v(1)} \Exto{\Orth Z}\Matappa{(\Matappa
  t{u(2)})}{v(2)}$ by transitivity of $\Exto{\Orth Z}$
(Proposition~\ref{prop:pre-ext-basic-props}).

Next, given $t(1)\Exto{\Limpl X{\Limplp Y{\Orth Z}}}t(2)$, we prove
that
$\Matappa{\Orth\Assoc}{t(1)}\Exto{\Limpl{\Tens XY}{\Orth
    Z}}\Matappa{\Orth\Assoc}{t(2)}$. By
Lemma~\ref{lemma:bilinear-epcs-charact} it suffices to prove that,
given $u\in\Extd X$ and $v\in\Extd Y$, one has
$\Matappa{(\Matappa{\Orth\Assoc}{t(1)})}{\Tensp uv} \Exto{\Orth
  Z}\Matappa{(\Matappa{\Orth\Assoc}{t(2)})}{\Tensp uv}$ which results
from the definition of $\Assoc$ which yields
$\Matappa{(\Matappa{\Orth\Assoc}{t(i)})}{\Tensp
  uv}=\Matappa{\Matappa{(t(i)}{u})}{v}$ and from our assumption on the
$t(i)$'s.

Let $\beta=\Funinv\Assoc$, we prove that
$\beta\in\EPCOH(\Tens{X}{\Tensp{Y}{Z}},\Tens{\Tensp{X}{Y}}{Z})$,
that is
\begin{align*}
  \Orth\beta\in\EPCOH(\Limpl{\Tens XY}{\Orth Z},\Limpl X{\Limplp
  Y{\Orth Z}})\,.
\end{align*}
So let $t\in\Extd{\Limpl{\Tens XY}{\Orth Z}}$, we
prove first that
$\Matappa{\Orth\beta}{t}\in\Extd{\Limpl X{\Limplp Y{\Orth Z}}}$.
Let $u\in\Extd X$ and $v\in\Extd Y$, we have
$\Matappa{(\Matappa{(\Matapp{\Orth\beta}{t})}u)}{v}=\Matappa t{\Tensp
  uv}\in\Extd{\Orth Z}$. This shows that
$\Matappa{(\Matappa{\Orth\beta}{t})}u\in\Extd{\Limpl Y{\Orth Z}}$.
% Let $x\in\Extd X$, we must prove that
% $\Matapp{(\Matapp{\Orth\beta}{t})}x\in\Extd{\Limpl Y{\Orth Z}}$. Let
% first $y\in\Extd Y$, we have
% $\Matapp{(\Matapp{(\Matapp{\Orth\beta}{t})}x)}{y}=\Matapp t{\Tensp
%   xy}\in\Extd{\Orth Z}$.
%
Next let $u(1)\Exto Xu(2)$ and $v\in\Extd Y$, we have
$\Matappa{(\Matappa{(\Matapp{\Orth\beta}{t})}{u(i)})}{v}=\Matapp
t{\Tensp{u(i)}{v}}$ and hence
$\Matappa{(\Matappa{(\Matappa{\Orth\beta}{t})}{u(1)})}{v} \Exto{\Orth
  Z}\Matappa{(\Matappa{(\Matappa{\Orth\beta}{t})}{u(2)})}{v}$ since
$\Tens{u(1)}{v}\Exto{\Tens XY}\Tens{u(2)}{v}$.  This shows that
$\Matappa{(\Matappa{\Orth\beta}{t})}{u(1)} \Exto{\Limpl{Y}{\Orth
    Z}}\Matappa{(\Matappa{\Orth\beta}{t})}{u(2)}$.
% Next let $x(1)\Exto Xx(2)$, we prove that
% $\Matapp{(\Matapp{\Orth\beta}{t})}{x(1)} \Exto{\Limpl{Y}{\Orth
%     Z}}\Matapp{(\Matapp{\Orth\beta}{t})}{x(2)}$, so let $y\in\Extd Y$,
% we must prove that
% $\Matapp{(\Matapp{(\Matapp{\Orth\beta}{t})}{x(1)})}{y} \Exto{\Orth
%   Z}\Matapp{(\Matapp{(\Matapp{\Orth\beta}{t})}{x(2)})}{y}$ which again
% results from the fact that
% $\Matapp{(\Matapp{(\Matapp{\Orth\beta}{t})}{x(i)})}{y}=\Matapp
% t{\Tensp{x(i)}{y}}$.
%
Now let $u(1)\Exto X u(2)$ and $v\in\Extd Y$, we prove similarly
$\Matappa{(\Matappa{(\Matappa{\Orth\beta}{t})}{u(1)})}{v}\Exto{\Orth
  Z}\Matappa{(\Matappa{(\Matappa{\Orth\beta}{t})}{u(1)})}{v}$ which shows
that
$\Matappa{(\Matappa{\Orth\beta}{t})}{u(1)} \Exto{\Limpl Y{\Orth
    Z}}\Matappa{(\Matappa{\Orth\beta}{t})}{u(2)}$.
% Now let $x(1)\Exto X x(2)$, we must prove that
% $\Matapp{(\Matapp{\Orth\beta}{t})}{x(1)} \Exto{\Limpl Y{\Orth
% Z}}\Matapp{(\Matapp{\Orth\beta}{t})}{x(2)}$. So let $y\in\Extd Y$,
% we must prove
% $\Matapp{(\Matapp{(\Matapp{\Orth\beta}{t})}{x(1)})}{y}\Exto{\Orth
% Z}\Matapp{(\Matapp{(\Matapp{\Orth\beta}{t})}{x(1)})}{y}$ which is
% obtained in the same manner.

Last let $t(1)\Exto{\Limpl{\Tens XY}{Z}}t(2)$, $u\in\Extd X$ and
$v\in\Extd Y$, we have $\Tens uv\in\Extd{\Tens XY}$ and hence
$\Matappa{t(1)}{\Tensp uv}\Exto{\Orth Z}\Matappa{t(2)}{\Tensp uv}$ that
is
$\Matappa{(\Matappa{(\Matappa{\Orth\beta}{t(1)})}u)}v \Exto{\Orth
  Z}\Matappa{(\Matappa{(\Matappa{\Orth\beta}{t(2)})}u)}v$, so
$\Matappa{(\Matappa{\Orth\beta}{t(1)})}u\Exto{\Limpl Y{\Orth
    Z}}\Matappa{(\Matappa{\Orth\beta}{t(2)})}v$ and hence
$\Matappa{\Orth\beta}{t(1)}\Exto{\Limpl X{\Limplp Y{\Orth
      Z}}}\Matappa{\Orth\beta}{t(2)}$.

The fact that we have a symmetry isomorphism
$\Sym\in\EPCOH(\Tens XY,\Tens YX)$ such that
$\Sym_{(a,b),(b',a')}=\Kronecker a{a'}\Kronecker b{b'}$ is a direct
consequence of Lemmas~\ref{lemma:bilinear-epcs-charact}
and~\ref{lemma:tens-extensional}.

The tensor unit is $\One$, equipped with the discrete extensional
structure. It is straightforward to check that $\Leftu$ and $\Rightu$
are isos in $\EPCOH$. To summarize, we have proven the first part of
the following result.

\begin{theorem}\label{th:EPCOH-star-aut}
  Equipped with $(\mathord\otimes,\One,\Leftu,\Rightu,\Assoc,\Sym)$,
  the category $\EPCOH$ is symmetric monoidal. Moreover, this
  symmetric monoidal category is closed (SMCC), the object of linear
  morphisms $X\to Y$ being $(\Limpl XY,\Evlin)$\footnote{Where
    $\Evlin\in\PCOH(\Tens{\Limplp{\Extc X}{\Extc Y}}{\Extc X},\Extc
    Y)$ is the evaluation morphism of $\PCOH$, see
    Section~\ref{sec:PCS}.}. This SMCC has a *-autonomous structure,
  with $\One$ as dualizing object.
\end{theorem}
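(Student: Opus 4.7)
My plan is to import as much structure as possible from $\PCOH$, which is already $\ast$-autonomous. Since a morphism in $\EPCOH$ is by definition a morphism in $\PCOH$ satisfying extra extensional conditions, and both composition and equality of morphisms in $\EPCOH$ agree with those in $\PCOH$ at the level of underlying matrices, every coherence diagram (pentagon, triangle, hexagon, and the diagrams expressing the universal properties of the internal hom and of the dualizing object) will commute in $\EPCOH$ as soon as the relevant canonical morphisms are shown to lie in $\EPCOH$.

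The symmetric monoidal part is essentially already done in the excerpt: $\Assoc$, $\Sym$, $\Leftu$, $\Rightu$ and their inverses have been, or are by the same arguments, morphisms of $\EPCOH$, and the MacLane coherence equalities are then inherited from $\PCOH$.

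For the closed structure with internal hom $(\Limpl XY, \Evlin)$, I would first check that $\Evlin \in \EPCOH(\Tens{\Limpl XY}X, Y)$ via Lemma~\ref{lemma:bilinear-epcs-charact}: for $s \in \Extd{\Limpl XY}$ and $u \in \Extd X$ the membership $\Matappa su \in \Extd Y$ is immediate from the definition of $\Extd{\Limpl XY}$, and for $s(1) \Exto{\Limpl XY} s(2)$ with $u(1) \Exto X u(2)$ the chain
\[
  \Matappa{s(1)}{u(1)} \Exto Y \Matappa{s(1)}{u(2)} \Exto Y \Matappa{s(2)}{u(2)}
\]
together with transitivity (Proposition~\ref{prop:pre-ext-basic-props}) yields the required monotonicity. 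Next, given $t \in \EPCOH(\Tens ZX, Y)$, I take the curried morphism $\Curlin(t) \in \PCOH(\Extc Z, \Limpl{\Extc X}{\Extc Y})$ provided by the closed structure of $\PCOH$, which is determined by the equation $\Matappa{(\Matappa{\Curlin(t)}w)}u = \Matappa t{\Tensp wu}$. A second application of Lemma~\ref{lemma:bilinear-epcs-charact}, this time to $t$, ensures that $\Matappa{\Curlin(t)}w \in \Extd{\Limpl XY}$ whenever $w \in \Extd Z$ and that $\Curlin(t)$ is extensionally monotone, so $\Curlin(t) \in \EPCOH(Z, \Limpl XY)$. Uniqueness of $\Curlin(t)$ is inherited from $\PCOH$ since the underlying matrices coincide.

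Finally, for $\ast$-autonomy with $\Llbot = \One$ (equipped with its discrete extensional structure) as dualizing object, I would unfold Lemma~\ref{lemma:ext-struct-limpl-charact} in the case $Y = \One$ and check that, under the canonical identification $\Limpl{\Extc X}\One \Isom \Orth{\Extc X}$ in $\PCOH$, the extensional structure on $\Limpl X \One$ matches clause by clause the dual structure $\Orth{(\Extd X, \Exto X)}$ on $\Orth{\Extc X}$. Concretely, since $\Extd \One = \Intercc 01$ with ordinary ordering, the condition $t \in \Extd{\Limpl X \One}$ rewrites to the first defining clause of $\Orth{(\Extd X, \Exto X)}$, and $\Exto{\Limpl X \One}$ rewrites to the second. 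Hence $\Limpl X \One \Isom \Orth X$ in $\EPCOH$, and since $\Biorth X = X$ holds by the very definition of extensional PCS, the canonical double-duality morphism $X \to \Limplp{\Limpl X \One}\One$ is an iso in $\EPCOH$. The main obstacle I anticipate is the bookkeeping in this last matching step; however, Lemma~\ref{lemma:traspose-in-epcoh} already performs essentially this verification in the general form and should make the argument short.
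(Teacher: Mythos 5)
Your proposal is correct and follows essentially the same route as the paper: the symmetric monoidal part is exactly the sequence of verifications the paper carries out (via Lemmas~\ref{lemma:tens-extensional}, \ref{lemma:bilinear-epcs-charact} and \ref{lemma:traspose-in-epcoh}, plus the inheritance of coherence equalities from $\PCOH$ since composition is identical), and your write-up of the closed and $*$-autonomous parts — $\Evlin$ and $\Curlin(t)$ checked through Lemma~\ref{lemma:bilinear-epcs-charact}, and $\Limpl X\One\Isom\Orth X$ together with $\Biorth X=X$ — is precisely the routine completion the paper leaves implicit, using the lemmas it set up for that purpose.
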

Given $t\in\EPCOH(\Tens ZX,Y)$, we use $\Curlin(t)$ for the curryfied
version of $t$ which belongs to $\EPCOH(Z,\Limpl XY)$ and is defined
exactly as in $\PCOH$ (that is $\Curlin(t)_{c,(a,b)}=t_{(c,a),b}$).

\subsubsection{The additives}

The additive structure is quite simple. Given an at most countable
family $(X_i)_{i\in I}$ of extensional PCS, we define a pre-extensional PCS
$X=\Bwith_{i\in I}X_i$ as follows:
\begin{itemize}
\item $\Extc X=\Bwith_{i\in I}\Extc{X_i}$ (in $\PCOH$ of course)
% \item
%   $\Extd X=\Eset{\Tuple{u(i)}_{i\in I}\St\forall i\in I\
%     u(i)\in\Extd{X_i}}$
\item
  $\Extd X=\Eset{u\in\Pcoh X\St\forall i\in I\
    \Matappa{\Proj i}u\in\Extd{X_i}}$
% \item $\Tuple{u(i)(1)}_{i\in I}\Exto X\Tuple{u(i)(2)}_{i\in I}$ if
%   $\forall i\in I\ u(i)(1)\Exto{X_i}u(i)(2)$.
\item $u\Exto Xv$ if
  $\forall i\in I\ \Matappa{\Proj i}u\Exto{X_i}\Matappa{\Proj i}v$.
\end{itemize}

\begin{lemma}\label{lemma:epcs-plus-charact}
  Let $X=\Bwith_{i\in I}X_i$. Then the extensional PCS $\Orth X$ is
  characterized by
  \begin{itemize}
  \item $\Extc{\Orth X}=\Bplus_{i\in I}\Orth{\Extc{X_i}}$
  % \item
  %   $\Extd{\Orth X}=\Eset{\Tuple{u^i}_{i\in I}\in\Pcoh{\Orth{\Extc
  %         X}}\St\forall i\in I\ u^i\in\Extd{\Orth{X_i}}}$
  % \item $\Tuple{u^i(1)}_{i\in I}\Exto{\Orth X}\Tuple{u^i(2)}_{i\in I}$
  %   if $\forall i\in I\ u^i(1)\Exto{\Orth{X_i}}u^i(2)$.
  \item
    $\Extd{\Orth X}=\Eset{u'\in\Pcoh{\Orth{\Extc
          X}}\St\forall i\in I\ \Matappa{\Proj i}{u'}\in\Extd{\Orth{X_i}}}$
  \item $u'\Exto{\Orth X}v'$
    if $\forall i\in I\ \Matappa{\Proj i}{u'}
    \Exto{\Orth{X_i}}\Matappa{\Proj i}{v'}$.
  \end{itemize}
\end{lemma}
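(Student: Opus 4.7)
The plan is to compute the pre-extensional dual of the structure $\cV=(\Extd X,\Exto X)$ defining $X=\Bwith_{i\in I}X_i$ and check it matches the triple $\cU$ prescribed by the statement. The underlying PCS identity $\Extc{\Orth X}\Isom\Bplus_{i\in I}\Orth{\Extc{X_i}}$ is just the standard PCS de~Morgan duality between product and coproduct, so the real content sits at the extensional level. The one computation I will reuse throughout is the orthogonal decomposition
\begin{equation*}
  \Eval u{u'}=\sum_{i\in I}\Eval{\Matappa{\Proj i}{u}}{\Matappa{\Proj i}{u'}}
\end{equation*}
valid for $u\in\Pcoh{\Extc X}$ and $u'\in\Pcoh{\Orth{\Extc X}}$, immediate from the matrix $(\Proj i)_{(j,a),a'}=\Kronecker ij\Kronecker a{a'}$.

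For the inclusion $\cU\subseteq\Orth\cV$, given $u'$ with each $\Matappa{\Proj i}{u'}\in\Extd{\Orth{X_i}}$, an assumption $u\Exto Xv$ in $X$ unfolds by definition into $\Matappa{\Proj i}{u}\Exto{X_i}\Matappa{\Proj i}{v}$ for every $i$, which combined with $\Matappa{\Proj i}{u'}\in\Extd{\Orth{X_i}}$ gives $\Eval{\Matappa{\Proj i}{u}}{\Matappa{\Proj i}{u'}}\leq\Eval{\Matappa{\Proj i}{v}}{\Matappa{\Proj i}{u'}}$; summing in $i$ via the splitting identity yields $\Eval u{u'}\leq\Eval v{u'}$. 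The preorder part is strictly parallel: given componentwise inequalities, sum them against an arbitrary $u\in\Extd X$.

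For the reverse inclusion $\Orth\cV\subseteq\cU$, the key test gadget is the injection $\iota_j(w)$ for $w\in\Extd{X_j}$, namely the element of $\Realpto{\Web{\Extc X}}$ whose $j$-th projection equals $w$ and whose other projections vanish. By Proposition~\ref{prop:pre-ext-basic-props}, $0\in\Extd{X_k}$ and $0\Exto{X_k}0$ for each $k$, so $\iota_j(w)\in\Extd X$ and $w_1\Exto{X_j}w_2$ implies $\iota_j(w_1)\Exto X\iota_j(w_2)$. Testing a candidate $u'\in\Orth\cV$ on these injections and applying the splitting collapses the required inequalities to $\Eval{w_1}{\Matappa{\Proj j}{u'}}\leq\Eval{w_2}{\Matappa{\Proj j}{u'}}$ (and analogously for the preorder), which are precisely the defining conditions of $\cU$.

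The only mild obstacle is the bookkeeping for the injections $\iota_j(w)$, i.e.\ checking that they genuinely lie in $\Pcoh{\Extc X}$ and in $\Extd X$; both facts are handled by the presence and minimality of $0$ supplied by Proposition~\ref{prop:pre-ext-basic-props}. As a byproduct, the exact same argument applied to $\Orth{X_i}$ in place of $X_i$ gives $\Orth\cU=\cV$, whence $\Biorth\cV=\cV$ and the construction $X=\Bwith_{i\in I}X_i$ yields a genuine extensional PCS rather than only a pre-extensional one.
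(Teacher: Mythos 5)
Your proof is correct: the splitting identity $\Eval u{u'}=\sum_{i\in I}\Eval{\Matappa{\Proj i}{u}}{\Matappa{\Proj i}{u'}}$ together with the injections $\iota_j(w)$ (legitimate elements of $\Extd X$ because $0\in\Extd{X_i}$ and $0\Exto{X_i}0$ by Proposition~\ref{prop:pre-ext-basic-props}) is exactly the computation needed, and it parallels the paper's proof of Lemma~\ref{lemma:ext-struct-limpl-charact}, which is the template the author evidently has in mind for this lemma, stated without proof. Your closing observation that the same argument yields $\Orth{(\Extd{\Orth X},\Exto{\Orth X})}=(\Extd X,\Exto X)$, and hence that $\Bwith_{i\in I}X_i$ is genuinely extensional, is also right and in fact supplies the first assertion of Lemma~\ref{lemma:ext-PCS-cart-prod}.
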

Remember that the elements of $\Pcoh{\Extc{\Orth X}}$ are the
$u'\in\prod_{i\in I}\Pcoh{\Orth{\Extc{X_i}}}$
such that $\sum_{i\in I}\Norm{\Matappa{\Proj i}{u'}}_{\Orth{X_i}}\leq 1$. 

\begin{lemma}\label{lemma:ext-PCS-cart-prod}
  The pre-extensional PCS $X=\Bwith_{i\in I}X_i$ is an extensional
  PCS. For each $i\in I$, the projection
  $\Proj j\in\PCOH(\Extc X,\Extc{X_i})$ belongs to $\EPCOH(X,X_i)$
  and, equipped with these projections, $X$ is the cartesian product of
  the $X_i$'s in $\EPCOH$.
\end{lemma}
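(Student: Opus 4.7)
The plan has three parts: (a) verify that $X = \Bwith_{i\in I} X_i$ with the given componentwise pre-extensional structure is actually an extensional PCS, (b) show each projection $\Proj j$ is a morphism of $\EPCOH$, and (c) establish the universal property by tupling.

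For (a), the key is Lemma~\ref{lemma:epcs-plus-charact}, which already gives a componentwise description of the dual pre-extensional structure on $\Bplus_{i\in I}\Orth{\Extc{X_i}}$ using the extensional PCS $\Orth{X_i}$ on each component (well defined since each $X_i$ is extensional). Applying the same componentwise recipe one more time produces a pre-extensional structure on $\Biorth{\Extc X}=\Extc X$ whose $i$-th component is $(\Extd{\Biorth{X_i}},\Exto{\Biorth{X_i}})=(\Extd{X_i},\Exto{X_i})$; this is exactly the original $(\Extd X,\Exto X)$. Combined with $\cU\subseteq\Biorth\cU$ from Lemma~\ref{lemma:ext-struct-dual-monotone}, this yields $(\Extd X,\Exto X)=\Biorth{(\Extd X,\Exto X)}$, so $X$ is indeed an extensional PCS.

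Part (b) is immediate from the definition of $(\Extd X,\Exto X)$: for any $u\in\Extd X$ we have $\Matappa{\Proj j}u\in\Extd{X_j}$, and $u\Exto X v$ gives $\Matappa{\Proj j}u\Exto{X_j}\Matappa{\Proj j}v$, which are exactly the two conditions of Lemma~\ref{lemma:ext-struct-limpl-charact} needed for $\Proj j\in\EPCOH(X,X_j)$. For (c), given an extensional PCS $Y$ and morphisms $t(i)\in\EPCOH(Y,X_i)$ for $i\in I$, the tupling $t=\Tuple{t(i)}_{i\in I}\in\PCOH(\Extc Y,\Extc X)$ is the unique morphism in $\PCOH$ with $\Proj j\Compl t=t(j)$. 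To upgrade this to $\EPCOH$, I would check two componentwise conditions: for $u\in\Extd Y$ we have $\Matappa{\Proj j}{\Matappa tu}=\Matappa{t(j)}u\in\Extd{X_j}$, hence $\Matappa tu\in\Extd X$; and for $u\Exto Y v$ we have $\Matappa{t(j)}u\Exto{X_j}\Matappa{t(j)}v$, hence $\Matappa tu\Exto X\Matappa tv$. Both hold since each $t(j)$ is a morphism of $\EPCOH$. Uniqueness transfers directly from $\PCOH$.

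The only delicate point is part (a), since it rests on Lemma~\ref{lemma:epcs-plus-charact} genuinely characterizing the biorthogonal of the componentwise pre-extensional structure, not an independently postulated description; granting this, the biorthogonal closure of $(\Extd X,\Exto X)$ follows by symmetry between $X$ and $\Orth X$. Parts (b) and (c) are then routine componentwise translations of the defining properties.
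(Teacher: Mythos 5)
Your argument is correct, and since the paper states this lemma without proof, your outline is essentially the intended one: parts (b) and (c) really are immediate componentwise translations of the definitions, and all the content sits in the biorthogonal closure of part (a). One caveat on that step: Lemma~\ref{lemma:epcs-plus-charact} as stated computes the dual of the componentwise structure on a $\Bwith$-product of extensional PCSs, so your ``second application'' is not literally an instance of it but its mirror statement for the componentwise structure on $\Bplus_{i\in I}\Orth{\Extc{X_i}}$ --- a statement you would still have to prove (the proof is the symmetric computation, using that each injection $\Inj i$ sends $\Extd{\Orth{X_i}}$ into the componentwise structure on the coproduct and preserves the preorder, the off-diagonal components being $0$ and $0\Extleq 0$ by Proposition~\ref{prop:pre-ext-basic-props}). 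A slightly cleaner packaging, parallel to the proof of Lemma~\ref{lemma:ext-struct-limpl-charact}, is to exhibit $(\Extd X,\Exto X)$ directly as $\Orth{\cV}$, where $\cV$ is the pre-extensional structure on $\Bplus_{i\in I}\Orth{\Extc{X_i}}$ whose elements are the $\Inj i{v'}$ for $v'\in\Extd{\Orth{X_i}}$ and whose relation consists of the pairs $(\Inj i{v'(1)},\Inj i{v'(2)})$ with $v'(1)\Exto{\Orth{X_i}}v'(2)$; then closure under biorthogonal is automatic from Lemma~\ref{lemma:ext-struct-dual-monotone} and no second dualization is needed.
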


If all $X_i$'s are the same extensional PCS $X$, we use the notation
$X^I$ for the extensional PCS $\Bwith_{i\in I}X_i$. Notice that
$\Web{\Extc{X^I}}=I\times\Web{\Extc X}$.

\begin{remark}
  One should observe that the constructions introduced so far preserve
  discreteness. More precisely, if $X$ and $Y$ are discrete extensional
  PCSs, so are $\Limpl XY$, $\Tens XY$, $\Orth X$ etc, and if
  $(X_i)_{i\in I}$ is an at most countable family of discrete
  extensional PCSs, so are $\Bwith_{i\in I}X_i$ and
  $\Bplus_{i\in I}X_i$. This is not the case of the exponentials.
\end{remark}

\subsubsection{The exponentials}
Let $X$ be an extensional PCS. We define
$\Excl X=(\Excl{\Extc X},\Biorth{(\Extd{\Excl X}^0,\Exto{\Excl
    X}^0)})$ where the pre-extensional structure
$(\Extd{\Excl X}^0,\Exto{\Excl X}^0)$ on $\Excl{\Extc X}$ is defined
by:
\begin{align*}
  \Extd{\Excl X}^0&=\Eset{\Prom u\St u\in\Extd X}\\
  \mathord{\Exto{\Excl X}^0}&=\Eset{(\Prom{u(1)},\Prom{u(2)})\St
  u(1)\Exto X u(2)}\,.
\end{align*}
The main consequence of this definition is the following.
\begin{theorem}\label{th:exp-ext-charact}
  Let $X$ and $Y$ be extensional PCSs.
  \begin{itemize}
  \item Given $t\in\Pcoh(\Limpl{\Excl{\Extc X}}{\Extc Y})$, one has
    $t\in\Extd{\Limpl{\Excl X}{Y}}$ iff
    $\forall u\in\Extd X\ \Matappa t{\Prom u}\in\Extd Y$ and for any
    $u(1)\Exto X u(2)$, one has
    $\Matappa t{\Prom{u(1)}}\Exto Y\Matappa t{\Prom{u(2)}}$.
  \item Given $t(1),t(2)\in\Extd{\Limpl{\Excl X}{Y}}$, one has
    $t(1)\Exto{\Limpl{\Excl X}{Y}}t(2)$ iff for all $u\in\Extd X$, one
    has $\Matappa{t(1)}{\Prom u}\Exto Y\Matappa{t(2)}{\Prom u}$.
  \end{itemize}
\end{theorem}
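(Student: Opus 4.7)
The plan is to apply Lemma~\ref{lemma:ext-struct-limpl-charact} to the linear hom $\Limpl{\Excl X}{Y}$: this reduces $t \in \Extd{\Limpl{\Excl X}{Y}}$ to the two conditions that $\Matappa tw \in \Extd Y$ for every $w \in \Extd{\Excl X}$ and that $w(1) \Exto{\Excl X} w(2)$ implies $\Matappa t{w(1)} \Exto Y \Matappa t{w(2)}$, and reduces $t(1) \Exto{\Limpl{\Excl X}{Y}} t(2)$ to $\Matappa{t(1)}{w} \Exto Y \Matappa{t(2)}{w}$ for every $w \in \Extd{\Excl X}$. Writing $\cU_0 = (\Extd{\Excl X}^0, \mathord{\Exto{\Excl X}^0})$, by Lemma~\ref{lemma:ext-struct-dual-monotone} we have $\cU_0 \subseteq \Biorth{\cU_0} = (\Extd{\Excl X}, \mathord{\Exto{\Excl X}})$, so the ``only if'' direction of both bullets is immediate: the conditions at every $w$ specialize to conditions at every $\Prom u$.

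For the converse, the key move is Lemma~\ref{lemma:matapp-eval}, which lets us rewrite $\Eval{\Matappa tw}{v'} = \Eval w{\Matappa{\Orth t}{v'}}$ and so transform statements about the action of $t$ on $w \in \Extd{\Excl X}$ into pairings of $w$ with elements of $\Orth{\Excl{\Extc X}}$, where we can exploit $w \in \Biorth{\cU_0}$. For the first bullet, assume the two $\Prom u$-conditions hold and fix $w \in \Extd{\Excl X}$. To show $\Matappa tw \in \Extd Y$, pick $v'(1) \Exto{\Orth Y} v'(2)$; using Lemma~\ref{lemma:matapp-eval} twice, the desired inequality $\Eval{\Matappa tw}{v'(1)} \leq \Eval{\Matappa tw}{v'(2)}$ follows once we check $\Matappa{\Orth t}{v'(1)} \Exto{\Orth{\cU_0}} \Matappa{\Orth t}{v'(2)}$; unfolding the dual preorder on $\Orth{\cU_0}$, this is exactly the inequality $\Eval{\Matappa t{\Prom u}}{v'(1)} \leq \Eval{\Matappa t{\Prom u}}{v'(2)}$ for each $u \in \Extd X$, which is the hypothesis $\Matappa t{\Prom u} \in \Extd Y$ applied to the comparison $v'(1) \Exto{\Orth Y} v'(2)$. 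Similarly, given $w(1) \Exto{\Excl X} w(2)$ and $v' \in \Extd{\Orth Y}$, the required inequality $\Eval{w(1)}{\Matappa{\Orth t}{v'}} \leq \Eval{w(2)}{\Matappa{\Orth t}{v'}}$ follows from $w(1) \Exto{\Excl X} w(2)$ once we check $\Matappa{\Orth t}{v'} \in \Orth{\Extd{\Excl X}^0}$, which unfolds to $\Eval{\Matappa t{\Prom{u(1)}}}{v'} \leq \Eval{\Matappa t{\Prom{u(2)}}}{v'}$ for $u(1) \Exto X u(2)$; this is the hypothesis $\Matappa t{\Prom{u(1)}} \Exto Y \Matappa t{\Prom{u(2)}}$ against $v' \in \Extd{\Orth Y}$.

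The second bullet is proved by the same pattern: assuming $\Matappa{t(1)}{\Prom u} \Exto Y \Matappa{t(2)}{\Prom u}$ for all $u \in \Extd X$ and fixing $w \in \Extd{\Excl X}$ and $v' \in \Extd{\Orth Y}$, the inequality $\Eval{\Matappa{t(1)}w}{v'} \leq \Eval{\Matappa{t(2)}w}{v'}$ is rewritten via Lemma~\ref{lemma:matapp-eval} as $\Eval{w}{\Matappa{\Orth{t(1)}}{v'}} \leq \Eval{w}{\Matappa{\Orth{t(2)}}{v'}}$ and reduced by $w \in \Biorth{\cU_0}$ to $\Matappa{\Orth{t(1)}}{v'} \Exto{\Orth\cU_0} \Matappa{\Orth{t(2)}}{v'}$, which is the hypothesis after a second application of Lemma~\ref{lemma:matapp-eval}.

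The only delicate part is the bookkeeping between a pre-extensional structure and its biorthogonal: each of the two tests (``$w$ is extensional'' and ``$w(1) \Exto{\Excl X} w(2)$'') in $\Biorth{\cU_0}$ unfolds, via the dual of $\cU_0$, into a condition that happens to be precisely testable on the generators $\Prom u$, and one has to pick the correct dual direction at each step. Once Lemma~\ref{lemma:matapp-eval} is used systematically to transport everything across the adjunction $t \leftrightarrow \Orth t$, all four implications collapse to direct applications of the hypotheses.
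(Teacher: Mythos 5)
The paper states Theorem~\ref{th:exp-ext-charact} without proof, so there is nothing to compare against line by line; your argument is correct and is clearly the intended one: reduce via Lemma~\ref{lemma:ext-struct-limpl-charact} to conditions quantified over $\Extd{\Excl X}=\Biorth{\cU_0}$, get the easy direction from $\cU_0\subseteq\Biorth{\cU_0}$, and get the converse by transporting everything through $\Orth t$ with Lemma~\ref{lemma:matapp-eval} so that the biorthogonal closure does the work. The only point worth tightening is the bookkeeping you flag yourself at the end: since the relation of $\Orth{\cU_0}$ is only defined on its first component $\cE'$, the assertion $\Matappa{\Orth t}{v'(1)}\mathrel{\Extleq'}\Matappa{\Orth t}{v'(2)}$ does not unfold \emph{only} to the inequalities $\Eval{\Matappa t{\Prom u}}{v'(1)}\leq\Eval{\Matappa t{\Prom u}}{v'(2)}$; it also requires $\Matappa{\Orth t}{v'(i)}\in\cE'$, which comes from the \emph{other} hypothesis (the $\Exto X$-monotonicity of $u\mapsto\Matappa t{\Prom u}$), and symmetrically in the second check. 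The same membership obligation appears in the second bullet, where it is discharged by $t(i)\in\Extd{\Limpl{\Excl X}{Y}}$. Since both facts are available from your hypotheses, this is a presentational gap, not a mathematical one.
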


Now we can derive important consequences of that result.
\begin{theorem}\label{th:epcs-excl-comonad}
  Given $t\in\EPCOH(X,Y)$, one has
  $\Excl t\in\EPCOH(\Excl X,\Excl Y)$. Moreover,
  $\Der{\Extc X}\in\EPCOH(\Excl X,X)$ and
  $\Digg{\Extc X}\in\EPCOH(\Excl X,\Excl{\Excl X})$.
\end{theorem}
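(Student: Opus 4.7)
The plan is to apply Theorem~\ref{th:exp-ext-charact} three times, once for each of the three morphisms. In each case, the work reduces to evaluating the morphism on promotions $\Prom u$ for $u\in\Extd X$ and checking (i) that the value lies in the right extensional set and (ii) that monotonicity with respect to $\Exto{X}$ is preserved. The fundamental reason this works is that the exponential $\Excl{\Extc X}$ is, in $\PCOH$, entirely controlled by the vectors $\Prom u$ (via biorthogonal closure), and the extensional pre-structure $(\Extd{\Excl X}^0,\Exto{\Excl X}^0)$ has been chosen precisely so that $\Prom u\in\Extd{\Excl X}^0\subseteq\Extd{\Excl X}$ whenever $u\in\Extd X$, and analogously for $\Exto{\Excl X}$.

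First I would handle $\Excl t$ for $t\in\EPCOH(X,Y)$. By Theorem~\ref{th:exp-ext-charact} applied with $Y$ replaced by $\Excl Y$, I need: for every $u\in\Extd X$, $\Matappa{\Excl t}{\Prom u}\in\Extd{\Excl Y}$; and for every $u(1)\Exto X u(2)$, $\Matappa{\Excl t}{\Prom{u(1)}}\Exto{\Excl Y}\Matappa{\Excl t}{\Prom{u(2)}}$. The defining equation $\Matappa{\Excl t}{\Prom u}=\Prom{(\Matappa tu)}$ reduces both conditions to the hypothesis that $t\in\EPCOH(X,Y)$: $\Matappa tu\in\Extd Y$ gives $\Prom{(\Matappa tu)}\in\Extd{\Excl Y}^0\subseteq\Extd{\Excl Y}$, and $\Matappa t{u(1)}\Exto Y\Matappa t{u(2)}$ gives $\Prom{(\Matappa t{u(1)})}\Exto{\Excl Y}^0\Prom{(\Matappa t{u(2)})}$, hence $\Exto{\Excl Y}$ after the biorthogonal closure.

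Second, for $\Der{\Extc X}\in\EPCOH(\Excl X,X)$, I use the same characterization, this time with source $\Excl X$ and target $X$: since $\Matappa{\Der{\Extc X}}{\Prom u}=u$, the two required conditions become exactly ``$u\in\Extd X$'' and ``$u(1)\Exto X u(2)$'', which hold by assumption. Third, for $\Digg{\Extc X}\in\EPCOH(\Excl X,\Excl{\Excl X})$, the defining equation is $\Matappa{\Digg{\Extc X}}{\Prom u}=\Promm u=\Prom{(\Prom u)}$. Because $u\in\Extd X$ implies $\Prom u\in\Extd{\Excl X}^0\subseteq\Extd{\Excl X}$, we obtain $\Prom{(\Prom u)}\in\Extd{\Excl{\Excl X}}^0\subseteq\Extd{\Excl{\Excl X}}$; and similarly $u(1)\Exto X u(2)$ yields $\Prom{u(1)}\Exto{\Excl X}\Prom{u(2)}$, then $\Prom{(\Prom{u(1)})}\Exto{\Excl{\Excl X}}\Prom{(\Prom{u(2)})}$, as required.

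There is no real obstacle here: the only subtle point is the inclusion $\Extd{\Excl X}^0\subseteq\Extd{\Excl X}$ (and analogously for the preorders), which is the content of Lemma~\ref{lemma:ext-struct-dual-monotone} asserting $\cU\subseteq\Biorth\cU$. All the hard work has been loaded into Theorem~\ref{th:exp-ext-charact}, so each verification becomes a one-line application of an already-known computation in $\PCOH$ combined with the hypotheses on $t$ (or trivially for $\Der{}$ and $\Digg{}$).
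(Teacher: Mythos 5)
Your proposal is correct and follows exactly the route the paper intends: the theorem is stated as an immediate consequence of Theorem~\ref{th:exp-ext-charact}, and your three applications of it (using $\Matappa{\Excl t}{\Prom u}=\Prom{(\Matappa tu)}$, $\Matappa{\Der{\Extc X}}{\Prom u}=u$, $\Matappa{\Digg{\Extc X}}{\Prom u}=\Promm u$ together with the inclusion of the generating pre-extensional structure in its biorthogonal from Lemma~\ref{lemma:ext-struct-dual-monotone}) supply precisely the details the paper leaves implicit. No gaps.
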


Of course the diagram commutations which hold in $\PCOH$ relative to
this constructs still hold in $\EPCOH$ (composition is identical in
both categories) and so $\Excl\_$ equipped with these two natural
transformation is a comonad.

The Seely isomorphisms of $\PCOH$ are easily checked to be morphisms
in $\EPCOH$. The case of $\Seelyz$ is straightforward so we deal only with
$\Seelyt\in\PCOH(\Tens{\Excl{\Extc X}}{\Excl{\Extc
    Y}},\Exclp{\With{\Extc X}{\Extc Y}})$, which is the isomorphism in
$\PCOH$ given by
\begin{align*}
  \Seelyt_{(\Mset{\List a1m},\Mset{\List b1n}),\rho}
  =
  \begin{cases}
    1 &\text{if }\rho=\Mset{(1,a_1),\dots,(1,a_m),(2,b_1),\dots,(2,b_n)}\\
    0 &\text{otherwise}
  \end{cases}
\end{align*}
which satisfies
\(
  \Matapp{\Seelyt}{\Tensp{\Prom u}{\Prom v}}=\Prom{\Tuple{u,v}}
\).
We know that
$\Curlin{\Seelyt}\in\PCOH(\Excl{\Extc X},\Limpl{\Excl{\Extc
    Y}}{\Exclp{\With{\Extc X}{\Extc Y}}})$ and we prove that actually
$t=\Curlin{\Seelyt}\in\EPCOH(\Excl X,\Limpl{\Excl Y}{\Exclp{\With
    XY}})$, applying again Theorem~\ref{th:exp-ext-charact}.

Let $u\in\Extd X$, we prove that
$\Matappa t{\Prom u\in\Extd{\Limpl{\Excl Y}{\Exclp{\With XY}}}}$.
So let $v\in\Extd Y$, we have
$\Matappa{(\Matappa t{\Prom u})}{\Prom
  v}=\Prom{\Tuple{u,v}}\in\Extd{\Exclp{\With XY}}$ because
$\Tuple{u,v}\in\Extd{\With XY}$. Let $v(1)\Exto Yv(2)$, we have
$\Matappa{(\Matappa t{\Prom u})}{\Prom{v(1)}}
=\Prom{\Tuple{u,v(1)}}\Exto{\Exclp{\With
    XY}}\Prom{\Tuple{u,v(2)}}=\Matappa{(\Matappa t{\Prom
    u})}{\Prom{v(2)}}$ because
$\Tuple{u,v(1)}\Exto{\With XY}\Tuple{u,v(1)}$ by definition of
$\With XY$.
Next let $u(1)\Exto X u(2)$ and let use check that
$\Matappa t{\Prom{u(1)}} \Exto{\Limpl{\Excl Y}{\Exclp{\With
      XY}}}\Matappa t{\Prom{u(2)}}$. So let $v\in\Extd Y$, it suffices
to prove that
$\Matappa{(\Matappa t{\Prom{u(1)}})}{\Prom v} \Exto{\Exclp{\With XY}}
\Matappa{(\Matappa t{\Prom{u(2)}})}{\Prom v}$ which is obtained exactly
as above.

Last we have also to prove that
$\Funinv{(\Seelyt)}\in\EPCOH(\Exclp{\With XY},\Tens{\Excl X}{\Excl
  Y})$; this is an easy consequence of
Theorem~\ref{th:exp-ext-charact}, of the fact that
$\Extd{\With XY}=\Extd X\times\Extd Y$ (up to the isomorphism between
$\Pcohp{\With XY}$ and $\Pcoh X\times\Pcoh Y$) and similarly,
$\Exto{\With XY}$ is the product preorder and 
$\Matappa{\Funinv{(\Seelyt)}}{\Prom{\Tuple{u,v}}}=\Tens{\Prom u}{\Prom
  v}$.

This ends the proof that $\EPCOH$ is a (new) Seely category
(see~\cite{Mellies09}), that is, a categorical model of classical linear
logic.

As a consequence, the Kleisli category $\Kl\EPCOH$ of $\Excl\_$ over
$\EPCOH$ is cartesian closed.

Given $t\in\Kl\PCOH(\Extc X,\Extc Y)=\PCOH(\Excl{\Extc X},\Extc Y)$,
we use $\Fun t$ for the corresponding function
$\Pcoh{\Extc X}\to\Pcoh{\Extc Y}$ defined by
$\Fun t(u)=\Matapp t{\Prom u}$. Such a $t$ is a morphism of
$\Kl\EPCOH$ iff $\forall u\in\Extd X\ \Fun t(u)\in\Extd Y$. Moreover
such a $t$ is $\Extleq$-monotone, that is
\begin{align*}
  u\Exto X v\Implies\Klapp tu\Exto Y\Klapp tv
\end{align*}
and also, given $s,t\in\Kl\EPCOH(X,Y)$, one has
\begin{align*}
  s\Exto{\Limpl{\Excl X}{Y}}t\Equiv
  \forall u\in\Extd X\ \Klapp su\Exto Y\Klapp tu
\end{align*}
that is, $\Exto{\Limpl{\Excl X}{Y}}$ is the pointwise (pre)order on
functions.

We use $\Simpl XY$ for the object of morphisms from $X$ to $Y$ in that
category, which is $\Limpl{\Excl X}{Y}$, and is equipped with the
evaluation morphism $\Ev\in\Kl\EPCOH(\With{(\Impl XY)}{X},Y)$
characterized of course by $\Fun\Ev(s,u)=\Fun s(u)$.

\begin{example}\label{ex:decreasing-ext-sequence}
  Let $X$ be the extensional PCS $\Simpl\One\One$. Up to a trivial
  iso, an element of $\Pcoh{\Extc X}$ is a family $s\in\Realpto\Nat$
  such that $\sum_{n\in\Nat}s_n\leq 1$ and the associated function
  $\Fun s:\Intercc 01\to\Intercc 01$ is given by
  $\Fun s(u)=\sum_{n\in\Nat}s_nu^n$. Since $\One$ is discrete, one has
  $\Extd X=\Pcoh{\Extc X}$, and given $s,t\in\Extd X$, one has
  \begin{align*}
    s\Exto Xt\Equiv\forall u\in\Intercc 01\ \Fun s(u)\leq\Fun s(t)\,.
  \end{align*}
  For each $n\in\Nat$, let $e(n)$ be the element of $\Pcoh{\Extc X}$
  such that $e(n)_i=\Kronecker ni$, that is $\Fun{e(n)}(u)=u^n$, one
  has $\forall n\in\Nat\ e(n+1)\Exto Xe(n)$ since
  $\forall u\in\Intercc 01\ u^{n+1}\leq u^n$. So $(e(n))_{n\in\Nat}$
  is an $\Exto X$-decreasing $\omega$-chain which has $0$ as
  glb. Notice that this glb is not the pointwise glb of the $e(n)$'s
  considered as functions since $\inf_{n\in\Nat}\Fun{e(n)}(1)=1$
  (glb.~computed in $\Intercc 01$). Observe also that the $e(n)$'s are
  pairwise unbounded for the standard $\leq$ order on
  $\Pcoh{\Extc X}$ (each of them is $\leq$-maximal in this PCS).
\end{example}

\subsubsection{$\EPCOH$ is an enriched category over preorders}
\label{sec:pcoh-ext-enriched-model}
Given objects $X$ and $Y$ of $\EPCOH$, we can equip $\EPCOH(X,Y)$ with
the preorder relation $\Exto{\Limpl XY}$ that we denote as
$\Exto{X,Y}$. In other words, given $s,t\in\EPCOH(X,Y)$, one has
$s\Exto{X,Y}t$ iff
$\forall u\in\Extd X\ \Matappa su\Exto Y\Matappa tu$. This turns
$\EPCOH$ into an enriched category over the monoidal category of
partial order (with the usual product of preorders as monoidal
product) and actually into an ``enriched Seely category'' in the sense
that all the constructions involved in the definition of $\EPCOH$ as a
Seely category are $\Exto{}$-monotone. For instance, if
$s,t\in\EPCOH(X,Y)$ satisfy $s\Exto{X,Y}t$, then
$\Excl s\Exto{\Excl X,\Excl Y}\Excl t$ as an easy consequence of
Theorem~\ref{th:exp-ext-charact}.

\subsubsection{General recursion, fixpoints}

\begin{theorem}\label{th:ext-pcoh-fixpoint}
  Let $t\in\Kl\EPCOH(X,X)$, the least fixpoint of $t$ in $\Pcoh X$,
  which is $\sup_{n\in\Nat}\Fun t^n(0)$, belongs to $\Extd X$. As a
  consequence, the $\PCOH$ least fixpoint operator
  $\Fixpcoh{\Extc Y}\in\PCOH(\Impl{\Extc Y}{\Extc Y},\Extc Y)$
  (characterized by $\Fun{\Fixpcoh Y}(t)=\sup_{n\in\Nat}\Fun t^n(0)$)
  actually belongs to $\Kl\EPCOH(\Impl YY,Y)$.
\end{theorem}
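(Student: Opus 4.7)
The plan is to use Theorem~\ref{th:exp-ext-charact} to reduce the claim to checking two facts about $\Fun t$ and then to exploit the closure properties of extensional structures listed in Proposition~\ref{prop:pre-ext-basic-props}. The underlying idea is standard Kleene iteration, transported into the extensional preorder.

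First I would fix $t \in \Kl\EPCOH(X,X)$ and set $u_n = \Fun t^n(0)$. By Theorem~\ref{th:exp-ext-charact}, the hypothesis $t \in \Extd{\Limpl{\Excl X}{X}}$ gives two things: $\Fun t$ sends $\Extd X$ to $\Extd X$, and it is $\Exto X$-monotone on $\Extd X$. Since $0 \in \Extd X$ (Proposition~\ref{prop:pre-ext-basic-props}), a trivial induction shows $u_n \in \Extd X$ for every $n$. Simultaneously $(u_n)$ is $\leq$-increasing by the usual argument (since $\Fun t$ is $\leq$-monotone), so Proposition~\ref{prop:pre-ext-basic-props} ensures the lub $\sup_n u_n$ again belongs to $\Extd X$. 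This is exactly $\Fun{\Fixpcoh{\Extc X}}(t)$, proving the first claim.

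For the second claim I would apply Theorem~\ref{th:exp-ext-charact} to $\Fixpcoh{\Extc Y}$ viewed as an element of $\PCOH(\Excl{\Extc{\Limpl YY}},\Extc Y)$. The first condition (that $\Fun{\Fixpcoh Y}(t) \in \Extd Y$ whenever $t \in \Extd{\Impl YY}$) is precisely what we just proved. For the extensional monotonicity condition, given $t_1 \Exto{\Impl YY} t_2$ in $\Extd{\Impl YY}$, I would show by induction on $n$ that $\Fun{t_1}^n(0) \Exto Y \Fun{t_2}^n(0)$, with both sides in $\Extd Y$. The base case is reflexivity of $\Exto Y$. For the step, writing $v_i = \Fun{t_i}^n(0) \in \Extd Y$ with $v_1 \Exto Y v_2$, extensional monotonicity of $t_1$ gives $\Fun{t_1}(v_1) \Exto Y \Fun{t_1}(v_2)$; the hypothesis $t_1 \Exto{\Impl YY} t_2$ gives $\Fun{t_1}(v_2) \Exto Y \Fun{t_2}(v_2)$; and transitivity (Proposition~\ref{prop:pre-ext-basic-props}) yields $\Fun{t_1}^{n+1}(0) \Exto Y \Fun{t_2}^{n+1}(0)$. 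Taking lubs on both sides, closure of $\Exto Y$ under lubs of $\leq$-increasing $\omega$-chains (again Proposition~\ref{prop:pre-ext-basic-props}) gives $\Fun{\Fixpcoh Y}(t_1) \Exto Y \Fun{\Fixpcoh Y}(t_2)$.

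The only mildly delicate point is that the inductive step uses extensional monotonicity of $t_1$ applied to two elements of $\Extd Y$, which is why we needed $v_1, v_2 \in \Extd Y$ (not merely in $\Pcoh{\Extc Y}$); this is guaranteed by the first part of the proof applied separately to $t_1$ and $t_2$. Apart from this bookkeeping, no new construction is required, since everything follows from the characterization of $\Excl{}$-morphisms in Theorem~\ref{th:exp-ext-charact} and the structural closure properties of extensional structures.
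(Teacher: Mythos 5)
Your treatment of the first statement coincides with the paper's: $0\in\Extd X$, an induction giving $\Fun t^n(0)\in\Extd X$, and the closure of $\Extd X$ under lubs of $\leq$-increasing chains from Proposition~\ref{prop:pre-ext-basic-props}. For the second statement you take a genuinely different, and equally valid, route. The paper avoids any further computation by a categorical bootstrap: it forms $X=\Simpl{(\Simpl YY)}{Y}$ and the morphism $t\in\Kl\EPCOH(X,X)$ sending $F$ to the map $s\mapsto\Fun s(\Fun F(s))$ (which exists by cartesian closedness of $\Kl\EPCOH$), observes that the least fixpoint of $\Fun t$ is exactly $\Fixpcoh{\Extc Y}$, and then invokes the first part of the theorem at this higher type. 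You instead verify directly, via Theorem~\ref{th:exp-ext-charact}, the two conditions characterizing membership of $\Fixpcoh{\Extc Y}$ in $\Extd{\Limpl{\Excl{(\Impl YY)}}{Y}}$: the first condition is the first part of the theorem, and the second is your induction showing $\Fun{t_1}^n(0)\Exto Y\Fun{t_2}^n(0)$, concluded with the lub-closure clause of Proposition~\ref{prop:pre-ext-basic-props}. That induction is sound: the step correctly chains extensional monotonicity of $t_1$ applied to the two iterates (which, as you note, must first be known to lie in $\Extd Y$, supplied by the first part applied to $t_1$ and $t_2$), the pointwise hypothesis $t_1\Exto{\Impl YY}t_2$ evaluated at $v_2\in\Extd Y$, and transitivity. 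The paper's route buys brevity, since the extensional monotonicity of Kleene iteration in the functional argument comes for free from the higher-type instance and no second induction is needed; your route buys transparency, making explicit which clauses of Proposition~\ref{prop:pre-ext-basic-props} and Theorem~\ref{th:exp-ext-charact} carry the argument, at the cost of the extra bookkeeping you flag.
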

\Beginproof
For the first statement, remember that $0\in\Extd X$ by
Proposition~\ref{prop:pre-ext-basic-props} and hence by a
straightforward induction $\forall n\in\Nat\ \Fun t^n(0)\in\Extd
X$. Therefore, by Proposition~\ref{prop:pre-ext-basic-props} again, we
have $\sup_{n\in\Nat}\Fun t^n(0)\in\Extd X$.

The second part of the theorem is proven by applying the first part in
the following special case: $X=(\Simpl{(\Simpl YY)}{Y})$ and
$t\in\Kl\EPCOH(X,X)$ is characterized by
$\Fun{\Fun t(F)}(s)=\Fun s(\Fun F(\Fun s))$ for $F\in\Pcoh{X}$ and
$s\in\Pcohp{\Simpl YY}$. The existence of $t$ results from the
cartesian closeness of $\Kl\EPCOH$. Then the least fixed point of
$\Fun t$ is $\Fixpcoh{\Extc Y}$ and this prove our contention by the
first part of the theorem.
\Endproof

\begin{example}\label{ex:unit-upper-fixpoint}
  Let again $X$ be the extensional PCS $\Simpl\One\One$.  If we are
  given $F\in\Extd{\Simpl XX}$, we know that $\Fun F$ has a least
  fixpoint $t\in\Extd X$ given by $t=\sup_{n\in\Nat}\Fun
  F^n(0)$. Given $u\in\Intercc 01$, we know that
  $\Fun t(u)\in\Intercc 01$, and if we set $t(n)=\Fun F^n(0)$, we have
  $t(n)\leq t$ and hence $\Fun t(u)\geq\Fun{t(n)}(u)$ and hence
  $\Fun{t(n)}(u)$ gives us a lower approximation of $\Fun t(u)$. We
  even know that $\Fun t(u)$ is the lub of these approximations but
  this gives us no clue on how good a given approximation
  $\Fun{t(n)}(u)$ is (how far it is from the target value
  $\Fun t (u)$).

  One main feature of the $\Exto X$ relation is that it has a maximal
  element, namely $e(0)$ (notations of
  Example~\ref{ex:decreasing-ext-sequence}) that we simply denote as
  $1$, since it represents the constant function $1$, in sharp
  contrast with the $\leq$ relation on $\Pcoh{\Extc X}$. Since
  $F\in\Extd{\Simpl XX}$, the function $\Fun F$ is
  $\Exto X$-monotonic, and hence $(s(n)=\Fun F^n(1))_{n\in\Nat}$ is an
  $\Exto X$-decreasing sequence. Therefore the sequence
  $(\Fun{s(n)}(u))_{n\in\Nat}$ is decreasing in $\Intercc 01$ (for the
  usual order relation which coincides with $\Exto\One$). Moreover,
  since $0\Exto\One 1$, we have
  $\forall n\in\Nat\ \Fun F^n(0)\Exto X\Fun F^n(1)$ by induction on
  $n$ and hence for all $n\in\Nat$, $\Fun t(u)\leq\Fun{s(n)}(u)$. In
  particular, given $\epsilon>0$, if we find $n\in\Nat$ such that
  $\Fun{t(n)}(u)-\Fun{s(n)}(u)\leq\epsilon$, we are certain that
  $\Fun{t(n)}(u)$ and $\Fun{s(n)}(u)$ are at most at distance
  $\epsilon$ of the probability $\Fun t(u)$ we are interested in.

  It may happen that for some $\epsilon>0$ the condition
  $\Fun{t(n)}(u)-\Fun{s(n)}(u)\leq\epsilon$ never holds. Take for
  instance $F$ to be the identity and $u=1$: in that case
  $\forall n\in\Nat\ \Fun{t(n)}(1)=0\text{ and }\Fun{s(n)}=1$. But if
  we manage to fulfill this condition by taking $n$ big enough, we are
  certain to get and $\epsilon$-approximation whereas having only the
  lower approximations $t(n)$ we could never know, whatever be the value
  of $n$. Moreover we can expect that some reasonable syntactic
  guardedness restrictions on the program defining $F$ will guarantee
  that these $\epsilon$-approximations always exist (such restrictions
  certainly already exist in the rich literature on abstract
  interpretation).
\end{example}

The remainder of the paper is essentially devoted to extending this
idea to more useful datatypes.

\subsection{Flat types with errors}

Another crucial feature of extensional PCSs is that they allow to
build basic data types extended with an ``error'' or ``escape''
element which is maximal for the $\Exto{}$ preorder but not for the
$\leq$ preorder, just as $0$ is minimal, thus allowing to extend the
observation of Example~\ref{ex:unit-upper-fixpoint} to languages
having datatypes like booleans or integers and not just the poorly
expressive unit type.

Given an at most countable set $I$, we defined the ordinary PCS
$\Flat I$ as follows: $\Web{\Flat I}=I$ and
$\Pcohp{\Flat I}=\Eset{u\in\Realpto I\St\sum_{i\in I}u_i\leq 1}$. As
an extensional PCS, it is equipped with the discrete structure.

\subsubsection{General definitions and basic properties}
\label{sec:flate-general}
Now we introduce another object $\Flate I$, with a non-trivial
extensional structure. First we take
$\Extc{\Flate I}=\Plus{\Flat I}{\One_\Nerr}$.

% Now we have to explain how we will interpret the ground type of
% integers in this model. The main idea is to extend the usual
% interpretation by the object $\Snat=(\Nat,\Pcoh\Snat)$ where
% $\Pcoh\Snat=\Eset{x\in\Realpto\Snat\St \sum x_n\leq 1}$ so as to
% obtain an object of $\EPCOH$ where the $\Extleq$ has a maximal
% element.

% So we define the object $\Extnat$ as follows:
% $\Extc\Extnat=\Plus\Snat\One$ (in $\PCOH$) so that
% $\Web\Extnat=\Nat\cup\Eset{\Nerr}$ and $x\in\Pcoh{\Extc\Extnat}$ if
% $\sum_{n\in\Nat}x_n+x_\Nerr\leq 1$.
%
We take $\Extd{\Flate I}=\Pcohp{\Extc{\Flate I}}$ and we are left with
defining $\Exto{\Flate I}$. Given $u,v\in\Pcohp{\Extc{\Flate I}}$ we set
\begin{align*}
  \Invi uv=\Eset{i\in I\St u_i>v_i}
\end{align*}
that is, $\Invi uv$ is the set of all $i\in I$ ``where it is not true
that $u$ is less than $v$'' (the inversion indices) and we stipulate
that $u\Exto{\Flate I} v$ if
\begin{align*}
  \sum_{i\in\Invi uv}(u_i-v_i)\leq v_\Nerr-u_\Nerr\,,
  \text{\quad that is\quad}
  u_\Nerr+\sum_{i\in\Invi uv}u_i\leq v_\Nerr+\sum_{i\in\Invi uv}v_i\,.
\end{align*}
Notice that this condition implies that $u_\Nerr\leq v_\Nerr$.

In other words $u\Exto{\Flate I} v$ means that the difference of
probabilities of the ``error'' $\Nerr$ compensates the sum of all
probability inversions from $u$ to $v$.  In that way we have equipped
the PCS $\Extc{\Flate I}$ with an extensional structure, as we
prove now.
\begin{proposition}
  We have $(\Extd{\Flate I},\Exto{\Flate I})=\Orth{\cU'}$ where
  $\cU'=(\cE',\Extleq')$ is the pre-extensional structure on
  $\Orth{\Extc{\Flate I}}$ defined as follows:
  \begin{itemize}
  \item
    $\cE'=\Eset{u'\in\Pcohp{\Orth{\Extc{\Flate I}}}\St \forall i\in I\
      u'_i\leq u'_\Nerr}$
  \item and $u'\Extleq' v'$ if $u'\leq v'$.
  \end{itemize}
  Therefore $\Flate I=(\Extc{\Flate I},\Exto{\Flate I})$ is an
  extensional PCS. One has
  $\forall u\in\Extd{\Flate I}\ u\Exto{\Flate I}\Base\Nerr$ and it is
  also true that $\Orthp{\Flate I}=(\Orth{\Extc{\Flate I}},\cU')$.
\end{proposition}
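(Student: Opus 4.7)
The plan is to unfold the definition of $\Orth{\cU'}$ and check that it agrees with $(\Extd{\Flate I}, \Exto{\Flate I})$ component by component. The set part of $\Orth{\cU'}$ consists of those $u \in \Pcoh{\Extc{\Flate I}}$ such that $u' \Extleq' v'$ (that is, $u' \leq v'$ componentwise in $\cE'$) forces $\Eval{u'}{u} \leq \Eval{v'}{u}$, which is automatic from the nonnegativity of $u$; so the set part equals $\Pcoh{\Extc{\Flate I}} = \Extd{\Flate I}$. The actual work is the relation: one must show that $\forall u' \in \cE'\ \Eval{u'}{u} \leq \Eval{u'}{v}$ is equivalent to the concrete inequality $u_\Nerr + \sum_{i \in \Invi{u}{v}} u_i \leq v_\Nerr + \sum_{i \in \Invi{u}{v}} v_i$.

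For the forward direction, suppose $u \Exto{\Flate I} v$ and pick an arbitrary $u' \in \cE'$; writing $K = \Invi u v$, I expand
\begin{align*}
\Eval{u'}{v} - \Eval{u'}{u} = \sum_{i \in I \setminus K}u'_i(v_i - u_i) - \sum_{i \in K}u'_i(u_i - v_i) + u'_\Nerr(v_\Nerr - u_\Nerr).
\end{align*}
The first sum is nonnegative by definition of $K$, and bounding $u'_i \leq u'_\Nerr$ on $K$ (using $u' \in \cE'$) reduces the remaining part to $u'_\Nerr\bigl[(v_\Nerr - u_\Nerr) - \sum_{i \in K}(u_i - v_i)\bigr]$, which is $\geq 0$ by hypothesis. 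The converse direction is the only subtle step, and hinges on choosing the right test element of $\cE'$: take $u'$ to be the characteristic function of $K \cup \{\Nerr\}$, which lies in $\cE'$ because every coordinate is at most $u'_\Nerr = 1$; plugging it into $\Eval{u'}u \leq \Eval{u'}v$ then recovers precisely the defining inequality of $u \Exto{\Flate I} v$. The main (modest) obstacle is recognizing this optimal test vector; once found, the rest is a short linear calculation.

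For the remaining assertions the work is routine. The identity $(\Extd{\Flate I}, \Exto{\Flate I}) = \Orth{\cU'}$ exhibits this pair as a biorthogonal, hence an extensional structure by Lemma~\ref{lemma:ext-struct-dual-monotone}, so $\Flate I$ is indeed an extensional PCS. For $u \Exto{\Flate I} \Base\Nerr$, one has $\Invi u{\Base\Nerr} = \{i \in I : u_i > 0\}$ so the defining inequality collapses to $u_\Nerr + \sum_{i \in I} u_i \leq 1$, which is just $u \in \Pcoh{\Extc{\Flate I}}$. Finally, to show $\Orthp{\Flate I} = (\Orth{\Extc{\Flate I}}, \cU')$, I must verify $\Biorth{\cU'} = \cU'$: the inclusion $\cU' \subseteq \Biorth{\cU'}$ is general (Lemma~\ref{lemma:ext-struct-dual-monotone}), and for the converse one tests the biorthogonal set-part condition against the special pair $\Base i \Exto{\Flate I} \Base\Nerr$ (a consequence of what was just proved) to force $u'_i \leq u'_\Nerr$, and evaluates the biorthogonal relation against Dirac masses $\Base j$ for $j \in I \cup \{\Nerr\}$ to recover componentwise $\leq$.
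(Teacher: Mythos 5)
Your proof is correct and follows essentially the same route as the paper's: the set part is trivial by nonnegativity, the forward direction is the same expansion splitting $I$ into $\Invi uv$ and its complement and bounding $u'_i$ by $u'_\Nerr$ on the inversion set, the converse uses the same characteristic-function test vector of $\Invi uv\cup\Eset\Nerr$, and the final identification of $\Orthp{\Flate I}$ tests against $\Base i\Exto{\Flate I}\Base\Nerr$ and Dirac masses exactly as in the paper. No gaps.
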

\Beginproof
Using these notations, let $(\cE,\Extleq)=\Orth{\cU'}$. We have
$\cE=\Extd{\Flate I}=\Pcoh{\Extc{\Flate I}}$ since, if
$u\in\Pcoh{\Extc{\Flate I}}$ and $u'(1)\Extleq'u'(2)$ one has
$u'(1)\leq u'(2)$ and hence $\Eval u{u'(1)}\leq\Eval u{u'(2)}$.
Assume now that $u(1)\Extleq u(2)$ and let us prove that
$u(1)\Exto{\Flate I} u(2)$. Let $u'\in\Pcoh{\Orthp{\Flate\Nat}}$ be defined
by
\begin{align*}
  u'_a=
  \begin{cases}
    1&\text{if }a\in I \text{ and }a\in\Invi{u(1)}{u(2)}\\
    1&\text{if }a=\Nerr\\
    0&\text{otherwise}
  \end{cases}
\end{align*}
then we have $u'\in\cE'$ and hence
$\Eval{u(1)}{u'}\leq\Eval{u(2)}{u'}$ which means exactly that
$u(1)\Exto{\Flate I} u(2)$.
Conversely assume that $u(1)\Exto{\Flate I} u(2)$ and let $u'\in\cE'$,
we have
\begin{align*}
  \Eval{u(1)}{u'}
  =\sum_{i\in I}u(1)_iu'_i+u(1)_\Nerr u'_\Nerr
  &=\sum_{\Biind{i\in I}{i\notin\Invi{u(1)}{u(2)}}}u(1)_iu'_i
    +\sum_{\Biind{i\in I}{i\in\Invi{u(1)}{u(2)}}}u(1)_iu'_i
    +u(1)_\Nerr u'_\Nerr\\
  &\leq\sum_{\Biind{i\in I}{i\notin\Invi{u(1)}{u(2)}}}u(2)_iu'_i
    +\sum_{\Biind{i\in I}{i\in\Invi{u(1)}{u(2)}}}u(1)_iu'_i
    +u(1)_\Nerr u'_\Nerr\\
  &=\sum_{i\in\Nat}u(2)_iu'_i
    +\sum_{\Biind{i\in I}{i\in\Invi{u(1)}{u(2)}}}(u(1)_i-u(2)_i)u'_i
    +u(1)_\Nerr u'_\Nerr\\
  &=\Eval{u(2)}{u'}
    +\sum_{\Biind{i\in I}{i\in\Invi{u(1)}{u(2)}}}(u(1)_i-u(2)_i)u'_i
    +(u(1)_\Nerr-u(2)_\Nerr) u'_\Nerr\\
  &\leq\Eval{u(2)}{u'}
    +\sum_{\Biind{i\in I}{i\in\Invi{u(1)}{u(2)}}}(u(1)_i-u(2)_i)u'_\Nerr
    +(u(1)_\Nerr-u(2)_\Nerr) u'_\Nerr
\end{align*}
because $u(1)_i-u(2)_i\geq 0$ when $i\in\Invi{u(1)}{u(2)}$ and
$u'_i\leq u'_\Nerr$ for each $i\in I$, which shows that
$\Eval{u(1)}{u'}\leq\Eval{u(2)}{u'}$
since $u(1)\Exto{\Flate I} u(2)$ which implies that
\begin{align*}
  \sum_{\Biind{i\in I}{i\in\Invi{u(1)}{u(2)}}}(u(1)_i-u(2)_i)u'_\Nerr
    +(u(1)_\Nerr-u(2)_\Nerr) u'_\Nerr\leq 0\,.
\end{align*}

Next we observe\footnote{This was actually the goal of all this
  construction!} that for all $u\in\Pcoh{\Extc{\Flate I}}$, one has
$u\Exto{\Flate I}\Base\Nerr$: we have
\begin{align*}
  \sum_{i\in\Invi x{\Base\Nerr}}(u_i-(\Base\Nerr)_i)
  =\sum_{i\in I}u_i\leq 1-u_\Nerr
\end{align*}
by definition of $\Pcoh{\Extc{\Flate I}}$ and this
is exactly the definition of $u\Exto{\Flate I}\Base\Nerr$.

Last let us prove that $\cU'=\Orth{(\Extd{\Flate I},\Exto{\Flate I})}$. The
direction $\cU'\subseteq\Orth{(\Extd{\Flate I},\Exto{\Flate I})}$ results
from what we have proven so far so let us prove the converse, and let
us introduce the notation $(\cE^\bullet,\Extleq^\bullet)$ for
$\Orth{(\Extd{\Flate I},\Exto{\Flate I})}$.

Let $u'\in\cE^\bullet$ and $i\in I$, we have
$\Base i\Exto{\Flate I}\Base\Nerr$ and hence
$\Eval{\Base i}{u'}\leq\Eval{\Base\Nerr}{u'}$ which proves
that $\forall i\in I\ u'_i\leq u'_\Nerr$, that is
$u'\in\cE'$.
Assume next that $u(1)'\Extleq^\bullet u(2)'$ and let
$a\in\Web{\Extc{\Flate I}}$, then since $\Base a\in\Extd{\Flate I}$ we have
$\Eval{\Base a}{u'(1)}\leq\Eval{\Base a}{u'(2)}$, that is
$u'(1)_a\leq u'(2)_a$ so that $u'(1)\leq u'(2)$.
\Endproof

\begin{example}
  The extensional PCS $\Flate\emptyset$ coincides with $\One$
  equipped with its discrete extensional structure.

  The elements of $\Extd{\Flate\One}$ are all pairs
  $u=(u_*,u_\Nerr)\in\Realpto 2$ such that $u_*+u_\Nerr\leq 1$, and
  $u\Exto{\Flate\One}v$ if $u_\Nerr\leq v_\Nerr$ and
  $u_*+u_\Nerr\leq v_*+v_\Nerr$; in this case we don't need to mention
  $\Invi uv$.

  Now let $X=\Flate{\Eset{0,1}}$ which represents the type of
  booleans, so an element of $\Extd X$ is a triple
  $u=(u_0,u_1,u_\Nerr)\in\Realpto 3$ such that
  $u_0+u_1+u_\Nerr\leq 1$. We have for instance
  $u=(1,0,0)\Exto Xv=(0,0,1)$ because in this case $\Invi uv=\Eset
  0$. Notice that we do not have for instance
  $u=(1,0,0)\Exto Xv=(0,1,0)$ in spite of the fact that
  $u_\Nerr\leq v_\Nerr$ and $u_0+u_1+u_\Nerr\leq v_0+v_1+v_\Nerr$. In
  this case we need to use the sets $\Invi{u}{v}=\Eset{0}$ to
  characterize $\Exto X$: we do not have $u_0+u_\Nerr\leq v_0+v_\Nerr$
  in this specific example.
\end{example}

We shall use the morphism $\Scatch I\in\EPCOH(\Flate I,\Llbot)$
given by $(\Scatch I)_{a,*}=\Kronecker a\Nerr$, in other
words $\Fun{\Scatch I}(u)=u_\Top$.

\subsubsection{Case construct with error}
\label{sec:pcoh-ext-case}
Given an extensional PCS $X$, remember that we use $X^I$ for the
extensional PCS $\Bwith_{i\in I}X_i$ where $X_i=X$ for each $i\in
I$. Let $J$ be another at most countable set. We define
\begin{align*}
  \Pcasee {I,J}\in\Realpto{
  \Web{\Extc{\Limpl{\Tens{\Flate I}{\Exclp{(\Flate J)^I}}}{\Flate J}}}}
\end{align*}
as follows:
\begin{align*}
  \Pcasee {I,J}_{a,\mu,b}=
  \begin{cases}
    1 & \text{if }a=i\in I
    \text{ and }b\in\Web{\Extc{\Flate J}}
    \text{ and }\mu=\Mset{(i,b)}\\
    1 & \text{if }a=\Nerr\text{ and }\mu=\Mset{}\text{ and }b=\Nerr\\
    0 & \text{otherwise}  
  \end{cases}
\end{align*}
Given $u\in\Pcohp{\Extc{\Flate I}}$ and
$\Vect v=(v(i))_{i\in I}\in\prod_{i\in I}\Pcohp{\Flate J}$, let
$w=\sum_{i\in I}u_iv(i)+u_\Nerr\Base\Nerr\in\Realpto{\Web{\Extc{\Flate
      J}}}$. We have
\begin{align*}
  \sum_{j\in J}w_j+w_\Nerr
  =\sum_{j\in J}\sum_{i\in I}u_iv(i)_j+\sum_{i\in I}u_iv(i)_\Nerr+u_\Nerr
  &=\sum_{i\in I}u_i\left(\sum_{j\in J}v(i)_j+v(i)_\Nerr\right)+u_\Nerr\\
  &\leq \sum_{i\in I}u_i+u_\Nerr\leq 1\,.
\end{align*}
This shows that
$\Pcasee{I,J}\in\Pcohp{\Extc{\Limpl{\Tens{\Flate I}{\Exclp{(\Flate
        J)^I}}}{\Flate J}}}$, the associated function
$\Funpcasee{I,J}:\Pcohp{\Extc{\Flate I}}\times\prod_{i\in
  I}\Pcohp{\Extc{\Flate J}}\to\Pcohp{\Extc{\Flate J}}$ being given by
\begin{align*}
  \Funpcasee{I,J}(u,\Vect v)=\sum_{i\in I}u_iv(i)+u_\Nerr\Base\Nerr
\end{align*}
where $\Vect v=(v(i))_{i\in I}$.
\begin{lemma}
  The morphism $\Pcasee{I,J}$ is extensional, that is, it belongs to
  $\EPCOH(\Tens{\Flate I}{\Exclp{(\Flate J)^I}},\Flate J)$.
\end{lemma}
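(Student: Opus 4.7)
The first step will be to transpose through the $*$-autonomous structure of $\EPCOH$ (Theorem~\ref{th:EPCOH-star-aut}): $\Pcasee{I,J}\in\EPCOH(\Tens{\Flate I}{\Exclp{(\Flate J)^I}},\Flate J)$ is equivalent to $s=\Curlin(\Pcasee{I,J})\in\EPCOH(\Flate I,\Limpl{\Excl{(\Flate J)^I}}{\Flate J})$. Unfolding Lemma~\ref{lemma:ext-struct-limpl-charact} (characterisation of the linear hom) and Theorem~\ref{th:exp-ext-charact} (characterisation of the exponential), and noting that $\Extd{\Flate I}=\Pcohp{\Extc{\Flate I}}$ and $\Extd{\Flate J}=\Pcohp{\Extc{\Flate J}}$ so every ``membership'' side-condition reduces to being a $\PCOH$-morphism (already known), the claim will reduce to two monotonicity statements about $\Funpcasee{I,J}(u,\Vect v)=\sum_{i\in I}u_iv(i)+u_\Nerr\Base\Nerr$: \textbf{(A)} fixing $u$, if $v(1)(i)\Exto{\Flate J}v(2)(i)$ for every $i\in I$, then $\Funpcasee{I,J}(u,\Vect v(1))\Exto{\Flate J}\Funpcasee{I,J}(u,\Vect v(2))$; and \textbf{(B)} fixing $\Vect v$, if $u(1)\Exto{\Flate I}u(2)$, then $\Funpcasee{I,J}(u(1),\Vect v)\Exto{\Flate J}\Funpcasee{I,J}(u(2),\Vect v)$. (The identification of $\Extd{(\Flate J)^I}$ with $\prod_{i\in I}\Pcohp{\Extc{\Flate J}}$ and of $\Exto{(\Flate J)^I}$ with the pointwise preorder is Lemma~\ref{lemma:ext-PCS-cart-prod}.)

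For \textbf{(A)}, setting $w(k)=\Funpcasee{I,J}(u,\Vect v(k))$, I would expand $w(1)_j-w(2)_j=\sum_iu_i(v(1)(i)_j-v(2)(i)_j)$, sum over $j\in\Invi{w(1)}{w(2)}$, swap the two summations, and then trim the inner $j$-sum to $\Invi{v(1)(i)}{v(2)(i)}$ (first dropping terms of non-positive sign, then extending back to this positive set by adding only non-negative ones). The hypothesis $v(1)(i)\Exto{\Flate J}v(2)(i)$ then bounds each resulting inner sum by $v(2)(i)_\Nerr-v(1)(i)_\Nerr$, whose $u$-weighted total is exactly $w(2)_\Nerr-w(1)_\Nerr$. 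This is a routine convex-combination argument.

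Step \textbf{(B)} will be the main obstacle. Now $w(1)_j-w(2)_j=\sum_i(u(1)_i-u(2)_i)v(i)_j$; after summing over $j\in\Invi{w(1)}{w(2)}$ and swapping, I would split the outer $i$-sum according to whether $i\in\Invi{u(1)}{u(2)}$ and discard the non-positive contribution from the complement. On the remaining piece the inner sum can be enlarged to all of $J$, and the subprobability bound $\sum_{j\in J}v(i)_j\leq 1-v(i)_\Nerr$ separates the estimate into a ``linear'' part $\sum_{i\in\Invi{u(1)}{u(2)}}(u(1)_i-u(2)_i)$, bounded by $u(2)_\Nerr-u(1)_\Nerr$ via the hypothesis $u(1)\Exto{\Flate I}u(2)$, and a ``mass-on-$\Nerr$'' part $-\sum_{i\in\Invi{u(1)}{u(2)}}(u(1)_i-u(2)_i)v(i)_\Nerr$. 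The crucial final observation is that the missing tail $\sum_{i\notin\Invi{u(1)}{u(2)}}(u(2)_i-u(1)_i)v(i)_\Nerr$ is non-negative, so it can be freely added to complete the right-hand side to exactly $w(2)_\Nerr-w(1)_\Nerr$. The delicate point, and what makes \textbf{(B)} harder than \textbf{(A)}, is that the hypothesis on $u$ only controls sums restricted to $\Invi{u(1)}{u(2)}$, so the sign analysis just described is what prevents the $v$-weighted tails outside this set from breaking the inequality.
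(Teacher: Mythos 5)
Your plan is correct, but it establishes the key inequality by a genuinely different route from the paper. You work with the \emph{primal} definition of $\Exto{\Flate J}$: you bound $\sum_{j\in\Invi{w(1)}{w(2)}}(w(1)_j-w(2)_j)$ by $w(2)_\Nerr-w(1)_\Nerr$ directly, splitting monotonicity in $u$ and in $\Vect v$ into two steps chained by transitivity and handling the inversion sets by the sign analysis you describe. Both estimates go through: in (A) the restriction/extension of the inner sum to $\Invi{v(1)(i)}{v(2)(i)}$ is sound, and in (B) the non-negativity of the discarded tail $\sum_{i\notin\Invi{u(1)}{u(2)}}(u(2)_i-u(1)_i)v(i)_\Nerr$ is indeed the decisive point (the interchanges of summation are justified by absolute convergence, which you should say explicitly). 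The paper instead argues entirely by \emph{duality}: it fixes an arbitrary $w'\in\Extd{\Orthp{\Flate J}}$ --- characterized by the earlier proposition as satisfying $w'_j\leq w'_\Nerr$ for all $j$, with the pointwise order --- computes $\Eval{\Funpcasee{I,J}(u^1,\Vect{v^1})}{w'}\leq\Eval{u^1}{u'}$ where $u'_i=\Eval{v^2(i)}{w'}$ and $u'_\Nerr=w'_\Nerr$, checks that this $u'$ lies in $\Extd{\Orthp{\Flate I}}$, and concludes from $u^1\Exto{\Flate I}u^2$ via its dual formulation, treating $u$ and $\Vect v$ simultaneously in one chain of inequalities with no inversion-set bookkeeping. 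The dual route is shorter and reuses the machinery already built; your primal route is more elementary and self-contained but pays for it with the case analysis on signs that the duality hides. Your reduction of the statement to checking the two monotonicity conditions on pairs $(u,\Prom{\Vect v})$ is at the same level of detail as the paper's own proof, which also leaves that step implicit.
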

\begin{proof}
  Let $u^1,u^2\in\Extd{\Flate I}$ be such that $u^1\Exto{\Flate I}u^2$
  and let $\Vect{v^1},\Vect{v^2}\in\prod_{i\in I}\Extd{\Flate J}$ be
  such that $v^1(i)\Exto{\Flate J}v^2(i)$ for each $i\in I$, we must
  prove that
  $\Funpcasee{I,J}(u^1,\Vect{v^1})\Exto{\Flate
    J}\Funpcasee{I,J}(u^2,\Vect{v^2})$. So let
  $w'\in\Extd{\Orthp{\Flate J}}$ which simply means that
  $w'\in\Intercc 01^{J\cup\Eset\Nerr}$ and
  $\forall j\in J\ w'_j\leq w'_\Nerr$. We have
  \begin{align*}
    \Eval{\Funpcasee{I,J}(u^1,\Vect{v^1})}{w'}
    &=\sum_{j\in J}\Funpcasee{I,J}(u^1,\Vect{v^1})_jw'_j
      +\Funpcasee{I,J}(u^1,\Vect{v^1})_\Nerr w'_\Nerr\\
    &=\sum_{j\in J}\sum_{i\in I}u^1_iv^1(i)_jw'_j
      +\sum_{i\in I}u^1_iv^1(i)_\Nerr w'_\Nerr+u^1_\Nerr w'_\Nerr\\
    &=\sum_{i\in I}u^1_i\Eval{v^1(i)}{w'}+u^1_\Nerr w'_\Nerr\\
    &\leq\sum_{i\in I}u^1_i\Eval{v^2(i)}{w'}+u^1_\Nerr w'_\Nerr\\
    &=\Eval{u^1}{u'}
  \end{align*}
  where $u'\in\Intercc 01^{I\cup\Eset\Nerr}$ is defined by
  $u'_i=\Eval{v^2(i)}{w'}$ and $u'_\Nerr=w'_\Nerr$. We have
  \begin{align*}
    u'_i
    =\sum_{j\in J}v^2(i)_jw'_j+v^2(i)_\Nerr w'_\Nerr
    \leq\sum_{j\in J}v^2(i)_jw'_\Nerr+v^2(i)_\Nerr w'_\Nerr
      \text{\quad since }\forall j\ w'_j\leq w'_\Nerr
    \leq w'_\Nerr=u'_\Nerr
  \end{align*}
  hence $u'_i\leq u'_\Nerr$ since
  $\sum_{j\in J}v^2(i)_j+v^2(i)_\Nerr\leq 1$; this shows that
  $u'\in\Extd{\Orthp{\Flate I}}$. It follows that
  $\Eval{u^1}{u'}\leq\Eval{u^2}{u'}$ since $u^1\Exto{\Flate
    I}u^2$. Therefore
  \begin{align*}
    \Eval{\Funpcasee{I,J}(u^1,\Vect{v^1})}{w'}
    \leq\Eval{u^1}{u'}
    \leq\Eval{u^2}{u'}
    =\Eval{\Funpcasee{I,J}(u^2,\Vect{v^2})}{w'}
  \end{align*}
  thus proving our contention.
\end{proof}

\subsubsection{The ``let'' with error}
\label{sec:pcoh-ext-let}

Let $X$ be an extensional PCS and let $I$ and $J$ be two sets which
are at most countable. Given
$t\in\EPCOH(\Tens{\Excl X}{\Excl{\Flate I}},\Flate J)$, we define
$\Slete t\in\Realpto{\Web{\Extc{\Limpl{\Tens{\Excl X}{\Flate
          I}}{\Flate J}}}}$ by
\begin{align*}
  \Slete t_{\rho,a,b}=
  \begin{cases}
    1 & \text{if }\rho=\Mset{}\text{ and }a=b=\Nerr\\
    \sum_{\mu\in\Mfin{\Eset i}}t_{\rho,\mu,b}
    & \text{if }a=i\in I
  \end{cases}
\end{align*}
The associated function
$\Fun s:\Pcoh{\Extc X}\times\Pcohp{\Extc{\Flate
    I}}\to\Realpcto{\Web{\Extc{\Flat J}}}$ is is characterized by
$\Fun s(u,v)=\sum_{i\in I}v_i\Fun t(u,\Base i)+v_\Nerr\Base\Nerr$. We have
\begin{align*}
  \sum_{j\in J}\Fun s(u,v)_j+\Fun s(u,v)_\Nerr
  &=\sum_{j\in J}\sum_{i\in I}v_i\Fun t(u,\Base i)_j
    +\sum_{i\in I}v_i\Fun t(u,\Base i)_\Nerr+v_\Nerr\\
  &=\sum_{i\in I}v_i\left(\sum_{j\in J}\Fun t(u,\Base i)_j
    +\Fun t (u,\Base i)_\Nerr\right)+v_\Nerr\\
  &\leq \sum_{i\in I}v_i+v_\Nerr\leq 1
\end{align*}
which shows that
$s\in\PCOH({\Tens{\Excl{\Extc X}}{\Extc{\Flate I}}},\Extc{\Flate
  J})$.

The next lemma uses the notations above.
\begin{lemma}\label{lemma:pcoh-extensional-let}
  Let $w'\in\Extd{\Orthp{\Flate J}}$. Then
  $\Eval{\Fun s(u,v)}{w'}=\Eval v{v'(t,u,w')}$ where
  $v'(t,u,w')\in\Extd{\Orthp{\Flate I}}$ is given by:
  $v'(t,u,w')_i=\Eval{\Fun t(u,\Base i)}{w'}$ for $i\in I$ and
  $v'(t,u,w')_\Nerr=w'_\Nerr$.  If
  $t(1)\Exto{\Tens{\Excl X}{\Excl{\Flate I}},\Flate J}t(2)$ and
  $u(1)\Exto{X}u(2)$ then
  $v'(t(1),u(1),w')\Exto{\Orthp{\Flate J}}v'(t(2),u(2),w')$.
\end{lemma}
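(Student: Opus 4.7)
The plan is to dispatch the three assertions of the lemma in turn: the equality, the membership $v'(t,u,w')\in\Extd{\Orthp{\Flate I}}$, and the monotonicity statement. None requires a new idea; everything will come from the explicit description of $\Fun s$, the characterization of $\Extd{\Orthp{\Flate I}}$ recalled in the preceding subsection, and the extensionality of $t$.

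First I would check the equation $\Eval{\Fun s(u,v)}{w'}=\Eval v{v'(t,u,w')}$ by direct expansion. Starting from the explicit formula $\Fun s(u,v)=\sum_{i\in I}v_i\Fun t(u,\Base i)+v_\Nerr\Base\Nerr$ and applying $\Eval\cdot{w'}$ linearly, one collects the coefficient of $v_i$ as $\Eval{\Fun t(u,\Base i)}{w'}=v'(t,u,w')_i$ and the coefficient of $v_\Nerr$ as $w'_\Nerr=v'(t,u,w')_\Nerr$, which is exactly $\Eval v{v'(t,u,w')}$. This computation is purely formal and incurs no difficulty.

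For the membership, recall from the description of $(\Extd{\Flate I},\Exto{\Flate I})=\Orth{\cU'}$ that an element of $\Extd{\Orthp{\Flate I}}$ is precisely an element $u'\in\Pcohp{\Orth{\Extc{\Flate I}}}$ satisfying $u'_i\leq u'_\Nerr$ for every $i\in I$. The norm condition $v'(t,u,w')\in\Pcohp{\Orth{\Extc{\Flate I}}}$ follows from the equation just established: for any $v\in\Pcohp{\Extc{\Flate I}}$, one has $\Eval v{v'(t,u,w')}=\Eval{\Fun s(u,v)}{w'}\leq 1$ since $\Fun s(u,v)\in\Pcoh{\Extc{\Flate J}}$ and $w'\in\Pcoh{\Orth{\Extc{\Flate J}}}$. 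The pointwise bound $v'(t,u,w')_i\leq w'_\Nerr$ uses the fact that $\Base\Nerr$ is $\Exto{\Flate I}$-maximal, so $\Base i\Exto{\Flate I}\Base\Nerr$; by extensionality of $t$ this yields $\Fun t(u,\Base i)\Exto{\Flate J}\Fun t(u,\Base\Nerr)$, and $\Base\Nerr$ being maximal in $\Flate J$ as well gives $\Fun t(u,\Base\Nerr)\Exto{\Flate J}\Base\Nerr$; pairing with $w'\in\Extd{\Orthp{\Flate J}}$ one gets $\Eval{\Fun t(u,\Base i)}{w'}\leq\Eval{\Base\Nerr}{w'}=w'_\Nerr$, as required.

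For the monotonicity statement (which I read as $v'(t(1),u(1),w')\Exto{\Orthp{\Flate I}}v'(t(2),u(2),w')$, the preorder on $\Orth{\Extc{\Flate I}}$ being pointwise $\leq$ by the description of $\cU'$), I would argue componentwise. The $\Nerr$-coordinate is $w'_\Nerr$ on both sides. For $i\in I$, chain two uses of extensionality: from $u(1)\Exto X u(2)$ and the fact that $t(1)$ is a morphism of $\EPCOH(\Tens{\Excl X}{\Excl{\Flate I}},\Flate J)$, the characterization of extensional morphisms on a tensor of exponentials gives $\Fun{t(1)}(u(1),\Base i)\Exto{\Flate J}\Fun{t(1)}(u(2),\Base i)$; from $t(1)\Exto{}t(2)$ one obtains $\Fun{t(1)}(u(2),\Base i)\Exto{\Flate J}\Fun{t(2)}(u(2),\Base i)$; transitivity then yields $\Fun{t(1)}(u(1),\Base i)\Exto{\Flate J}\Fun{t(2)}(u(2),\Base i)$, and testing against $w'\in\Extd{\Orthp{\Flate J}}$ produces the desired inequality $v'(t(1),u(1),w')_i\leq v'(t(2),u(2),w')_i$. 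The only mildly delicate step is the first monotonicity use: one must invoke the extensional tensor-of-exponentials characterization to pass from $u(1)\Exto X u(2)$ (with $\Base i$ compared to itself) to extensional monotonicity of $\Fun{t(1)}$ in its first argument, but this is already packaged in the preceding Seely-category machinery.
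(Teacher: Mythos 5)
Your proof is correct. The paper in fact states this lemma without giving a proof, so the only benchmark is the proof it does supply for the analogous extensionality lemma about $\Pcasee{I,J}$, which obtains the key bound $v'(t,u,w')_i\leq w'_\Nerr$ by the direct computation $v'(t,u,w')_i=\sum_{j\in J}\Fun t(u,\Base i)_jw'_j+\Fun t(u,\Base i)_\Nerr w'_\Nerr\leq\bigl(\sum_{j\in J}\Fun t(u,\Base i)_j+\Fun t(u,\Base i)_\Nerr\bigr)w'_\Nerr\leq w'_\Nerr$, using only $w'_j\leq w'_\Nerr$ and $\Fun t(u,\Base i)\in\Pcohp{\Extc{\Flate J}}$. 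You instead derive this bound from $\Base i\Exto{\Flate I}\Base\Nerr$, the extensional monotonicity of $t$ in its second argument, and the $\Exto{\Flate J}$-maximality of $\Base\Nerr$; this is valid, but it requires $u\in\Extd X$ (the characterization of $\EPCOH$-morphisms only guarantees $\Exto{}$-monotonicity of $\Fun t(u,\cdot)$ for extensional $u$), whereas the direct computation yields $v'(t,u,w')\in\Extd{\Orthp{\Flate I}}$ for every $u\in\Pcoh{\Extc X}$. Since the lemma is only ever invoked with $u\in\Extd X$, nothing is lost in practice, but the direct bound is both more elementary and more general. The remaining steps — the linear expansion giving the transposition identity, the membership in $\Pcohp{\Orth{\Extc{\Flate I}}}$ via that identity, the reading of the last assertion as an inequality in $\Orthp{\Flate I}$ (correcting the statement's typo $\Orthp{\Flate J}$) where the preorder is pointwise $\leq$, and the two-step monotonicity chain through $\Fun{t(1)}(u(2),\Base i)$ followed by pairing against $w'$ — are exactly as they should be.
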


Now we prove that $s\in\EPCOH({\Tens{\Excl{X}}{\Flate I}},\Flate
J)$. Since $\Extd{\Flate J}=\Pcohp{\Extc{\Flate J}}$ we only have to
prove $\Exto{}$-monotonicity. So let $u(1),u(2)\in\Extd X$ with
$u(1)\Exto Xu(2)$ and let $v(1),v(2)\in\Pcohp{\Extc{\Flate I}}$ with
$v(1)\Exto{\Flate I}v(2)$. For each $i\in I$ we have
$\Fun t(u(1),\Base i)\Exto{\Flate J}\Fun t(u(2),\Base i)$. Let
$w'\in\Extd{\Orthp{\Flate J}}$, we have by
Lemma~\ref{lemma:pcoh-extensional-let}
\begin{align*}
  \Eval{\Fun s(u(1),v(1))}{w'}
  = \Eval{v(1)}{v'(t,u(1),w')}
  \leq \Eval{v(2)}{v'(t,u(2),w')}
  = \Eval{\Fun s(u(2),v(2))}{w'}
\end{align*}
which proves our contention.
% where $x'\in\Intercc 01^{I\cup\Eset\Nerr}$ is defined by
% $x'_i=\Eval{\Fun t(u(2),\Base i)}{z'}$ for $i\in I$ and
% $x'_\Nerr=z'_\Nerr$. We have
% \begin{align*}
%   x'_i
%   &=\sum_{j\in J}\Fun t(u(2),\Base i)_jz'_j
%     +\Fun t(u(2),\Base i)_\Nerr z'_\Nerr\\
%   &\leq \sum_{j\in J}\Fun t(u(2),\Base i)_jz'_\Nerr
%     +\Fun t(u(2),\Base i)_\Nerr z'_\Nerr\leq z'_\Nerr=x'_\Nerr
% \end{align*}
% hence $x'\in\Extd{\Orth{\Flate I}}$ from which it follows that
% \begin{align*}
%   \Eval{\Fun s(u(1),x(1))}{z'}\leq\Eval{x(1)}{x'}\leq\Eval{x(2)}{x'}
%   =\Eval{\Fun s(u(2),x(2))}{z'}
% \end{align*}
% and hence $\Fun s(u(1),x(1))\Exto{\Flate J}\Fun s(u(2),x(2))$ as expected.

\begin{lemma}\label{lemma:let-pcoh-extensional-monotone}
  Let
  $s(1),s(2)\in\EPCOH(\Tens{\Excl
    X}{\Excl{\Flate I}},\Flate J)$. If
  $s(1)\Exto{\Tens{\Excl X}{\Excl{\Flate I}},\Flate J}s(2)$, then
  \[t(1)=\Slete{s(1)}\Exto{\Tens{\Excl
      X}{\Flate I},\Flate J}t(2)=\Slete{s(2)}\,.\]
\end{lemma}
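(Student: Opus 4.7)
The plan is to unfold the enriched preorder $\Exto{\Tens{\Excl X}{\Flate I}, \Flate J}$ into a pointwise statement on pure tensor inputs, use the orthogonality characterization of $\Exto{\Flate J}$ to test it against elements of $\Extd{\Orthp{\Flate J}}$, and then rewrite the resulting pairings via Lemma~\ref{lemma:pcoh-extensional-let} so that the hypothesis on $s(1), s(2)$ becomes directly applicable.

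More precisely, by the description of the enriched preorder recalled in Section~\ref{sec:pcoh-ext-enriched-model}, together with Lemma~\ref{lemma:bilinear-epcs-charact} and Theorem~\ref{th:exp-ext-charact}, it suffices to prove that
\[
\Fun{t(1)}(u, v) \Exto{\Flate J} \Fun{t(2)}(u, v)
\]
for every $u \in \Extd X$ and every $v \in \Extd{\Flate I}$. By the orthogonality definition of $\Exto{\Flate J}$, this in turn amounts to showing that $\Eval{\Fun{t(1)}(u,v)}{w'} \leq \Eval{\Fun{t(2)}(u,v)}{w'}$ for every $w' \in \Extd{\Orthp{\Flate J}}$.

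At this point Lemma~\ref{lemma:pcoh-extensional-let} does essentially all the work. Its first assertion rewrites each side of the desired inequality as $\Eval{v}{v'(s(k), u, w')}$ for $k=1,2$, and its second assertion, applied with $u(1) = u(2) = u$ (legitimate by reflexivity of $\Exto X$, see Proposition~\ref{prop:pre-ext-basic-props}) together with the hypothesis $s(1) \Exto{} s(2)$, yields $v'(s(1), u, w') \Exto{\Orthp{\Flate I}} v'(s(2), u, w')$. Since $v \in \Extd{\Flate I}$ and $\Exto{\Orthp{\Flate I}}$ is by definition the dual pairing preorder, this gives the required inequality $\Eval{v}{v'(s(1), u, w')} \leq \Eval{v}{v'(s(2), u, w')}$, completing the proof.

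No real obstacle is expected here: the argument is just a direct chaining of the two parts of Lemma~\ref{lemma:pcoh-extensional-let} with the orthogonality characterization of the extensional order on $\Flate J$. The only point deserving a brief mention is that one must invoke the monotonicity assertion of that lemma with equal first arguments, which is justified by reflexivity of $\Exto X$.
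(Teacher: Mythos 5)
Your proof is correct and follows essentially the same route as the paper's: fix $u\in\Extd X$, $v\in\Extd{\Flate I}$ and test against $w'\in\Extd{\Orthp{\Flate J}}$, then chain the two parts of Lemma~\ref{lemma:pcoh-extensional-let}. The paper's version is just more terse, leaving implicit the reduction to pure tensors and the final pairing step that you spell out.
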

\begin{proof}
  Let $u\in\Extd X$, $v\in\Extd{\Flate I}=\Pcohp{\Extc{\Flate I}}$ and
  $w'\in\Extd{\Orthp{\Flate J}}$. We have by
  Lemma~\ref{lemma:pcoh-extensional-let}
  \begin{align*}
    \Eval{\Fun{t(1)}(u,v)}{w'}
    = \Eval{v}{v'(s(1),u,w')} \leq\Eval{v}{x'(s(2),u,w')}
    = \Eval{\Fun{t(2)}(u,v)}{w'}\,.
  \end{align*}
\end{proof}

\subsubsection{Basic functions with error}
\label{sec:pcoh-ext-basic-functions}

With the same notations as above, let $f:I\to J$ be a partial function
of domain $D\subseteq I$. Then we define
$\Bfune f\in\Realpto{\Web{\Limpl{\Flate I}{\Flate J}}}$ as follows:
\begin{align*}
  \Bfune f_{a,b}=
  \begin{cases}
    1 & \text{if } a\in D\text{ and }b=f(a)\in J\\
    1 & \text{if } a=b=\Nerr\\
    0 & \text{otherwise.}
  \end{cases}
\end{align*}
We have $\Bfune f\in\EPCOH(\Flate I,\Flate J)$. Indeed let first
$u\in{\Extd{\Flate I}}=\Pcoh{\Extc{\Flate I}}$, we have
\begin{align*}
  \sum_{b\in\Web{\Extd{\Flate J}}}(\Matappa{\Bfune f}u)_b
  = \sum_{j\in J}(\Matappa{\Bfune f}u)_j+(\Matappa{\Bfune f}u)_\Nerr
  = \sum_{j\in J}\sum_{\Biind{i\in D}{f(i)=j}}u_j+x_\Nerr
  = \sum_{i\in D}u_i+u_\Nerr\leq 1
\end{align*}
and hence $\Matappa{\Bfune f}u\in\Pcoh{\Extd{\Flate J}}$. Assume now
that $v(1)\Exto{\Flate I}v(2)$ and let $w'\in\Extd{\Orthp{\Flate J}}$,
we have
\begin{align*}
  \Eval{\Matappa{\Bfune f}{v(1)}}{w'}
  = \sum_{j\in J}(\Matappa{\Bfune f}{x(1)})_jw'_j+u_\Nerr w'_\Nerr
  = \Eval{v(1)}{v'}
  % \sum_{i\in I}x_i\left(\sum_{\Biind{j\in J}{f(i)=j}}y'_j\right)
  %   +x_\Nerr y'_\Nerr\\
\end{align*}
where $v'\in\Realpto{\Web{\Extc{\Flate I}}}$ is defined by
$v'_i=w'_{f(i)}$ if $i\in D$, $v'_i=0$ if $i\in I\setminus D$ and
$v'_\Nerr=w'_\Nerr$ so that obviously $v'\in\Extd{\Orthp{\Flate
    I}}$. Therefore
$\Eval{\Matappa{\Bfune f}{v(1)}}{w'}\leq\Eval{\Matappa{\Bfune
    f}{v(2)}}{w'}$.

\subsubsection{Upper and lower approximating the identity at ground types.}
\label{sec:pcoh-ext-id-approx}
Let $J\subseteq I$, we define
$\Idl JI,\Idu JI\in\Realpto{\Web{\Extc{\Limpl{\Flate I}{\Flate I}}}}$
as follows:
\begin{align*}
  (\Idl JI)_{a,b}=
  \begin{cases}
    1 & \text{if }a=b\in J\cup\Eset{\Nerr}\\
    0 & \text{otherwise}
  \end{cases}
  \text{\quad and \quad}
  (\Idu JI)_{a,b}=
  \begin{cases}
    1 & \text{if }a=b\in J\cup\Eset{\Nerr}\\
    1 & \text{if }a\in I\setminus J\text{ and }b=\Nerr\\
    0 & \text{otherwise.}
  \end{cases}
\end{align*}
It is clear that
$\Idl JI,\Idu JI\in\Pcohp{\Limpl{\Extc{\Flate I}}{\Extc{\Flate I}}}$.
Notice that, if $v\in\Pcohp{\Extc{\Flate I}}$, we have
\begin{align*}
  \Fun{\Idl JI}(v)=\sum_{j\in J}v_j\Base j+v_\Nerr\Base\Nerr
  \text{\quad and \quad}
  \Fun{\Idu JI}(v)=\sum_{j\in J}v_j\Base j+
                    \left(\sum_{i\in I\setminus J}v_i+x_\Nerr\right)\Base\Nerr
\end{align*}
The fact that $\Idl JI\in\EPCOH(\Flate I,\Flate I)$ is a consequence
of Section~\ref{sec:pcoh-ext-basic-functions} applied to the
restriction of the identity function to $I$. Next we obviously have
$\Idl JI\leq\Id$ and hence $\Idl JI\Exto{\Flate I,\Flate I}\Id$.

\begin{lemma}\label{lemma:upper-id}
  One has $\Idu JI\in\EPCOH(\Flate I,\Flate J)$ and
  $\Id\Exto{\Flate I,\Flate I}\Idu JI$.
\end{lemma}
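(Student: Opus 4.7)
The goal is to prove two things: that $\Idu JI$ is an extensional morphism (from $\Flate I$ to $\Flate I$, reading the codomain consistently with the definition), and that $\Id \Exto{\Flate I,\Flate I} \Idu JI$. Since $\Extd{\Flate I}=\Pcohp{\Extc{\Flate I}}$, for the first claim I only have to verify (i) that $\Fun{\Idu JI}$ sends $\Pcoh{\Extc{\Flate I}}$ into itself and (ii) that it is $\Exto{\Flate I}$-monotone. Claim (i) is a one-line total-mass computation: $\sum_{j\in J}v_j+\bigl(\sum_{i\in I\setminus J}v_i+v_\Nerr\bigr)=\sum_{i\in I}v_i+v_\Nerr\leq 1$.

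For (ii), I would use the dual description $(\Extd{\Flate I},\Exto{\Flate I})=\Orth{\cU'}$ from the proposition immediately preceding, so it is enough, given $v(1)\Exto{\Flate I}v(2)$ and $w'\in\cE'$, to show that $\Eval{\Fun{\Idu JI}(v(1))}{w'}\leq\Eval{\Fun{\Idu JI}(v(2))}{w'}$. Expanding,
\[
\Eval{\Fun{\Idu JI}(v)}{w'}=\sum_{j\in J}v_j w'_j+\sum_{i\in I\setminus J}v_i w'_\Nerr+v_\Nerr w'_\Nerr\,,
\]
so I define an auxiliary $w''\in\Realpto{\Web{\Extc{\Flate I}}}$ by $w''_j=w'_j$ for $j\in J$, $w''_i=w'_\Nerr$ for $i\in I\setminus J$, and $w''_\Nerr=w'_\Nerr$. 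A direct check shows $w''\in\cE'$ (all components are bounded by $w'_\Nerr\leq 1$, and $w''_i\leq w''_\Nerr$ holds for every $i\in I$ by the assumption $w'\in\cE'$), and by construction $\Eval{\Fun{\Idu JI}(v)}{w'}=\Eval v{w''}$. The required inequality then follows by applying the hypothesis $v(1)\Exto{\Flate I}v(2)$ to the test vector $w''$.

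For the second claim, by the characterization of $\Exto{\Limpl XY}$ it suffices to show $v\Exto{\Flate I}\Fun{\Idu JI}(v)$ for every $v\in\Pcohp{\Extc{\Flate I}}$, which I would do directly from the definition of $\Exto{\Flate I}$. Writing $w=\Fun{\Idu JI}(v)$, we have $w_j=v_j$ for $j\in J$, $w_i=0$ for $i\in I\setminus J$, and $w_\Nerr=\sum_{i\in I\setminus J}v_i+v_\Nerr$, hence $\Invi vw\subseteq I\setminus J$ and
\[
\sum_{i\in\Invi vw}(v_i-w_i)=\sum_{i\in I\setminus J}v_i=w_\Nerr-v_\Nerr\,,
\]
which is precisely the defining inequality of $v\Exto{\Flate I}w$ (with equality in fact). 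The only mildly delicate point in the whole proof is the construction of $w''$ and the verification $w''\in\cE'$; everything else is routine bookkeeping on top of the dual description of $\Exto{\Flate I}$ already established.
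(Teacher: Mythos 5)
Your proof is correct, and it fills a gap: the paper states Lemma~\ref{lemma:upper-id} without proof (unlike the $\Idl JI$ case, which is reduced to the $\Bfune f$ construction of Section~\ref{sec:pcoh-ext-basic-functions} plus $\Idl JI\leq\Id$ — a reduction unavailable here, since $\Idu JI$ redirects the mass of $I\setminus J$ onto $\Nerr$ instead of discarding it). Your two-pronged argument is the natural one given the paper's machinery: the transposed test vector $w''$ (which is exactly $\Matappa{\Orth{(\Idu JI)}}{w'}$) lands in $\cE'$ and reduces $\Exto{\Flate I}$-monotonicity to the dual description $(\Extd{\Flate I},\Exto{\Flate I})=\Orth{\cU'}$, while the second claim follows from the inversion-index definition with $\Invi v{w}\subseteq I\setminus J$ and equality in the defining inequality. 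You are also right to read the codomain as $\Flate I$ rather than $\Flate J$, consistently with the displayed definition of $\Idu JI$ and with its use in the Application subsection.
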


\section{Probabilistic PCF with errors}\label{sec:PCF-syntax}

For the sake of simplicity and readability our language of interest is
a simple and nonetheless very expressive probabilistic extension of
the well known Scott-Plotkin PCF language, extended with one
uncatchable exception $\Errconv$ of ground type called
\emph{convergence}. For convenience we also add a constant $\Errdiv$
for representing \emph{divergence} (it could be defined using the
$\Fix M$ construct). As in~\cite{EhrhardPaganiTasson18} the language
has a $\Let xMN$ construct allowing to deal with the ground data type
$\Tnat$ in a call-by-value manner; this is essential to implement
meaningful probabilistic algorithms in this language which is
call-by-name globally.
\begin{align*}
  \sigma,\tau,\dots & \Bnfeq \Tnat \Bnfor \Timpl\sigma\tau\\
  M,N,P,\dots
                    & \Bnfeq x \Bnfor \Num n \Bnfor \Succ M \Bnfor \Pred M
                      \Bnfor \If MNP \Bnfor \Let xMN\\
                    & \quad \Bnfor \App MN \Bnfor \Abst x\sigma M
                      \Bnfor \Fix M \Bnfor \Dice r
                      \Bnfor \Errdiv \Bnfor \Errconv
\end{align*}
The typing rules are given in
Figure~\ref{fig:PCF-typing-rules}. Notice the strong typing
restrictions on the conditional and $\mathsf{let}$ constructs, due to
the fact that our convergence and divergence constants are of ground
types. Given a typing context $\Gamma$ and a type $\sigma$, we use
$\Termsty\Gamma\sigma$ for the set of terms $M$ such that
$\Tseq\Gamma M\sigma$.

\begin{remark}
  This is not a restriction in term of expressiveness since more
  general versions of these constructs can be defined using
  $\lambda$-abstractions. Nevertheless this is clearly not
  satisfactory, and the solution is to develop a monadic description
  of the error $\Errconv$ in $\EPCOH$. This is not completely
  straightforward because this description has to be compatible with
  the extensional structures.
\end{remark}
\begin{figure}{\scriptsize
  \centering
  \AxiomC{$n\in\Nat$}
  \UnaryInfC{$\Tseq\Gamma{\Num n}\Tnat$}
  \DisplayProof
  \quad
  \AxiomC{$E\in\Eset{\Errdiv,\Errconv}$}
  \UnaryInfC{$\Tseq\Gamma{E}\Tnat$}
  \DisplayProof
  \quad
  \AxiomC{}
  \UnaryInfC{$\Tseq{x_1:\sigma_1,\dots,x_k:\sigma_k}{x_i}{\sigma_i}$}
  \DisplayProof
  \quad
  \AxiomC{$r\in\Intercc 01\cap\Rational$}
  \UnaryInfC{$\Tseq\Gamma{\Dice r}\Tnat$}
  \DisplayProof
  \quad
  \AxiomC{$\Tseq\Gamma M\Tnat$}
  \UnaryInfC{$\Tseq\Gamma{\Succ M}\Tnat$}
  \DisplayProof
  \Figbreak
  \AxiomC{$\Tseq\Gamma M\Tnat$}
  \UnaryInfC{$\Tseq\Gamma{\Pred M}\Tnat$}
  \DisplayProof
  \quad
  \AxiomC{$\Tseq\Gamma M\Tnat$}
  \AxiomC{$\Tseq\Gamma P\Tnat$}
  \AxiomC{$\Tseq\Gamma Q\Tnat$}
  \TrinaryInfC{$\Tseq\Gamma{\If MPQ}\Tnat$}
  \DisplayProof
  \quad
  \AxiomC{$\Tseq\Gamma M\Tnat$}
  \AxiomC{$\Tseq{\Gamma,x:\Tnat}P\Tnat$}
  \BinaryInfC{$\Tseq{\Gamma}{\Let xMP}\Tnat$}
  \DisplayProof
  \Figbreak
  \AxiomC{$\Tseq{\Gamma,x:\sigma}{M}{\tau}$}
  \UnaryInfC{$\Tseq\Gamma{\Abst x\sigma M}{\Timpl\sigma\tau}$}
  \DisplayProof
  \quad
  \AxiomC{$\Tseq\Gamma M{\Timpl\sigma\tau}$}
  \AxiomC{$\Tseq\Gamma N\sigma$}
  \BinaryInfC{$\Tseq\Gamma{\App MN}\tau$}
  \DisplayProof
  \quad
  \AxiomC{$\Tseq\Gamma M{\Timpl\sigma\sigma}$}
  \UnaryInfC{$\Tseq\Gamma{\Fix M}\sigma$}
  \DisplayProof}
  \caption{Typing rule for our PCF dialect}
  \label{fig:PCF-typing-rules}
\end{figure}

\subsection{Operational semantics}
% NB: pourquoi garder cette SO par réécriture alors qu'on a une MK
% plus loin?  Parce que cette SO parle aussi des points fixes alors
% que la MK est restreinte aux termes sans points fixes.
%
We equip this language with an operational semantics which is a
probabilistic extension of the usual deterministic weak head
reduction. More precisely, $M\Rel\Redwhd M'$ means that $M$ reduces to
$M'$ deterministically and $M\Rel{\Redwhp p}M'$ means that $M$ reduces
to $M'$ with probability $p\in\Intercc 01$. The reduction rules are
given in Figure~\ref{fig:PCF-reduction-rules}.
\begin{lemma}\label{lemma:redwhp-uniqueness}
  If $M\Rel{\Redwhp p}M'$ then $M$ and $M'$ are distinct terms, and
  $p$ can be recovered from $M$ and $M'$. 
\end{lemma}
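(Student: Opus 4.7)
The plan is to prove both claims simultaneously by induction on the derivation of $M \Rel{\Redwhp p} M'$. The key structural fact underlying the whole argument is that weak head reduction in this call-by-name language admits a \emph{unique decomposition}: a reducible term $M$ factors uniquely as $E[R]$, where $E$ is an evaluation context of the shape permitted by the contextual rules of Figure~\ref{fig:PCF-reduction-rules} and $R$ is a root redex. So I would first note (or briefly verify, by inspection of the evaluation contexts) this determinism property, since once it is in hand both statements of the lemma follow from a case analysis on the redex $R$.

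For the base cases I would go through the root-level axioms. The deterministic rules ($\beta$-reduction, $\Succ{\Num n} \to \Num{n+1}$, $\Pred{\Num{n+1}} \to \Num n$, $\If{\Num 0}{P}{Q}\to P$, $\If{\Num{n+1}}{P}{Q}\to Q$, unfolding of $\Fix M$, $\Let xMN$ once $M$ is a numeral, plus the propagation rules for the errors $\Errdiv,\Errconv$) all carry probability $p=1$ and replace a syntactically identifiable head (an application of a $\lambda$, a $\Succ$/$\Pred$/$\If$/$\Let$ around a value, a $\Fix$, or an error in a strict position) with a term whose head is of a different shape. So LHS and RHS are distinct, and $p=1$ is read off by the absence of the $\Dice r$ redex. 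For the probabilistic axiom $\Dice r \Rel{\Redwhp r}\Num 0$ and $\Dice r \Rel{\Redwhp{1-r}}\Num 1$, source and target are manifestly different; moreover $r$ is literally a subterm of $M=\Dice r$, and the target $M'\in\{\Num 0,\Num 1\}$ tells us which of the two rules fired, so $p$ is either $r$ or $1-r$, determined.

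For the inductive step, a contextual derivation has the form $E[N] \Rel{\Redwhp p} E[N']$ coming from $N \Rel{\Redwhp p} N'$. Unique decomposition gives us back $(E,N)$ from $M$ and $N'$ from $M'$ at the same position. The induction hypothesis yields $N \neq N'$, hence $E[N] \neq E[N']$, and also recovers $p$ from $(N,N')$, which is therefore recovered from $(M,M')$.

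The only real obstacle is verifying the unique decomposition: one must check that the evaluation contexts of the reduction system (as laid out in Figure~\ref{fig:PCF-reduction-rules}) are disjoint in the sense that a term $M$ cannot simultaneously be written $E_1[R_1]$ and $E_2[R_2]$ with $E_1 \neq E_2$ or $R_1 \neq R_2$; this is a standard syntactic check relying on the fact that each syntactic constructor $\Succ$, $\Pred$, $\If(-,P,Q)$, $\Let x- N$, $\App - N$ admits a unique evaluation position and that values (numerals, $\lambda$-abstractions, errors) are not themselves reducible at the root except via the designated axioms. Once that invariant is established, the rest of the lemma is immediate from the case analysis above.
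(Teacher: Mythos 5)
Your overall strategy (unique head-redex decomposition plus a case analysis on the root rule, then recovering $p$ by checking whether the redex is a $\Dice r$ and, if so, which of $\Num 0,\Num 1$ appears at that position in $M'$) is the right way to make the paper's ``simple inspection'' precise, and the inductive step and the recovery of $p$ are fine. But there is a genuine gap in the distinctness argument for the $\beta$-case. You justify distinctness in all deterministic base cases by saying the axiom replaces the head redex ``with a term whose head is of a different shape''; for $\App{\Abst x\sigma M}N\Rel\Redwhd\Subst MNx$ this is simply false: if $M$ is itself an application of an abstraction, so is $\Subst MNx$. Worse, distinctness of the two sides of this rule is not a purely syntactic fact: the untyped term $(\lambda x\,(x)x)\,\lambda x\,(x)x$ head-$\beta$-reduces to itself in one step, so the lemma would be false on raw terms and must implicitly be read on well-typed ones. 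To close this case you need to invoke typing, e.g.\ by observing that $\App{\Abst x\sigma M}N=\Subst MNx$ would give a one-step loop, hence an infinite $\beta$-reduction of a simply typed term (treating $\mathsf{fix}$, $\mathsf{let}$, $\mathsf{if}$, the numerals, $\Dice r$ and $\Errdiv,\Errconv$ as inert constructors), contradicting strong normalization of simply typed $\beta$-reduction; alternatively one can argue directly that the equality forces the type $\sigma$ of $N$ to contain itself properly. By contrast, the case $\Let x{\Num n}M\Rel\Redwhd\Subst M{\Num n}x$ really is elementary, since substituting the constant $\Num n$ for a variable preserves term size, so the right-hand side is strictly smaller than the left; all other axioms change the root constructor and are immediate.
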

This results from a simple inspection of the rule.
\begin{figure}{\scriptsize
  \centering
  \AxiomC{}
  \UnaryInfC{$\Succ{\Num n}\Rel\Redwhd\Num{n+1}$}
  \DisplayProof
  \quad
  \AxiomC{}
  \UnaryInfC{$\Pred{\Num 0}\Rel\Redwhd\Num{0}$}
  \DisplayProof
  \quad
  \AxiomC{}
  \UnaryInfC{$\Pred{\Num{n+1}}\Rel\Redwhd\Num{n}$}
  \DisplayProof
  \quad
  \AxiomC{$E\in\Eset{\Errdiv,\Errconv}$}
  \UnaryInfC{$\Succ E\Rel\Redwhd E$}
  \DisplayProof
  \quad
  \AxiomC{$E\in\Eset{\Errdiv,\Errconv}$}
  \UnaryInfC{$\Pred E\Rel\Redwhd E$}
  \DisplayProof
  \Figbreak
  \AxiomC{}
  \UnaryInfC{$\If{\Num 0}PQ\Rel\Redwhd P$}
  \DisplayProof
  \quad
  \AxiomC{}
  \UnaryInfC{$\If{\Num{n+1}}PQ\Rel\Redwhd Q$}
  \DisplayProof
  \quad
  \AxiomC{$E\in\Eset{\Errdiv,\Errconv}$}
  \UnaryInfC{$\If{E}PQ\Rel\Redwhd E$}
  \DisplayProof
  \quad
  \AxiomC{}
  \UnaryInfC{$\Let x{\Num n}M\Rel\Redwhd\Subst M{\Num n}x$}
  \DisplayProof
  \Figbreak
  \AxiomC{$E\in\Eset{\Errdiv,\Errconv}$}
  \UnaryInfC{$\Let x{E}M\Rel\Redwhd E$}
  \DisplayProof
  \quad
  \AxiomC{}
  \UnaryInfC{$\App{\Abst x\sigma M}N\Rel\Redwhd\Subst MNx$}
  \DisplayProof
  \quad
  \AxiomC{}
  \UnaryInfC{$\Fix M\Rel\Redwhd\App M{\Fix M}$}
  \DisplayProof
  \Figbreak
  \AxiomC{$M\Rel\Redwhd M'$}
  \UnaryInfC{$M\Rel{\Redwhp 1}M'$}
  \DisplayProof
  \quad
  \AxiomC{$r\in\Intercc 01\cap\Rational$}
  \UnaryInfC{$\Dice r\Rel{\Redwhp r}\Num 0$}
  \DisplayProof
  \quad
  \AxiomC{$r\in\Intercc 01\cap\Rational$}
  \UnaryInfC{$\Dice r\Rel{\Redwhp{1-r}}\Num 1$}
  \DisplayProof
  \quad
  \AxiomC{$M\Rel{\Redwhp p}M'$}
  \UnaryInfC{$\Succ M\Rel{\Redwhp p}\Succ{M'}$}
  \DisplayProof
  \quad
  \AxiomC{$M\Rel{\Redwhp p}M'$}
  \UnaryInfC{$\Pred M\Rel{\Redwhp p}\Pred{M'}$}
  \DisplayProof
  \Figbreak
  \AxiomC{$M\Rel{\Redwhp p}M'$}
  \UnaryInfC{$\If MPQ\Rel{\Redwhp p}\If{M'}PQ$}
  \DisplayProof
  \quad
  \AxiomC{$M\Rel{\Redwhp p}M'$}
  \UnaryInfC{$\Let xMP\Rel{\Redwhp p}\Let x{M'}P$}
  \DisplayProof}
  \caption{Weak head reduction for our PCF dialect}
  \label{fig:PCF-reduction-rules}
\end{figure}
A convergence path is a sequence $\gamma=(M_0,\dots,M_n)$ with
$n\in\Nat$ such that there are $p_0,\dots,p_{n-1}\in\Intercc 01$ such
that $\forall i\in\Eset{0,\dots,n-1}\ M_i\Rel{\Redwhp{p_i}}M_{i+1}$
and $M_n$ is $\Redwhp p$-normal for all $p$ (we simply say that $M_n$
is \Whnormal). By Lemma~\ref{lemma:redwhp-uniqueness}, the sequence
$(\List p0{n-1})$ is determined by $\gamma$. We use the following
notations: $\Cpathl\gamma=n$ (length), $\Cpaths\gamma=M_0$ and
$\Cpatht\gamma=M_n$ (source and target), and
$\Cpathp\gamma=\prod_{i=0}^{n-1}p_i$ (probability). Given
$M,P\in\Termsty\Gamma\sigma$ with $P$ \Whnormal, we use $\Cpathset MP$
for the set of all convergence paths $\gamma$ such that
$\Cpaths\gamma=M$ and $\Cpatht\gamma=P$. Then
\begin{align*}
  \Redwhpr MP=\sum_{\gamma\in\Cpathset MP}\Cpathp\gamma
\end{align*}
is the probability that $M$ reduces to
$P$. See~\cite{EhrhardPaganiTasson18} for a discrete Markov chain
interpretation of this definition\footnote{For a version of PCF without
  the exceptions $\Errconv$ and $\Errdiv$, but their addition to the
  language does not change the proofs and results.}, showing in
particular that this number is actually a probability.

\begin{example}
  Consider the term $M=\If{\Dice{\frac 12}}{\Num 0}{\Num 0}$. Then we
  have two \emph{distinct} convergence paths from $M$, with the same
  target $\Num 0$, namely $(M,\If{\Num 0}{\Num 0}{\Num 0},\Num 0)$ and
  $(M,\If{\Num 1}{\Num 0}{\Num 0},\Num 0)$. Both have probability
  $\frac 12$ and we have $\Redwhpr M{\Num 0}=1$.
\end{example}

A term $P\in\Termsty{}\Tnat$ is \Whnormal{} iff $P=\Num n$ for some
$n\in\Nat$ or $P\in\Eset{\Errdiv,\Errconv}$. For
$M\in\Termsty{}\Tnat$, we are mainly interested in evaluating
$\Redwhpr M\Errconv$, that we also denote as $\Redwhprc P$.

\subsection{A preorder relation on terms}
We define a binary relation $\Termso$ by the deduction rules of
Figure~\ref{fig:PCF-extensional-preorder}.
\begin{figure}{\footnotesize
  \centering
  \AxiomC{$M\in\Termsty\Gamma\Tnat$}
  \UnaryInfC{$\Errdiv\Termso M$}
  \DisplayProof
  \quad
  \centering
  \AxiomC{$M\in\Termsty\Gamma\Tnat$}
  \UnaryInfC{$M\Termso\Errconv$}
  \DisplayProof
  \quad
  \AxiomC{}
  \UnaryInfC{$x\Termso x$}
  \DisplayProof
  \quad
  \AxiomC{}
  \UnaryInfC{$\Num n\Termso \Num n$}
  \DisplayProof
  \quad
  \AxiomC{$r\in\Intercc 01\cap\Rational$}
  \UnaryInfC{$\Dice r\Termso\Dice r$}
  \DisplayProof
  \quad
  \AxiomC{$M\Termso M'$}
  \UnaryInfC{$\Succ M\Termso\Succ{M'}$}
  \DisplayProof
  \Figbreak
  \AxiomC{$M\Termso M'$}
  \UnaryInfC{$\Pred M\Termso\Pred{M'}$}
  \DisplayProof
  \quad
  \AxiomC{$M\Termso M'$}
  \AxiomC{$P\Termso P'$}
  \AxiomC{$Q\Termso Q'$}
  \TrinaryInfC{$\If MPQ\Termso\If{M'}{P'}{Q'}$}
  \DisplayProof
  \quad
  \AxiomC{$M\Termso M'$}
  \AxiomC{$P\Termso P'$}
  \BinaryInfC{$\Let xMP\Termso\Let x{M'}{P'}$}
  \DisplayProof
  \Figbreak
  \AxiomC{$M\Termso M'$}
  \AxiomC{$N\Termso N'$}
  \BinaryInfC{$\App MN\Termso\App{M'}{N'}$}
  \DisplayProof
  \quad
  \AxiomC{$M\Termso M'$}
  \UnaryInfC{$\Abst x\sigma M\Termso\Abst x\sigma{M'}$}
  \DisplayProof
  \quad
  \AxiomC{$M\Termso M'$}
  \UnaryInfC{$\Fix M\Termso\Fix{M'}$}
  \DisplayProof
  \Figbreak
  \AxiomC{$M\Termso M'$}
  \AxiomC{$\Fix M\Termso N'$}
  \BinaryInfC{$\Fix M\Termso\App{M'}{N'}$}
  \DisplayProof
  \quad
  \AxiomC{$M'\Termso M$}
  \AxiomC{$N'\Termso \Fix M$}
  \BinaryInfC{$\App{M'}{N'}\Termso\Fix M$}
  \DisplayProof\\}
  \caption{Extensional preorder on terms}
  \label{fig:PCF-extensional-preorder}
\end{figure}

\begin{lemma}
  If $\Tseq\Gamma M\sigma$ and $M'\Termso M$ or $M\Termso M'$, then
  $\Tseq\Gamma{M'}\sigma$.
\end{lemma}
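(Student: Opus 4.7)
The proof goes by induction on the derivation of $M \Termso M'$, establishing the following strengthened statement simultaneously in both directions: for every context $\Gamma$ and type $\sigma$, if $\Tseq\Gamma M\sigma$ then $\Tseq\Gamma{M'}\sigma$, and conversely if $\Tseq\Gamma{M'}\sigma$ then $\Tseq\Gamma M\sigma$. Doing both directions at once is essential because the unfolding rules for $\Fix M$ are not congruences, so one needs to invoke the induction hypothesis on subderivations in both directions.

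The two error axioms $\Errdiv\Termso M$ and $M\Termso\Errconv$ are built to be handled by their own side conditions: both sides of the relation are explicitly required to lie in $\Termsty\Gamma\Tnat$, and $\Errdiv,\Errconv$ are only typable at $\Tnat$, so the biconditional is immediate. The axioms $x\Termso x$, $\Num n\Termso\Num n$, and $\Dice r\Termso\Dice r$ are trivial since both sides are syntactically equal. All the congruence rules (for $\Succ$, $\Pred$, $\If{}{}{}$, $\Let x{}{}$, application, abstraction, and $\Fix$) follow routinely: one inverts the typing judgement on one side, applies the induction hypothesis to each premise, and reassembles the typing derivation on the other side, using the fact that the typing rules of our PCF dialect are syntax-directed.

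The two genuinely delicate cases are the unfolding rules. Consider the rule deriving $\Fix M \Termso \App{M'}{N'}$ from $M\Termso M'$ and $\Fix M\Termso N'$. If $\Tseq\Gamma{\Fix M}\sigma$, then $\Tseq\Gamma M{\Timpl\sigma\sigma}$, so by the induction hypothesis $\Tseq\Gamma{M'}{\Timpl\sigma\sigma}$, and since $\Fix M$ has type $\sigma$ the induction hypothesis applied to $\Fix M\Termso N'$ gives $\Tseq\Gamma{N'}\sigma$, whence $\Tseq\Gamma{\App{M'}{N'}}\sigma$. For the converse direction, suppose $\Tseq\Gamma{\App{M'}{N'}}\sigma$; then $\Tseq\Gamma{M'}{\Timpl\tau\sigma}$ and $\Tseq\Gamma{N'}\tau$ for some $\tau$. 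The induction hypothesis yields $\Tseq\Gamma M{\Timpl\tau\sigma}$ and $\Tseq\Gamma{\Fix M}\tau$. But the typing rule for $\Fix$ forces $M$ to have type $\Timpl\tau\tau$, so by the uniqueness of types in simply typed calculi $\Timpl\tau\sigma = \Timpl\tau\tau$, giving $\sigma=\tau$ and $\Tseq\Gamma{\Fix M}\sigma$. The symmetric rule is handled identically.

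The main (and only mildly nontrivial) obstacle is this last point: reconciling the two types that appear in the unfolding rules requires uniqueness of typing, which here follows directly from the fact that our lambda-abstractions are type-annotated ($\Abst x\sigma M$). Once that observation is in place, the whole induction is mechanical.
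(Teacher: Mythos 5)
Your proof is correct and takes essentially the same route as the paper, which records this result as an ``easy induction on the derivation of $M'\Termso M$ or $M\Termso M'$''; you supply the details, correctly isolating the fixpoint-unfolding rules as the only non-congruence cases and the role of type uniqueness (from the annotations on abstractions) in reconciling the types there. One minor quibble: proving the two directions \emph{simultaneously} is not actually essential as you claim, since in each unfolding case the left-to-right conclusion invokes only left-to-right instances of the induction hypothesis on the premises (and symmetrically for right-to-left), but bundling them is harmless and matches the biconditional form of the statement.
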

\begin{proof}
  Easy induction on the derivation of $M'\Termso M$ or $M\Termso M'$.
\end{proof}

The following property is natural and worth being noticed, though it
plays no technical role in the sequel.
\begin{proposition}
  The relation $\Termso$ is a preorder relation on
  $\Termsty\Gamma\sigma$, for each context $\Gamma$ and type $\sigma$.
\end{proposition}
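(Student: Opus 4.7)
The plan is to establish reflexivity and transitivity separately. Reflexivity follows by a straightforward structural induction on $M$: variables, numerals, and $\Dice r$ are handled by their dedicated axioms; $\Errdiv \Termso \Errdiv$ is an instance of $\Errdiv \Termso M$, and $\Errconv \Termso \Errconv$ an instance of $M \Termso \Errconv$; every remaining term-former admits a matching congruence rule which, combined with the inductive hypothesis on each immediate subterm, yields $M \Termso M$.

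The real work is transitivity: given $M_1 \Termso M_2$ and $M_2 \Termso M_3$, show $M_1 \Termso M_3$. I would argue this by induction on the sum of the heights of the two derivations, with case analysis on the last rule used in each. The easy cases are: if $M_1 \Termso M_2$ was concluded by the axiom $\Errdiv \Termso M_2$, then $M_1 = \Errdiv$ and the same axiom gives $\Errdiv \Termso M_3$; dually if $M_2 \Termso M_3$ ended with $M_2 \Termso \Errconv$. When both derivations end with the same congruence rule (e.g.\ both for $\Succ$, $\Pred$, $\If{}{}{}$, $\Let x{}{}$, $\App{}{}$, $\Abst x\sigma{}$, or $\Fix{}$), one combines the corresponding premises by the inductive hypothesis and reapplies the rule once more.

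The delicate cases are those in which one of the derivations ends with one of the two asymmetric $\Fix$ rules. A representative example: suppose $M_1 \Termso M_2$ ends with the rule whose conclusion is $\Fix A \Termso \App{A'}{B'}$, so $M_1 = \Fix A$, $M_2 = \App{A'}{B'}$, with premises $A \Termso A'$ and $\Fix A \Termso B'$. If the second derivation is the congruence for application, then $M_3 = \App{A''}{B''}$ with $A' \Termso A''$ and $B' \Termso B''$; the inductive hypothesis gives $A \Termso A''$ and $\Fix A \Termso B''$, and we reapply the same $\Fix$ rule to obtain $\Fix A \Termso \App{A''}{B''}$. If instead it ends with the dual rule $\App{A'}{B'} \Termso \Fix{A''}$, then its premises are $A' \Termso A''$ and $B' \Termso \Fix{A''}$; applying the inductive hypothesis to the sub-derivations $\Fix A \Termso B'$ and $B' \Termso \Fix{A''}$ already yields $\Fix A \Termso \Fix{A''} = M_3$. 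The symmetric situation where the \emph{second} derivation ends with a $\Fix$-rule is handled by the mirror argument.

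The main obstacle is purely combinatorial: exhaustively checking all cross-cases between the congruence rules and the two asymmetric $\Fix$ rules, while verifying that in each case the inductive hypothesis is applied to a pair whose derivation-height-sum is strictly smaller, so that the induction is well-founded. No case introduces a genuinely new difficulty beyond this bookkeeping.
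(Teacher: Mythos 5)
The paper states this proposition without any proof (it explicitly remarks that the property ``plays no technical role in the sequel''), so there is nothing to compare your argument against; judged on its own, your proof is correct. Reflexivity by structural induction is routine, and your induction on the sum of derivation heights for transitivity is the right measure: it correctly licenses the applications of the inductive hypothesis in the asymmetric cases, including the one where a \emph{whole} derivation is paired with a \emph{proper} subderivation of the other (e.g.\ combining $\Fix A\Termso\Fix{A'}$ with the premise $\Fix{A'}\Termso B''$ of a terminal $\Fix{A'}\Termso\App{A''}{B''}$ rule). Your treatment of the delicate $\Fix$/application cross-cases is the genuinely non-obvious part and it is handled correctly; the remaining cases with $\Errdiv$ or $\Errconv$ as the middle term are forced by inversion (the only rule deriving $\Errconv\Termso M_3$ is $M_3=\Errconv$, and dually for $\Errdiv$), so your two ``easy cases'' do cover them. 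One minor point worth making explicit: concluding $\Errdiv\Termso M_3$ by the axiom requires $M_3\in\Termsty\Gamma\Tnat$, which follows from the type-preservation lemma stated just before the proposition.
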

% \begin{proof}[Sketch of proof]
%   Reflexivity is proven by a simple induction on terms (or rather, on
%   typing derivations). For transitivity one assumes that
%   $M\Termso N\Termso P$ for $M,N,P\in\Termsty\Gamma\sigma$ and one
%   reasons by induction on the maximal height of the two derivations in
%   the deduction system of
%   Figure~\ref{fig:PCF-extensional-preorder}. The only non-trivial
%   cases are when $M=\Fix{M_0}$ and when $M=\App{M_0}{M_1}$.
% \end{proof}

\subsection{Denotational semantics in
  $\EPCOH$}\label{sec:extensional-pcoh-semantics}
We use the functions $\Fsucc,\Fpred:\Nat\to\Nat$:
$\Fsucc(n)=n+1$, $\Fpred(0)=0$ and $\Fpred(n+1)=n$.

Given a type $\sigma$, we define an extensional PCS $\Tsem\sigma$ by
induction: $\Tsem\Tnat=\Flate\Nat$ and
$\Tsem{\Timpl\sigma\tau}=\Simpl{\Tsem\sigma}{\Tsem\tau}$.  Given a
context $\Gamma=(x_1:\sigma_1,\dots,x_k:\sigma_k)$, we define
$\Tsem\Gamma=\Bwith_{i=1}^k\Tsem{\sigma_i}$ which is an object of
$\EPCOH$.

Next given $M\in\Termsty\Gamma\tau$ we define
$\Psem M\Gamma\in\Kl\EPCOH(\Tsem\Gamma,\Tsem\tau)$ by induction on
$M$. We know that this morphism is fully characterized by the
associated function
$f=\Fun{\Psem M\Gamma}:\Pcohp{\Bwith_{i=1}^k\Extc{\Tsem{\sigma_i}}}
\to\Pcoh{\Extc{\Tsem\tau}}$. Let $C=\Bwith_{i=1}^k\Extc{\Tsem{\sigma_i}}$.

\Itmath
If $M=x_i$ then $\Psem M\Gamma$ is defined as the following
composition of morphisms in $\EPCOH$:\\
\begin{tikzcd}
  \Excl C\arrow[r,"\Der C"] & C\arrow[r,"\Proj i"] & \Tsem{\sigma_i} 
\end{tikzcd},
so that $f(\Vect u)=u_i$.

\Itmath
If $M=\Num n$ then $\Psem M\Gamma$ is defined as the following
composition of morphisms in $\EPCOH$:\\
\begin{tikzcd}
  \Excl C\arrow[r,"\Weak C"] & \One\arrow[r,"\Base n"] & \Flate\Nat
\end{tikzcd} so that $f(\Vect u)=\Base n$.

\Itmath
If $M=\Errdiv$ then $\Psem M\Gamma=0$ so that $f(\Vect u)=0$.

\Itmath
If $M=\Errconv$ then $\Psem M\Gamma$ is defined as the following
composition of morphisms in $\EPCOH$:\\
\begin{tikzcd}
  \Excl C\arrow[r,"\Weak C"] & \One\arrow[r,"\Base \Nerr"] & \Flate\Nat
\end{tikzcd} so that $f(\Vect u)=\Base\Nerr$.

\Itmath
If $M=\Succ N$ with $N\in\Termsty\Gamma\Tnat$ then $\Psem M\Gamma$ is
defined as the following composition of morphisms in $\EPCOH$:
\begin{tikzcd}
  \Excl C\arrow[r,"\Psem N\Gamma"]
  & \Flate\Nat\arrow[r,"\Bfune\Fsucc"] & \Flate\Nat
\end{tikzcd} so that
$f(\Vect u)=\sum_{n\in\Nat}g(\Vect u)_n\Base{n+1}+g(\Vect
u)_\Nerr\Base\Nerr$ where $g=\Fun{\Psem N\Gamma}$.

\Itmath
If $M=\Pred N$ with $N\in\Termsty\Gamma\Tnat$ then $\Psem M\Gamma$ is
defined as the following composition of morphisms in $\EPCOH$:
\begin{tikzcd}
  \Excl C\arrow[r,"\Psem N\Gamma"] &
  \Flate\Nat\arrow[r,"\Bfune\Fpred"] & \Flate\Nat
\end{tikzcd} so that
$f(\Vect u)=g(\Vect u)_0\Base 0+\sum_{n\in\Nat}g(\Vect
u)_{n+1}\Base{n}+g(\Vect u)_\Nerr\Base\Nerr$ where
$g=\Fun{\Psem N\Gamma}$.

\Itmath
Assume that $M=\If N{P_1}{P_2}$ with
$N,P_1,P_2\in\Termsty\Gamma\Tnat$.  Let $s=\Psem N\Gamma$ and, for
$n\in\Nat$, let $t_n\in\Kl\EPCOH(C,\Flate\Nat)$ be defined by
$t_0=\Psem{P_1}\Gamma$ and $t_{n+1}=\Psem{P_2}\Gamma$ for each
$n\in\Nat$. Let
$t=\Tuple{t_n}_{n\in\Nat}\in\Kl\EPCOH(C,(\Flate\Nat)^\Nat)$. Then
$\Psem M\Gamma$ is defined as the following composition of morphisms
in $\EPCOH$:\\
\begin{tikzcd}
  \Excl C\arrow[r,"\Contr C"]
  & \Tens{\Excl C}{\Excl C}\arrow[r,"\Tens{s}{\Prom t}"]
  & \Tens{\Flate\Nat}{\Exclp{(\Flate\Nat)^\Nat}}\arrow[r,"\Pcasee{\Nat,\Nat}"]
  & \Flate\Nat
\end{tikzcd}\\
so that
$f(\Vect u)=g(\Vect u)_0h_1(\Vect u)+\left(\sum_{n=0}^\infty g(\Vect
  u)_{n+1}\right)h_2(\Vect u)+g(\Vect u)_\Nerr\Base\Nerr$ where
$g=\Fun{\Psem N\Gamma}$ and $h_i=\Fun{\Psem{P_i}\Gamma}$ for $i=1,2$.

\Itmath
Assume that $M=\Let xNP$ with $N\in\Termsty\Gamma\Tnat$ and
$P\in\Termsty{\Gamma,x:\Tnat}\Tnat$. Let
$s=\Psem N\Gamma\in\Kl\EPCOH(C,\Flate\Nat)$ and
$t=\Psem P{\Gamma,x:\Tnat}\in\Kl\EPCOH(\With
C{\Flate\Nat},\Flate\Nat)$ so that
$t\Compl\Seelyt\in\EPCOH(\Tens{\Excl
  C}{\Excl{\Flate\Nat}},\Flate\Nat)$. Then $\Psem M\Gamma$ is defined
as the following composition of morphisms in $\EPCOH$:\\
\begin{tikzcd}
    \Excl C\arrow[r,"\Contr C"] & \Tens{\Excl C}{\Excl
      C}\arrow[r,"\Tens{\Excl C}{s}"] & \Tens{\Excl
      C}{\Flate\Nat}\arrow[r,"\Slete{t\Compl\Seelyt}"] & \Flate\Nat
\end{tikzcd}\\
so that
$f(\Vect u)=\sum_{n\in\Nat}g(\Vect u)_nh(\Vect u,\Base n)+g(\Vect
u)_\Nerr\Base\Nerr$ where $g=\Fun{\Psem N\Gamma}$ and
$h=\Fun{\Psem P{\Gamma,x:\Tnat}}$.

\Itmath
Assume that $M=\App NP$ with $N\in\Termsty\Gamma{\Timpl\sigma\tau}$
and $P\in\Termsty\Gamma\sigma$.  Let
$s=\Psem N\Gamma\in\Kl\EPCOH(C,\Tsem{\Timpl\sigma\tau})$ and
$t=\Psem P{\Gamma}\in\Kl\EPCOH(C,\Tsem\sigma)$.  Then $\Psem M\Gamma$
is defined as the following composition of morphisms in $\EPCOH$:\\
\begin{tikzcd}
  \Excl C\arrow[r,"\Contr C"]
  & \Tens{\Excl C}{\Excl C}\arrow[r,"\Tens{s}{\Prom t}"]
  & \Tens{(\Limpl{\Excl{\Tsem\sigma}}{\Tsem\tau})}
  {\Excl{\Tsem\sigma}}\arrow[r,"\Evlin"]
  & \Tsem\tau
\end{tikzcd}\\
so that $f(\Vect u)=\Fun{g(\Vect u)}(h(\Vect u))$ where
$g=\Fun{\Psem N\Gamma}$ and $h=\Fun{\Psem t\Gamma}$.

\Itmath
If $M=\Abst x\sigma N$ with $N\in\Termsty{\Gamma,x:\sigma}{\phi}$ (so
that $\tau=(\Timpl\sigma\phi)$), let
$s=\Psem{N}{\Gamma,x:\sigma}\in\Kl\EPCOH(\With
C{\Tsem\sigma},\Tsem\phi)$, so that
$s\Compl\Seelyt\in\EPCOH(\Tens{\Excl
  C}{\Excl{\Tsem\sigma}},\Tsem\phi)$.  Then
$\Psem M\Gamma=\Curlin(s\Compl\Seelyt)\in\Kl\EPCOH(C,\Tsem\tau)$ so
that $f(\Vect u)$ is the element of $\Extd{\Tsem{\Timpl\sigma\phi}}$
characterized by $\forall u\in\Pcoh{\Extc{\Tsem{\sigma}}}$
$\Fun{f(\Vect u)}=g(\Vect u,u)$ where
$g=\Fun{\Psem N{\Gamma,x:\sigma}}$.

\Itmath
If $M=\Fix N$ with $N\in\Termsty\Gamma{\Timpl\tau\tau}$, let
$s=\Psem M\Gamma\in\Kl\EPCOH(C,\Simpl{\Tsem\tau}{\Tsem\tau})$. Then
$\Psem M\Gamma=\Fixpcoh{\Extc{\Tsem\tau}}\Compl s$ which belongs to
$\Kl\EPCOH(C,\Tsem\tau)$ by Theorem~\ref{th:ext-pcoh-fixpoint} so that
$f(\Vect u)=\sup_{n\in\Nat}h^n(0)$ where $h=\Fun{g(\Vect u)}$, where
$g=\Fun{\Psem N\Gamma}$. Notice that $t=\Psem M\Gamma$ satisfies the
following commutation in $\EPCOH$:
\begin{equation}\label{eq:fixpoint-diag}
\begin{tikzcd}
  \Excl C\arrow[r,"\Contr C"]\arrow[rrd,swap,"t"]
  & \Tens{\Excl C}{\Excl C}\arrow[r,"\Tens s{\Prom t}"]
  & \Tens{\Limplp{\Excl{\Tsem\tau}}{\Tsem\tau}}{\Excl{\Tsem\tau}}
  \arrow[d,"\Evlin"]\\
  && \Tsem\tau
\end{tikzcd}
\end{equation}

\begin{lemma}[Substitution]\label{lemma:PCF-substitution}
  If $\Tseq{\Gamma,x:\sigma}M\tau$ and $\Tseq\Gamma N\sigma$ then
  $\Psem{\Subst MNx}\Gamma$ coincides with the composition
  of morphisms (setting $C=\Tsem\Gamma$)
  \begin{tikzcd}
    \Excl C\ar[r,"\Contr C"]
    & \Excl C\ITens\Excl C\ar[r,"\Id\ITens\Prom{\Psem N\Gamma}"]
    &[1em] \Excl C\ITens\Excl{\Tsem\sigma}\ar[r,"\Psem M\Gamma"]
    & \Tsem\tau
  \end{tikzcd}
  where we keep implicit the Seely isomorphism.
\end{lemma}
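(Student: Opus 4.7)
The plan is to proceed by induction on the structure of $M$, following the standard template of the substitution lemma in any Seely category model of the simply-typed $\lambda$-calculus. Since we have just established that $\EPCOH$ is a Seely category with fixpoints at all types, and since $\Subst MNx$ is the usual capture-avoiding substitution, the underlying diagram chase is the same as for the corresponding lemma for $\PCOH$ in~\cite{EhrhardPaganiTasson18}; the only additional content to check is that each morphism appearing in the diagrams lives in $\EPCOH$ and not just in $\PCOH$, which follows from the results of this section.

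For the base cases: if $M=x$, then $\Psem M{\Gamma,x:\sigma}$ is (up to Seely coherence) the second projection followed by dereliction, and precomposing with the displayed contraction--promotion pattern collapses via the counit law $\Der{\Tsem\sigma}\Compl\Prom{\Psem N\Gamma}=\Psem N\Gamma$ of the comonad $\Excl{}$ and the counit law of the comonoid $(\Excl C,\Contr C,\Weak C)$. If $M$ is a different variable $y$, or a constant $\Num n$, $\Dice r$, $\Errdiv$, $\Errconv$, then $\Psem M{\Gamma,x:\sigma}$ factors through a weakening on the $\Tsem\sigma$-component; the weakening erases the $\Prom{\Psem N\Gamma}$ branch, and the remaining side yields $\Psem M\Gamma$ after collapsing $\Excl C\ITens\One\Isom\Excl C$. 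For the inductive cases, each constructor of $M$ is interpreted as a diagram combining a contraction of $\Excl C$, promotions of subterm interpretations, and one of the fixed linear combinators $\Bfune\Fsucc$, $\Bfune\Fpred$, $\Pcasee{\Nat,\Nat}$, $\Slete{-}$, $\Evlin$, $\Curlin$, or $\Fixpcoh{-}$. Applying the induction hypothesis at each sub-occurrence replaces it by its substituted version, and one reassembles the result using coassociativity and cocommutativity of $(\Excl C,\Contr C,\Weak C)$, naturality of $\Contr C$, and functoriality of $\Excl{}$ together with the Seely iso $\Seelyt$, so as to match the single contraction--promotion pattern displayed in the statement.

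The main obstacle is purely diagrammatic bookkeeping: in cases like $\If N{P_1}{P_2}$, $\Let xNP$ or $\App NP$, the induction hypothesis introduces a separate contraction for each sub-occurrence, and collapsing these nested contractions into the single $\Contr C\colon\Excl C\to\Excl C\ITens\Excl C$ of the statement requires a careful use of the comonoid coherence laws. The case $M=\Fix N$ additionally relies on diagram~\eqref{eq:fixpoint-diag} and on the observation (Theorem~\ref{th:ext-pcoh-fixpoint}) that $\Fixpcoh{\Extc{\Tsem\tau}}$ in $\EPCOH$ is the very same matrix as in $\PCOH$, so that the $\PCOH$ computation transfers verbatim. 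The case $M=\Abst y\sigma N$ (with $y$ renamed to be distinct from $x$ and from the variables of $N$) uses naturality of $\Curlin$ together with the strong monoidal structure of $\Excl{}$. Since every morphism in the diagrams has already been shown to belong to $\EPCOH$, no new analytic or extensional property needs to be verified and the proof is a direct transcription of the $\PCOH$ argument.
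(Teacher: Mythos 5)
Your proof is correct and follows exactly the route the paper takes: the paper's proof is the one-line ``Simple induction on $M$'', and your write-up is precisely that induction spelled out, with the standard Seely-category diagram chase (comonoid coherence for collapsing contractions, the counit laws for the variable and weakening cases, and diagram~\eqref{eq:fixpoint-diag} for $\Fix{}$). Nothing further is needed.
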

\begin{proof}
  Simple induction on $M$.
\end{proof}

\begin{theorem}
  If $M\in\Termsty{}\Tnat$ then ${\Psem M{}}_\Nerr=\Redwhprc M$.
\end{theorem}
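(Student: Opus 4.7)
\proofitem{Proof plan.}

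The equality decomposes into two inequalities. The direction $\Psem M{}_\Nerr\geq\Redwhprc M$ is a soundness statement, while $\Psem M{}_\Nerr\leq\Redwhprc M$ is the proper adequacy statement; the methods are standard adaptations of those in~\cite{EhrhardPaganiTasson18}, but the presence of $\Errconv$ and the extensional structure require some care.

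For soundness, the plan is to first prove by induction on the reduction rules of Figure~\ref{fig:PCF-reduction-rules} that $\Psem\_{}$ is invariant under deterministic reduction, that is, $M\Rel\Redwhd M'$ implies $\Psem M{}=\Psem{M'}{}$. The interesting cases are $\beta$-reduction and the $\Let$-rule, both handled by Lemma~\ref{lemma:PCF-substitution}, and $\Fix M\Rel\Redwhd\App M{\Fix M}$, handled by Diagram~\Eqref{eq:fixpoint-diag}. For the probabilistic rule we have $\Psem{\Dice r}{}=r\Base 0+(1-r)\Base 1$ by construction, so whenever $M\Rel{\Redwhp p}M'$ one gets $\Psem M{}\geq p\,\Psem{M'}{}$ (componentwise), the congruence cases following from multilinearity and monotonicity of the semantic operations. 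Iterating along a convergence path $\gamma\in\Cpathset MP$ gives $\Psem M{}\geq \Cpathp\gamma\,\Psem P{}$; specializing to $P=\Errconv$, for which $\Psem\Errconv{}_\Nerr=1$, and summing over $\gamma\in\Cpathset M\Errconv$ yields $\Psem M{}_\Nerr\geq\Redwhprc M$.

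For adequacy, I would use a logical/reducibility relation, biorthogonally defined. At ground type set
\begin{align*}
  u\Realize\Tnat M\Iff u_n\leq\Redwhpr M{\Num n}\text{ for all }n\in\Nat
  \text{ and }u_\Nerr\leq\Redwhprc M,
\end{align*}
and at higher type set $t\Realize{\Timpl\sigma\tau}M$ iff $\Fun t(u)\Realize\tau\App MN$ whenever $u\Realize\sigma N$. The fundamental lemma then reads: if $\Tseqstcl{\Vect x}{\Vect\sigma}M:\tau$ and $u_i\Realize{\sigma_i}N_i$ for each $i$, then $\Fun{\Psem M\Gamma}(\Vect u)\Realize\tau\Subst M{\Vect N}{\Vect x}$. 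The cases of variables, numerals, $\Dice r$, $\Errdiv$, $\Errconv$, and the term constructors $\Succ,\Pred,\If,\Let,\App,\Abst{x}\sigma{}$ follow by unfolding the semantic definitions of Section~\ref{sec:extensional-pcoh-semantics} and using the congruence rules of the operational semantics. Applied at $\Gamma=\emptyset$ and $\tau=\Tnat$ the lemma immediately gives $\Psem M{}_\Nerr\leq\Redwhprc M$.

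The main obstacle is the fixpoint case of the fundamental lemma. There one needs \emph{admissibility} of the logical relation: each $\Realize\sigma$ contains $(0,M)$ for every $M$ of the right type (trivial from the definition at $\Tnat$, transferred to higher types by induction), and is closed under $\leq$-directed suprema of its first component. Closure at ground type is immediate because the bounds $u_n\leq\Redwhpr M{\Num n}$ and $u_\Nerr\leq\Redwhprc M$ are preserved under pointwise suprema in $\Pcoh{\Extc{\Tsem\Tnat}}$; closure at higher type is a routine induction using Scott continuity of the application maps $\Fun{(\cdot)}$ in $\Kl\EPCOH$. Given admissibility, since $\Fun{\Psem{\Fix N}\Gamma}(\Vect u)=\sup_n h^n(0)$ with $h=\Fun{\Fun{\Psem N\Gamma}(\Vect u)}$, and since the reduction $\Fix N\Rel\Redwhd\App N{\Fix N}$ lets one prove by induction on $n$ that $h^n(0)\Realize\tau\Subst{\Fix N}{\Vect N}{\Vect x}$, the fixpoint case closes. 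The only genuinely new observation with respect to~\cite{EhrhardPaganiTasson18} is that the $\Nerr$-component behaves just like any other coordinate under the logical relation, so the addition of $\Errconv,\Errdiv$ and the extensional structure introduce no further difficulty.
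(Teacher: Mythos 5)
The paper does not actually spell out a proof of this theorem: it states only that the argument ``follows exactly the same pattern as the proof of adequacy in \cite{EhrhardPaganiTasson18}'', and your two-part plan --- soundness via invariance of the interpretation under reduction, adequacy via an admissible logical relation, with $\Errconv$ and the $\Nerr$-coordinate treated as one extra ``numeral'' and the extensional structure playing no role --- is precisely that pattern. Your adequacy half is correct in outline: the ground-type relation, its closure under deterministic anti-reduction (needed in the congruence and redex cases), and the admissibility argument ($0$ in every $\Realize\sigma$, closure under directed sups) for the $\Fix{}$ case are exactly the standard ingredients.

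There is, however, one step in your soundness half that fails as written. From ${\Psem M{}}\geq p\,\Psem{M'}{}$ for each probabilistic step you correctly obtain $\Psem M{}\geq\Cpathp\gamma\,\Base\Nerr$ for each \emph{single} path $\gamma\in\Cpathset M\Errconv$, but you then ``sum over $\gamma$'': a family of inequalities $x\geq a_\gamma$ does not yield $x\geq\sum_\gamma a_\gamma$. The paper's own example $M=\If{\Dice{\frac 12}}{\Num 0}{\Num 0}$ illustrates the issue: each of the two distinct paths to $\Num 0$ gives only $\Psem M{}\geq\frac 12\Base 0$, and no amount of adding these inequalities proves $\Psem M{}\geq\Base 0$. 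What is needed is the exact one-step decomposition $\Psem M{}=\sum_{M\Rel{\Redwhp p}M'}p\,\Psem{M'}{}$ for every non-\Whnormal{} $M$ (a well-defined sum by Lemma~\ref{lemma:redwhp-uniqueness}), from which an induction on $n$ gives ${\Psem M{}}_\Nerr\geq\sum\Eset{\Cpathp\gamma\St\gamma\in\Cpathset M\Errconv,\ \Cpathl\gamma\leq n}$, and one concludes by letting $n\to\infty$. This is a local repair rather than a change of strategy, but the per-path inequality alone does not suffice.
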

\begin{proof}
  Follows exactly the same pattern as the proof of adequacy
  in~\cite{EhrhardPaganiTasson18}.
\end{proof}

\begin{theorem}\label{th:psem-ext-order}
  Let $\Gamma=(x_1:\sigma_1,\dots,x_k:\sigma_k)$ be a typing context
  and assume that $M_1,M_2\in\Termsty{}\tau$ and $M_1\Termso
  M_2$. Then for all $\Vect u\in\prod_{i=1}^k\Extd{\Tsem{\sigma_i}}$
  one has
  $\Fun{\Psem{M_1}\Gamma}(\Vect
  u)\Exto{\Tsem\tau}\Fun{\Psem{M_2}\Gamma}(\Vect u)$.
\end{theorem}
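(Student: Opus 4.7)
The plan is to proceed by structural induction on the derivation of $M_1 \Termso M_2$ using the rules of Figure~\ref{fig:PCF-extensional-preorder}. The main ingredient is that every semantic construct used in the interpretation of Section~\ref{sec:extensional-pcoh-semantics} is $\Exto{}$-monotone, which is precisely the ``enriched Seely category'' structure discussed in Section~\ref{sec:pcoh-ext-enriched-model}: composition, tensor, Seely isomorphisms, dereliction, weakening, contraction, digging, and the promotion $\Prom{(\_)}$ are all monotone for the pointwise preorders, and so are the specific morphisms $\Bfune f$, $\Pcasee{I,J}$ and $\Slete{\_}$ introduced in Sections~\ref{sec:pcoh-ext-basic-functions}, \ref{sec:pcoh-ext-case} and~\ref{sec:pcoh-ext-let} (the last one by Lemma~\ref{lemma:let-pcoh-extensional-monotone}).

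The easy cases are the axioms $\Errdiv \Termso M$ and $M \Termso \Errconv$: the first uses that $\Psem\Errdiv\Gamma = 0$ is $\Exto{\Flate\Nat}$-minimal by Proposition~\ref{prop:pre-ext-basic-props}, and the second uses that $\Psem\Errconv\Gamma(\Vect u) = \Base\Nerr$, which is $\Exto{\Flate\Nat}$-maximal (as shown just after the proof of the characterization of $(\Extd{\Flate I},\Exto{\Flate I})$). The reflexivity axioms $x \Termso x$, $\Num n \Termso \Num n$, $\Dice r \Termso \Dice r$ are immediate by reflexivity of $\Exto{\Tsem\sigma}$. Each of the congruence rules (for $\Succ$, $\Pred$, $\If{}{}{}$, $\Let xNP$, $\App{}{}$, $\Abst x\sigma{\_}$, $\Fix{\_}$) reduces, after applying the induction hypotheses, to an instance of monotonicity of the corresponding combinator listed above; for $\Abst x\sigma{\_}$ one uses additionally that $\Curlin$ is monotone, which follows from Lemma~\ref{lemma:bilinear-epcs-charact} and the monotonicity of $\Seelyt$.

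The two genuinely interesting cases are the fixpoint-unfolding rules. For $\Fix M \Termso \App{M'}{N'}$ derived from $M \Termso M'$ and $\Fix M \Termso N'$: fix $\Vect u$ and set $g = \Fun{\Psem M\Gamma}(\Vect u)$, $g' = \Fun{\Psem{M'}\Gamma}(\Vect u)$, $h = \Fun{\Psem{\Fix M}\Gamma}(\Vect u)$, $h' = \Fun{\Psem{N'}\Gamma}(\Vect u)$. By induction $g \Exto{\Simpl{\Tsem\tau}{\Tsem\tau}} g'$ and $h \Exto{\Tsem\tau} h'$. The commutation diagram~\eqref{eq:fixpoint-diag} for $\Fix M$ gives $h = \Fun g(h)$; then $\Exto{}$-monotonicity of $\Fun g$ (which holds since $g \in \Extd{\Simpl{\Tsem\tau}{\Tsem\tau}}$) together with the pointwise characterization of $\Exto{\Simpl{\Tsem\tau}{\Tsem\tau}}$ yield $h = \Fun g(h) \Exto{\Tsem\tau} \Fun g(h') \Exto{\Tsem\tau} \Fun{g'}(h')$, and by transitivity (Proposition~\ref{prop:pre-ext-basic-props}) this is exactly $\Fun{\Psem{\App{M'}{N'}}\Gamma}(\Vect u)$. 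The symmetric rule $\App{M'}{N'} \Termso \Fix M$ from $M' \Termso M$, $N' \Termso \Fix M$ is handled dually: $\Fun{g'}(h') \Exto{\Tsem\tau} \Fun g(h) = h$, again using the fixpoint equation.

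The main obstacle is really just the fixpoint-unfolding rules: everywhere else one is simply propagating the induction hypothesis through a known monotone construction. The key conceptual point is that the diagram~\eqref{eq:fixpoint-diag} produces a \emph{strict} equation $\Psem{\Fix M}\Gamma = \Psem{\App M{\Fix M}}\Gamma$ (and not merely an inequality), which is what lets the unfolding rules slide in either direction of $\Exto{}$. Note that the statement as given mentions $\Termsty{}\tau$, but the argument works uniformly in $\Gamma$ (the substitution Lemma~\ref{lemma:PCF-substitution} is not needed here because the $\Termso$ rules do not trigger substitution; reduction is handled separately by the standard adequacy argument).
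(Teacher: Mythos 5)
Your proposal is correct and follows essentially the same route as the paper: induction on the derivation of $M_1\Termso M_2$, with the easy cases handled by $\Exto{}$-minimality of $0$ and maximality of $\Base\Nerr$, the congruence cases by $\Exto{}$-monotonicity of the semantic combinators, and the fixpoint-unfolding rules by the commutation of Diagram~\eqref{eq:fixpoint-diag} combined with monotonicity of application. The only difference is presentational — you argue pointwise on $\Fun g(h)$ where the paper writes the same inequality at the morphism level via $\Psem{\Fix M}\Gamma=\Evlin\Compl\Tensp{s}{\Prom t}\Compl\Contr C$ — so the two arguments coincide in substance.
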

\begin{proof}
  By induction on the height of the proof of $M\Termso N$ in the
  deduction system of Figure~\ref{fig:PCF-extensional-preorder}.

  If $M=\Errdiv$ and $N$ is any term such that $\Tseq\Gamma N\Tnat$
  then we use the fact that $\emptyset$ is
  $\Exto{\Flate\Nat}$-minimal.
  
  Assume $M=x_i$, then $N=x_i$ or $N=\Errconv$ (if
  $\sigma_i=\Tnat$). In the first case we use reflexivity of
  $\Exto{\Tsem{\tau_i}}$ (see
  Proposition~\ref{prop:pre-ext-basic-props}) and in the first case we
  use the fact that $\Base\Nerr$ is $\Exto{\Flate\Nat}$-maximal. The
  cases $M=\Num n$ and $M=\Dice r$ are similar.

  Assume that $M=\Succ{M_0}$ so that $\tau=\Tnat$ and
  $\Tseq\Gamma{M_0}\Tnat$. Then we have either $N=\Errconv$ or
  $N=\Succ{N_0}$ with $M_0\Termso N_0$. In the first case we use as
  above the $\Exto{\Flate\Nat}$-maximality of $\Base\Nerr$. In the
  second case, we use the inductive hypothesis, the definition of
  $\Psem M\Gamma$ and the $\Exto{\Flate\Nat}$-monotonicity of
  $\Bfune\Fsucc$, see Section~\ref{sec:pcoh-ext-basic-functions}. The
  cases $M=\Pred{M_0}$ and $\If{M_0}{M_1}{M_2}$ are similar, using
  Section~\ref{sec:pcoh-ext-case} for the conditional.

  Assume now that $M=\Let x{M_0}{M_1}$ with $\Tseq\Gamma{M_0}\Tnat$
  and $\Tseq{\Gamma,x:\Tnat}{M_1}{\Tnat}$. The case $N=\Nerr$ is dealt
  with as above so assume that $N=\Let x{N_0}{N_1}$ with
  $M_i\Termso N_i$ for $i=0,1$. By inductive hypothesis we have
  $\Psem{M_1}{\Gamma,x:\Tnat}
  \Exto{\Excl{\Tsem{\Gamma,x:\Tnat},\Tsem\Tnat}}\Psem{N_1}{\Gamma,x:\Tnat}$
  and hence
  $\Psem{M_1}{\Gamma,x:\Tnat}\Compl\Seelyt
  \Exto{\Tens{\Excl{\Tsem{\Gamma,x:\Tnat}}}{\Excl{\Flate\Nat}},\Flate\Nat}
  \Psem{N_1}{\Gamma,x:\Tnat}\Compl\Seelyt$. It follows that
  $\Slete{\Psem{M_1}{\Gamma,x:\Tnat}\Compl\Seelyt}
  \Exto{\Tens{\Excl{\Tsem{\Gamma,x:\Tnat}}}{\Flate\Nat},\Flate\Nat}
  \Slete{\Psem{N_1}{\Gamma,x:\Tnat}\Compl\Seelyt}$ by
  Section~\ref{sec:pcoh-ext-let}. By inductive hypothesis we have also
  $\Psem{M_1}\Gamma\Exto{\Excl{\Tsem\Gamma},\Flate\Nat}\Psem{N_1}\Gamma$
  and hence
  $\Psem M\Gamma\Exto{\Excl{\Tsem\Gamma},\Flate\Nat}\Psem N\Gamma$ by
  definition of these interpretations and $\Exto{}$-monotonicity of
  composition. The case $M=\Abst x\sigma{M_0}$ with
  $\Tseq{\Gamma,x:\sigma}{M_0}\phi$ (and hence
  $\tau=\Timpl\sigma\phi$) is similar.

  Assume that $M=\App{M_0}{M_1}$ with
  $\Tseq\Gamma{M_0}{\Timpl\sigma\tau}$ and
  $\Tseq\Gamma{M_1}\sigma$. The case $N=\Errconv$ (and hence
  $\tau=\Tnat$) is dealt with as above. Assume that $N=\App{N_0}{N_1}$
  with $M_i\Termso N_i$ for $i=0,1$. By inductive hypothesis
  $\Psem{M_1}\Gamma\Exto{\Excl{\Tsem\Gamma},\Tsem\sigma}\Psem{N_1}\Gamma$
  and hence
  $\Prom{\Psem{M_1}\Gamma}\Exto{\Excl{\Tsem\Gamma},\Excl{\Tsem\sigma}}
  \Prom{\Psem{N_1}\Gamma}$ (see
  Section~\ref{sec:pcoh-ext-enriched-model}). We also have
  $\Psem{M_0}\Gamma
  \Exto{\Excl{\Tsem\Gamma},\Limpl{\Excl{\Tsem\sigma}}{\Tsem\tau}}
  \Psem{N_0}\Gamma$ and hence
  $\Psem M\Gamma\Exto{\Excl{\Tsem\Gamma},\Tsem\tau}\Psem N\Gamma$ by
  definition of these interpretations and $\Exto{}$-monotonicity of
  $\Evlin$. Assume last that $N=\Fix{N_0}$ with $M_0\Termso N_0$ and
  $M_1\Termso N$ (so that $\sigma=\tau$ and
  $\Tseq\Gamma{N_0}{\Timpl\tau\tau}$), by derivations shorter than
  that of $M\Termso N$. Setting $C=\Tsem\Gamma$, $s=\Psem{N_0}\Gamma$
  and $t=\Psem{N}\Gamma$, we know that
  Diagram~\Eqref{eq:fixpoint-diag} commutes. By inductive hypothesis,
  we have
  $\Psem{M_0}\Gamma\Exto{\Excl
    C,\Limpl{\Excl{\Tsem\tau}}{\Tsem\tau}}\Psem{N_0}\Gamma$ and
  $\Psem{M_1}\Gamma\Exto{\Excl C,\Tsem\tau}\Psem N\Gamma$ and hence
  \begin{align*}
    \Psem M\Gamma
    = \Evlin\Compl\Tensp{\Psem{M_0}\Gamma}{\Prom{\Psem{M_1}\Gamma}}
      \Compl\Contr C
    \Exto{\Excl C,\Tsem\tau}
      \Evlin\Compl\Tensp{\Psem{N_0}\Gamma}{\Prom{\Psem{N}\Gamma}}
      \Compl\Contr C=\Psem N\Gamma\,.
  \end{align*}

  Assume last that $M=\Fix{M_0}$ with
  $\Tseq\Gamma{M_0}{\Timpl\tau\tau}$. The case $N=\Errconv$ (and hence
  $\tau=\Tnat$) is dealt with as above. Assume $N=\Fix{N_0}$ with
  $M_0\Termso N_0$. By inductive hypothesis we have
  $\Psem{M_0}\Gamma\Exto{\Excl C,\Limpl{\Excl{\Tsem\tau}}{\Tsem\tau}}
  \Psem{N_0}\Gamma$ and hence
  \begin{align*}
    \Psem M\Gamma=\Fixpcoh{\Tsem\tau}\Compl\Prom{\Psem{M_0}\Gamma}
    \Exto{\Excl C,\Tsem\tau}
    \Fixpcoh{\Tsem\tau}\Compl\Prom{\Psem{N_0}\Gamma}=\Psem N\Gamma
  \end{align*}
  by Section~\ref{sec:pcoh-ext-enriched-model} and
  Theorem~\ref{th:ext-pcoh-fixpoint}. The last case is
  $N=\App{N_0}{N_1}$ with $M_0\Termso N_0$, $M_0\Termso N_0$ and
  $M\Termso N_1$, which is dealt with as the case $M=\App{M_0}{M_1}$
  and $N=\Fix{N_0}$ above.
\end{proof}

Combining the two previous theorems we get:
\begin{theorem}
  If $M,N\in\Termsty{}\Tnat$ satisfy $M\Termso N$, one has
  $\Redwhprc M={\Psem M{}}_\Nerr\leq{\Psem N{}}_\Nerr=\Redwhprc
  N$.
\end{theorem}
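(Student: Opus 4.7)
The plan is to combine the two immediately preceding theorems in a very direct way, since the statement is essentially their conjunction specialized to the $\Nerr$-coordinate. First I would invoke the adequacy theorem (the theorem stating ${\Psem M{}}_\Nerr=\Redwhprc M$) twice, once for $M$ and once for $N$, which immediately gives the two equalities in the displayed chain. What remains is purely the inequality ${\Psem M{}}_\Nerr\leq{\Psem N{}}_\Nerr$.

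For this, I would apply Theorem~\ref{th:psem-ext-order} with the empty context $\Gamma$ and the empty tuple $\Vect u$, yielding $\Psem M{}\Exto{\Tsem\Tnat}\Psem N{}$, that is, $\Psem M{}\Exto{\Flate\Nat}\Psem N{}$. The final step is then to observe that the extensional preorder on $\Flate\Nat$ controls the $\Nerr$-coordinate: by the very definition of $\Exto{\Flate I}$, the condition $u\Exto{\Flate I}v$ reads
\begin{align*}
  \sum_{i\in\Invi uv}(u_i-v_i)\leq v_\Nerr-u_\Nerr,
\end{align*}
and since the left-hand side is non-negative (every term in the sum is positive by definition of $\Invi uv$), this forces $u_\Nerr\leq v_\Nerr$. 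This observation was already explicitly noted just after the definition of $\Exto{\Flate I}$, so there is nothing new to prove here; it just has to be applied.

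There is no real obstacle: the entire content of the theorem is packaged in the two previous results, and the only mildly non-trivial point is noticing that $\Exto{\Flate\Nat}$-comparability implies ordinary inequality of the $\Nerr$-components, which is immediate from the definition. The proof therefore reduces to two lines citing the two preceding theorems and the defining inequality of $\Exto{\Flate\Nat}$.
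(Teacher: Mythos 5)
Your proof is correct and matches the paper's intent exactly: the paper gives no proof beyond the phrase ``Combining the two previous theorems,'' and your argument — adequacy for the two equalities, Theorem~\ref{th:psem-ext-order} at the empty context for $\Psem M{}\Exto{\Flate\Nat}\Psem N{}$, and the already-noted fact that $u\Exto{\Flate I}v$ forces $u_\Nerr\leq v_\Nerr$ — is precisely the intended combination.
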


\section{Approximating probabilities of convergence with a Krivine machine}
\label{sec:Krivine-machine}

\subsection{A Krivine machine computing polynomials}
Here we present the outputs of the machine which are trees
representing some kind of polynomials, and its inputs which, not
surprisingly are pairs made of a term and a stack. The main
peculiarities to keep in mind is that these states are not closed but
only ``almost closed'' in the sense that all free variables have ground
type.

\subsubsection{Infinite polynomials.}
Let $I$ be an at most countable set and $V$ be a finite set of
variables.  Intuitively, a variable $x\in V$ represents an $I$-indexed
family of scalars taken in some fixed semiring $\Bb K$ (in the sequel,
$\Bb K$ will be $\Realp$). So for each $x\in V$ and $i\in I$ we
introduce the notation $x(i)$ representing intuitively the $i$th
component of $x$. A multi-exponent is a family
$\Vect\mu=(\mu_x)_{x\in V}\in\Mfin I^V$. The support of $\Vect\mu$ is
$\Supp{\Vect\mu}=\Union_{x\in V}\Eset{i\in I\St\mu_x(i)\not=0}$, which
is a finite subset of $I$. The associated monomial is the formal
commutative product
$V^{\Vect\mu}=\prod_{x\in V}\prod_{i\in I}x(i)^{\mu_x(i)}$. As an
example, take $I=\Nat$ and $V=\Eset{x,y,z}$. A typical multi-exponent
is $\Vect\mu=(\mu_x,\mu_y,\mu_z)$ such that $\mu_x=\Mset{1,1,3}$,
$\mu_y=\Mset{2,4,4}$ and $z=\Mset{1,1,1,1}$. The associated monomial
is $V^{\Vect\mu}=x(1)^2x(3)y(2)y(4)^2z(1)^4$. Formally, a polynomial
is a family $(\alpha_{\Vect\mu})$ of scalars $\in\Bb K$, indexed by
monomials, such that for any finite subset $I_0$ of $I$, the set
$\Eset{\Vect\mu\St \alpha_{\Vect\mu}\not=0\text{ and
  }\Supp{\Vect\mu}\subseteq I_0}$ is finite; this condition will be
called the \emph{finiteness condition on polynomials} (this
corresponds to what was called finite depth in the Introduction).  In
standard algebraic notations, such a polynomial is written as the
formal sum $\sum_{\Vect\mu}\alpha_{\Vect\mu}V^{\Vect\mu}$.

Notice that if $I$ is finite, a polynomial is just a finite linear
combination of monomials, but this is no more true when $I$ is
infinite. Here is a typical example of (infinite) polynomial:
$\sum_{n\in\Nat}x(n)^n$. On the other hand, the expression
$\sum_{n\in\Nat}x(1)^n$ is not a polynomial.

We use $\Polynoms{\Bb K}VI$ for the set of these polynomials. Equipped
with usual addition and multiplication this set is a $\Bb
K$-algebra. Let indeed
$A=\sum_{\Vect\mu}\alpha_{\Vect\mu}V^{\Vect\mu}$ and
$B=\sum_{\Vect\mu}\beta_{\Vect\mu}V^{\Vect\mu}$ be polynomials. Then
$A+B=\sum_{\Vect\mu}(\alpha_{\Vect\mu}+\beta_{\Vect\mu})V^{\Vect\mu}$
is a polynomial because if $\alpha_{\Vect\mu}+\beta_{\Vect\mu}\not=0$
then $\alpha_{\Vect\mu}\not=0$ or $\beta_{\Vect\mu}\not=0$ and hence
$A+B$ satisfies the finiteness condition on polynomials. As to
products, observe first that, given a multi-exponent $\Vect\mu$, that
there are only finitely many pairs of multi-exponents
$(\Vect\lambda,\Vect\rho)$ such that
$\Vect\lambda+\Vect\rho=\Vect\mu$. So we can set
$AB=\sum_{\Vect\mu}(\sum_{\Vect\lambda+\Vect\rho
  =\Vect\mu}\alpha_{\Vect\lambda}\beta_{\Vect\rho})V^\mu$. Let us
check that $AB$ satisfies the finiteness condition on polynomials, so
let $I_0$ be a finite subset of $I$. If $\Vect\mu$ is such that
$\sum_{\Vect\lambda+\Vect\rho=\Vect\mu}\alpha_{\Vect\lambda}\beta_{\Vect\rho}\not=0$
then there is a pair $(\Vect\lambda,\Vect\rho)$ such that
$\alpha_{\Vect\lambda}\not=0$ and $\beta_{\Vect\rho}\not=0$ and
$\Vect\lambda+\Vect\rho=\Vect\mu$. If $\Supp{\Vect\mu}\subseteq I_0$
we also have $\Supp\lambda,\Supp\rho\subseteq I_0$. Since $A$ and $B$
are polynomials, there are only finitely many such pairs
$(\Vect\lambda,\Vect\rho)$ and hence there are only finitely many
$\Vect\mu$ such that
$\sum_{\Vect\lambda+\Vect\rho=
  \Vect\mu}\alpha_{\Vect\lambda}\beta_{\Vect\rho}\not=0$ and
$\Supp{\Vect\mu}\subseteq I_0$.

Let $\Vect v$ be a valuation for variables $x\in V$ in $\Bb K^{(I)}$,
that is, for each $x\in V$, $v_x$ is an $I$-indexed family
$(v_x(i))_{i\in I}$ of elements of $\Bb K$ which vanishes almost
everywhere. We set
$v^{\Vect\mu}=\prod_{x\in V}\prod_{i\in I}v_x(i)^{\mu_x(i)}\in\Bb K$;
this is a finite product in $\Bb K$ (it would be the case even for
$v_x\in\Bb K^I$). Then by the finiteness property the sum
$A(\Vect v)=\sum_{\Vect\mu}\alpha_{\Vect\mu}v^{\Vect\mu}$ has only
finitely many non-vanishig terms and hence $A(\Vect v)\in\Bb K$ is
well-defined. This is the main motivation for the finiteness condition
in the definition of infinite polynomials.

\begin{lemma}\label{lemma:polynoms-variable-cases}
  If, for each $i\in I$, $A_i\in\Polynoms{\Bb K}VI$ and $x\in V$, then
  $A=\sum_{i\in I}x(i)A_i\in\Polynoms{\Bb K}VI$.
\end{lemma}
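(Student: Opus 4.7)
The plan is to unpack the definitions and verify, in turn, the two defining features of an element of $\Polynoms{\Bb K}VI$: well-definedness as a family of coefficients indexed by multi-exponents, and the finiteness condition.

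Writing $A_i = \sum_{\Vect\mu}\alpha^{(i)}_{\Vect\mu}V^{\Vect\mu}$ and letting $\Vect\nu^{\oplus(x,i)}$ denote the multi-exponent obtained from $\Vect\nu$ by replacing its $x$-component $\nu_x$ by $\nu_x+\Mset{i}$, I would define formally $x(i)A_i = \sum_{\Vect\mu}\alpha^{(i)}_{\Vect\mu}V^{\Vect\mu^{\oplus(x,i)}}$. Then the candidate coefficient of $V^{\Vect\nu}$ in $A=\sum_{i\in I}x(i)A_i$ is
\begin{align*}
  \alpha_{\Vect\nu} = \sum_{i\in\Supp{\nu_x}}\alpha^{(i)}_{\Vect\nu^{\ominus(x,i)}},
\end{align*}
where $\Vect\nu^{\ominus(x,i)}$ is $\Vect\nu$ with $\nu_x$ replaced by $\nu_x-\Mset{i}$ (which makes sense exactly when $i\in\Supp{\nu_x}$). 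The key observation is that $\Supp{\nu_x}$ is finite by definition of a multi-exponent, so this sum has only finitely many terms and defines an element of $\Bb K$. All $i\notin\Supp{\nu_x}$ contribute $0$ since the coefficient of $V^{\Vect\nu}$ in $x(i)A_i$ is nonzero only if $\nu_x(i)\geq 1$.

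For the finiteness condition, fix a finite $I_0\subseteq I$ and consider $\Vect\nu$ with $\Supp{\Vect\nu}\subseteq I_0$ and $\alpha_{\Vect\nu}\not=0$. By the formula above, there must exist $i\in\Supp{\nu_x}\subseteq\Supp{\Vect\nu}\subseteq I_0$ such that $\alpha^{(i)}_{\Vect\nu^{\ominus(x,i)}}\not=0$, and clearly $\Supp{\Vect\nu^{\ominus(x,i)}}\subseteq\Supp{\Vect\nu}\subseteq I_0$. For each such $i\in I_0$, the finiteness condition applied to $A_i$ yields only finitely many multi-exponents $\Vect\lambda$ with $\Supp{\Vect\lambda}\subseteq I_0$ and $\alpha^{(i)}_{\Vect\lambda}\not=0$; and $\Vect\nu$ is recovered from the pair $(i,\Vect\lambda)$ as $\Vect\lambda^{\oplus(x,i)}$. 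Hence the set of such $\Vect\nu$ is contained in a finite union (over $i\in I_0$) of finite sets, and is therefore finite.

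There is no real obstacle here: the argument is a direct application of the definitions, the only subtle point being to notice that the sum defining $\alpha_{\Vect\nu}$ is automatically finite because it ranges over $\Supp{\nu_x}$, which is finite by definition of a multi-exponent. Once that is observed, the finiteness condition transfers cleanly from the $A_i$ (restricted to $i\in I_0$) to $A$.
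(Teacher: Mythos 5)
Your proof is correct and follows essentially the same route as the paper's: you compute the coefficient of $V^{\Vect\nu}$ as a sum over $i\in\Supp{\nu_x}$ (finite by definition of a multi-exponent) and transfer the finiteness condition from the $A_i$ with $i\in I_0$ via the shift $\Vect\lambda\mapsto\Vect\lambda^{\oplus(x,i)}$. The only difference is stylistic: where the paper argues by contradiction with the pigeonhole principle, you directly exhibit the relevant set of multi-exponents as contained in the image of a finite union of finite sets, which is if anything slightly cleaner.
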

\begin{proof}
  If $i\in I$, we use $\Meone ix$ for the multiexponent $\Vect\mu$
  such that $\mu_x=\Mset i$ and $\mu_y=\Mset{}$ for $y\not=x$, in
  other words $V^{\Meone ix}=x(i)$.  Let us write
  $A_i=\sum_{\Vect\mu}\alpha(i)_{\Vect\mu}V^{\Vect\mu}$. We have
  \begin{align*}
    A
    = \sum_{i\in I}\sum_{\Vect\mu}\alpha(i)_{\Vect\mu}V^{\Vect\mu+\Meone ix}
    = \sum_{\Vect\nu}\left(\sum_{i\in\Supp{\nu_x}}
    \alpha(i)_{\Vect{\nu}-\Meone ix}\right)V^{\Vect\nu}\,.
  \end{align*}
  Since $\Supp{\nu_x}$ is a finite set, each coefficient
  $\beta_{\Vect\nu}=\sum_{i\in\Supp{\nu_x}} \alpha_{\Vect{\nu}-\Meone
    ix}$ in this expression is a finite sum. Let $I_0\subseteq I$ be
  finite.  Let $M$ be the set of all $\Vect\nu$'s such that
  $\Supp{\Vect\nu}\subseteq I_0$ and $\beta_{\Vect\nu}\not=0$. For
  each $\Vect\nu\in M$ there must be $i(\Vect\nu)\in I$ such that
  $i(\Vect\nu)\in\Supp{\nu_x}$ and
  $\alpha(i(\Vect\nu))_{\Vect\nu-\Meone {i(\Vect\nu)}x}\not=0$. Assume
  $M$ is infinite. We have $\forall\Vect\nu\in M\ i(\Vect\nu)\in I_0$
  and hence, since $I_0$ is finite, there must be $i\in I_0$ such that
  $i(\Vect\nu)=i$ for all $\Vect\nu\in M'$ where $M'$ is an infinite
  subset of $M$ (pigeonhole principle). Since the map
  $\Vect\nu\mapsto\Vect\nu-\Meone ix$ is injective on $M'$, since
  $\Supp{\Vect\nu-\Meone ix}\subseteq I_0$ and since
  $\alpha(i)_{\Vect\nu-\Meone ix}\not=0$ for all $\Vect\nu\in M'$, we
  have a contradiction with the finiteness condition on polynomials
  satisfied by $A_i$. Hence $M$ is finite and $A$ satisfies the
  finiteness condition on polynomials.
\end{proof}

\subsubsection{Well-founded trees and polynomials.} Our Krivine
machine will produce polynomials presented as some kind of
well-founded Böhm trees that we define now, we call
them polynomial trees. They are generated by the following syntax:
\begin{itemize}
\item $\Ptone$ and $\Ptzero$ are polynomial trees;
\item if $x\in V$ and $\cS$ is a function from $I$ to
  polynomial trees then $\Ptvar x\cS$ is a polynomial tree;
\item if $\alpha_1,\alpha_2\in\Bb K$ and $S_1,S_2$ are polynomial
  trees then $\Ptcl{\alpha_1}{S_1}{\alpha_2}{S_2}$ is a polynomial
  tree.
\end{itemize}
We use $\Polynomts {\Bb K}VI$ for the set of these trees.  We define a
map $\Polyofpt:\Polynomts{\Bb K}VI\to\Polynoms{\Bb K}VI$ by
well-founded induction: $\Polyofpt(\Ptone)=1$,
$\Polyofpt(\Ptcl{\alpha_1}{S_1}{\alpha_2}{S_2}=
\alpha_1\Polyofpt(S_1)+\alpha_2\Polyofpt(S_2)$ and
$\Polyofpt(\Ptvar x\cS)=\sum_{i\in I}x(i)\Polyofpt(\cS(i))$. The fact
that $\Polyofpt(S)\in\Polynoms{\Bb K}VI$ results from
Lemma~\ref{lemma:polynoms-variable-cases}.

\subsubsection{Stacks, states and their denotational semantics}
Our machine is restricted to almost free and \Fixfree{} PCF terms: a
term, a stack or a state is \Fixfree{} if it does not contain the
$\Fix\_$ construct and almost closed if its only variables are of type
$\Tnat$.  The machine computes polynomial trees
$\in\Polynomts{\Realp}{V}{\Natt}$ where $\Natt=\Nat\cup\Eset\Nerr$.

% A \emph{state} of the
% machine is a pair $\State M\pi$ where $M$ is a \Fixfree{} and almost
% closed term and $\pi$ is a \Fixfree{} and almost closed
% \emph{stack}.

However it is more natural to define stacks, states and their
semantics without these additional restrictions which will be useful
only for the machine itself. General stacks are given by the following
grammar
\begin{align*}
  \pi,\rho\dots
  \Bnfeq \Stempty
  \Bnfor \Stsucc\pi \Bnfor \Stpred\pi \Bnfor \Stif PQ\pi
  \Bnfor \Stlet xM\pi
  \Bnfor \Starg M\pi
\end{align*}
where $M$, $P$ and $Q$ are terms. Stacks and states are typed
by the rules of Figure~\ref{fig:PCFstacks-typing-rules}.

The judgment $\Tseqst\Gamma\pi\sigma$ means that $\pi$ is a
continuation which expects an argument of type $\sigma$ in context
$\Gamma$. The judgment $\Tseqstate\Gamma q$ means that the state $q$
is well typed in context $\Gamma$, its ``type'' is left implicit (it
would be the formula $\Llbot$ of linear logic). The empty stack
$\Stempty$ should be understood as a continuation which catches the
$\Errconv$ ``exception''.
\begin{figure}{\footnotesize
  \centering
  \AxiomC{}
  \UnaryInfC{$\Tseqst\Gamma\Stempty\Tnat$}
  \DisplayProof
  \quad
  \AxiomC{$\Tseqst\Gamma\pi\Tnat$}
  \UnaryInfC{$\Tseqst\Gamma{\Stsucc\pi}\Tnat$}
  \DisplayProof
  \quad
  \AxiomC{$\Tseqst\Gamma\pi\Tnat$}
  \UnaryInfC{$\Tseqst\Gamma{\Stpred\pi}\Tnat$}
  \DisplayProof
  \quad
  \AxiomC{$\Tseq {\Gamma}P\Tnat$}
  \AxiomC{$\Tseq {\Gamma}Q\Tnat$}
  \AxiomC{$\Tseqst \Gamma\pi\Tnat$}
  \TrinaryInfC{$\Tseqst\Gamma{\Stif PQ\pi}\Tnat$}
  \DisplayProof
  \Figbreak
  \AxiomC{$\Tseq{\Gamma,x:\Tnat}M\Tnat$}
  \AxiomC{$\Tseqst\Gamma\pi\Tnat$}
  \BinaryInfC{$\Tseqst\Gamma{\Stlet xM\pi}\Tnat$}
  \DisplayProof
  \quad
  \AxiomC{$\Tseq{\Gamma}M\sigma$}
  \AxiomC{$\Tseqst\Gamma\pi\tau$}
  \BinaryInfC{$\Tseqst\Gamma{\Starg M\pi}{\Timpl\sigma\tau}$}
  \DisplayProof  \quad
  \AxiomC{$\Tseq{\Gamma}M\sigma$}
  \AxiomC{$\Tseqst\Gamma\pi\sigma$}
  \BinaryInfC{$\Tseqstate\Gamma{\State M\pi}$}
  \DisplayProof\\}
  \caption{Typing rules for stacks and states of the machine}
  \label{fig:PCFstacks-typing-rules}
\end{figure}

\subsubsection{Denotational semantics of stacks and states}
If $\Tseqst\Gamma\pi\sigma$ (with
$\Gamma=(x_1:\sigma_1,\dots,x_k:\sigma_k)$) then, setting
$C=\Bwith_{i=1}^k\Extc{\Tsem{\sigma_i}}$, we define
$\Psemst\pi\Gamma\in\EPCOH(\Tens{\Excl C}{\Tsem\sigma},\Llbot)$ and if
$\Tseqstate\Gamma q$ then
$\Psemstate q\Gamma\in\EPCOH(\Excl C,\Llbot)$.  We know that
$\Psemst\pi\Gamma$ is fully characterized by the associated function
$\Fun{\Psemst\pi\Gamma}:\Pcoh
C\times\Pcoh{\Extc{\Tsem\sigma}}\to\Intercc 01$ (non-linear in the
first parameter and linear in the second one). And
$\Psemstate q\Gamma$ is fully characterized by the associated function
$\Fun{\Psemstate q\Gamma}:\Pcoh C\to\Intercc 01$.

\Itmath
If $\pi=\Stempty$, so that $\sigma=\Tnat$ then $\Psem\pi\Gamma$ is the
following composition of morphisms ($\Scatch\Nat$
is defined at the end of Section~\ref{sec:flate-general}):
\begin{tikzcd}
  \Tens{\Excl C}{\Flate\Nat}\ar[r,"\Weak C"]
  & \Flate\Nat\ar[r,"\Scatch\Nat"]
  & \Llbot
\end{tikzcd}.  The associated function
$f:\Pcoh C\times\Pcohp{\Extc{\Flate\Nat}}\to\Intercc 01$ is given by
$f(\Vect u,x)=x_\Nerr$.

\Itmath
If $\pi=\Stsucc\rho$ so that $\sigma=\Tnat$ and $\Tseqst\Gamma\rho\Nat$
then $\Psemst\pi\Gamma$ is defined as the following composition of
morphisms:
\begin{tikzcd}
  \Tens{\Excl C}{\Flate\Nat}\ar[r,"\Tens{\Excl C}{\Bfune\Fsucc}"]
  &\Tens{\Excl C}{\Flate\Nat}\ar[r,"\Psemst\rho\Gamma"]
  &\Llbot
\end{tikzcd}
and the semantics of $\Stpred\rho$ is defined similarly, using
$\Fpred$ instead of $\Fsucc$. The associated function
$f:\Pcoh C\times\Pcohp{\Extc{\Flate\Nat}}\to\Intercc 01$ is given by
$f(\Vect u,x)=g(\Vect u,\Matappa{\Bfune\Fsucc}x)$ where
$g=\Fun{\Psemst\rho\Gamma}$ (remember that
$\Matappa{\Bfune\Fsucc}x=x_\Nerr\Base\Nerr+\sum_{n\in\Nat}x_n\Base{n+1}$). The
case $\pi=\Stpred\rho$ is similar, replacing $\Fsucc$ with $\Fpred$.

\Itmath Assume that $\pi=\Stif {P_1}{P_2}\rho$ so that $\sigma=\Tnat$,
$\Tseqst\Gamma\rho\Tnat$ and $\Tseq\Gamma{P_i}\Tnat$ for $i=1,2$.  For
$n\in\Nat$, let $t_n\in\Kl\EPCOH(C,\Flate\Nat)$ be defined by
$t_0=\Psem{P_1}\Gamma$ and $t_{n+1}=\Psem{P_2}\Gamma$ for each
$n\in\Nat$. Let
$t=\Tuple{t_n}_{n\in\Nat}\in\Kl\EPCOH(C,(\Flate\Nat)^\Nat)$.  Then
$\Psemst\pi\Gamma$ is interpreted as the following composition of
morphisms:\\
\begin{tikzcd}
  \Tens{\Excl C}{\Flate\Nat}\ar[r,"\Contr C\ITens\Id"]
  &[2em] \Excl C\ITens\Excl C\ITens\Flate\Nat\ar[r,"\Excl C\ITens\Sym"]
  & \Excl C\ITens\Flate\Nat\ITens\Excl C\ar[r,"\Id\ITens t"]
  & \Excl C\ITens\Flate\Nat\ITens\Exclp{(\Flate\Nat)^\Nat}
  \ar[d,"\Id\ITens\Pcasee{\Nat,\Nat}"]\\
  && \Llbot
  & \Excl C\ITens\Flat\Nat\ar[l,"\Psem\rho\Gamma"]
\end{tikzcd}\\
Let $h_i=\Fun{\Psem{P_i}\Gamma}:\Pcoh C\to\Pcohp{\Extc{\Flate\Nat}}$
for $i=1,2$.  The function associated with $\Psemst\pi\Gamma$ is
$f:\Pcoh C\times\Pcohp{\Extc{\Flate\Nat}}\to\Intercc 01$ given by
$f(\Vect u,x)=g(\Vect u,x_\Nerr\Base\Nerr+x_0h_1(\Vect u)+
(\sum_{n=1}^\infty x_n)h_2(\Vect
u))$
% $=x_\Nerr g(\Vect u,\Base\Nerr)+x_0 g(\Vect u,h_1(\Vect
% u))+(\sum_{n=1}^\infty x_n)g(\Vect u,h_2(\Vect u))$ by linearity,
where $g=\Fun{\Psemst\rho\Gamma}$.

\Itmath Assume that $\pi=\Stlet xP\rho$ so that $\sigma=\Tnat$,
$\Tseqst\Gamma\rho\Tnat$ and $\Tseq{\Gamma,x:\Tnat}P\Tnat$. Then we
have
$\Psem P{\Gamma,x:\Tnat}\Compl\Seelyt:\Excl
C\ITens\Excl{\Flate\Nat}\to\Flate\Nat$ and hence
$t=\Slete{\Psem P{\Gamma,x:\Tnat}\Compl\Seelyt}\in\EPCOH(\Excl
C\ITens\Flate\Nat,\Flate\Nat)$. Then $\Psemst\pi\Gamma$ is defined as
the following composition of morphisms:\\
\begin{tikzcd}
  \Excl C\ITens\Flate\Nat\ar[r,"\Contr C\ITens\Id"]
  &[2em] \Excl C\ITens\Excl C\ITens\Flate\Nat\ar[r,"\Id\ITens t"]
  & \Excl C\ITens\Flate\Nat\ar[r,"\Psemst\rho\Gamma"]
  & \Llbot
\end{tikzcd}.\\
Let
$h=\Fun{\Psem P{\Gamma,x:\Tnat}}:\Pcoh
C\times\Pcohp{\Extc{\Flate\Nat}}\to\Pcohp{\Extc{\Flate\Nat}}$.  The
function associated with $\Psemst\pi\Gamma$ is
$f:\Pcoh C\times\Pcohp{\Extc{\Flate\Nat}}\to\Intercc 01$ given by
$f(\Vect u,x)=g(\Vect u,x_\Nerr\Base\Nerr+\sum_{n=0}^\infty x_nh(\Vect
u,\Base n))$
% $=x_\Nerr g(\Vect u,\Base\Nerr)+\sum_{n=0}^\infty x_ng(\Vect
% u,h(\Vect u,\Base n))$ by linearity and continuity,
where $g=\Fun{\Psemst\rho\Gamma}$.

\Itmath We end with the semantics of states so assume that
$q=\State M\pi$ with $\Tseq\Gamma M\sigma$ and
$\Tseqst\Gamma\pi\sigma$ so that $\Tseqstate\Gamma q$. Then we have
$\Psem M\Gamma:\Excl C\to\Tsem\sigma$ and
$\Psemst\pi:\Excl C\ITens\Tsem\sigma\to\Llbot$, and
$\Psemstate q\Gamma:\Excl C\to\Llbot$ is the following composition of
morphisms
\begin{tikzcd}
  \Excl C\ar[r,"\Contr C"]
  &\Excl C\ITens\Excl C\ar[r,"\Id\ITens\Psem M\Gamma"]
  &[1.4em]\Excl C\ITens\Tsem\sigma\ar[r,"\Psemst\pi\Gamma"]
  &\Llbot
\end{tikzcd}. The function $f:\Pcoh C\to\Intercc 01$ associated with
$\Psemstate q\Gamma$ is given by $f(\Vect u)=g(\Vect u,h(\Vect u))$
where $g=\Fun{\Psemst\pi\Gamma}$ and $h=\Fun{\Psem M\Gamma}$.

\begin{lemma}\label{lemma:PCF-stack-interp-linearity}
  If $\Tseq\Gamma\pi\Tnat$ then the associated function
  $g=\Fun{\Psemst\pi\Gamma}:\Pcoh
  C\times\Pcohp{\Extc{\Flate\Nat}}\to\Intercc 01$ is Scott-continuous,
  and linear in its last argument $\in\Pcohp{\Extc{\Flate\Nat}}$ and
  satisfies $g(\Vect u,\Base\Nerr)=1$.
\end{lemma}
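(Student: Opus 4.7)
The plan is to proceed by structural induction on the stack $\pi$ satisfying $\Tseqst\Gamma\pi\Tnat$. Since the type at the top of the stack is the ground type $\Tnat$, the only possible shapes for $\pi$ are $\Stempty$, $\Stsucc\rho$, $\Stpred\rho$, $\Stif{P_1}{P_2}\rho$, and $\Stlet xP\rho$ (the $\Starg M\rho$ case is excluded since it produces an arrow type). In each case I will verify the three properties simultaneously, since linearity in $x\in\Pcohp{\Extc{\Flate\Nat}}$ and the value at $\Base\Nerr$ are preserved by the constructions used, and Scott-continuity is automatic (all the building blocks are morphisms of $\PCOH$, hence Scott-continuous).

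For the base case $\pi=\Stempty$, the definition gives $g(\Vect u,x)=x_\Nerr$, which is obviously linear in $x$ and satisfies $g(\Vect u,\Base\Nerr)=1$. For $\pi=\Stsucc\rho$ (resp.\ $\Stpred\rho$), writing $g'=\Fun{\Psemst\rho\Gamma}$, I have $g(\Vect u,x)=g'(\Vect u,\Matappa{\Bfune\Fsucc}x)$ (resp.\ with $\Fpred$). The map $x\mapsto\Matappa{\Bfune\Fsucc}x$ is linear by matrix action and sends $\Base\Nerr$ to $\Base\Nerr$ by the definition of $\Bfune\Fsucc$ in Section~\ref{sec:pcoh-ext-basic-functions}, so linearity and the value at $\Base\Nerr$ transfer from $g'$ to $g$ by the inductive hypothesis.

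For $\pi=\Stif{P_1}{P_2}\rho$, the function is $g(\Vect u,x)=g'(\Vect u,\,x_\Nerr\Base\Nerr+x_0h_1(\Vect u)+(\sum_{n=1}^\infty x_n)h_2(\Vect u))$ with $g'=\Fun{\Psemst\rho\Gamma}$; the inner expression is manifestly linear in $x$, and at $x=\Base\Nerr$ it equals $\Base\Nerr$ since $x_0=0$ and all $x_n=0$ for $n\geq 1$, so the inductive hypothesis on $g'$ gives both linearity of $g$ in $x$ and $g(\Vect u,\Base\Nerr)=1$. The case $\pi=\Stlet xP\rho$ is entirely analogous, using the formula $g(\Vect u,x)=g'(\Vect u,\,x_\Nerr\Base\Nerr+\sum_{n=0}^\infty x_nh(\Vect u,\Base n))$ and again observing that evaluating the inner argument at $x=\Base\Nerr$ yields $\Base\Nerr$.

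I do not expect any genuine obstacle: the whole statement is forced by the fact that the interpretation of a stack of ground type $\Tnat$ is built by composing $\Scatch\Nat$ (which reads off the $\Nerr$-coordinate) with linear operations on the $\Flate\Nat$ argument, and by the fact that all the auxiliary maps ($\Bfune\Fsucc$, $\Bfune\Fpred$, $\Pcasee{\Nat,\Nat}$, $\Slete{-}$) preserve $\Base\Nerr$ by construction. The only point to keep track of carefully is that throughout the induction the Scott-continuity is inherited from the fact that morphisms in $\PCOH$ act as Scott-continuous functions, so that the finite sums and series appearing in the formulas for $g$ behave well under directed suprema in $\Vect u$ as well.
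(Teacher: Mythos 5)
Your proof is correct, and for the clause $g(\Vect u,\Base\Nerr)=1$ it is exactly the ``simple inspection'' the paper alludes to, made explicit as a structural induction on $\pi$ (correctly noting that a stack of type $\Tnat$ cannot be of the form $\Starg M\rho$, and that each recursive case passes to a substack again of type $\Tnat$, so the induction is well-posed). The one place where you diverge from the paper is in how linearity and Scott-continuity are obtained: the paper dispatches both in one line by observing that $\Psemst\pi\Gamma\in\EPCOH(\Tens{\Excl{\Tsem\Gamma}}{\Flate\Nat},\Llbot)$, and any morphism whose source has $\Flate\Nat$ as a non-banged tensor factor is automatically linear in that factor and Scott-continuous as a function, whereas you re-derive linearity case by case through the explicit formulas. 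Both are sound; the paper's route is shorter and requires no induction for those two properties (it reuses the already-established typing of the stack semantics), while yours is more self-contained and has the side benefit that the preservation of $\Base\Nerr$ by each of the inner maps ($\Bfune\Fsucc$, $\Bfune\Fpred$, the case and let constructions) is verified in the same pass. No gap.
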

\begin{proof}
  Linearity and continuity results from the fact that
  $\Psemst\pi\Gamma\in\EPCOH(\Excl{\Tsem\Gamma}\ITens\Flate\Nat,\Llbot)$. The
  property $g(\Vect u,\Base\Nerr)=1$ results from a simple inspection
  of the description above of the semantics of stacks.
\end{proof}

\subsubsection{The machine.}
Remember that a term $M$ is \emph{almost closed} if
$\Tseq\Gamma M\sigma$ where $\Gamma$ is a ground context (all types
appearing in $\Gamma$ are $\Tnat$); similarly a stack $\pi$ is almost
closed if $\Tseqst\Gamma\pi\sigma$ where $\Gamma$ is ground. Notice
that a ground context can be identified with a finite set $V$ of
variables (the variables which are assigned the type $\Tnat$ by
$\Gamma$).

We define a function $\Kreval$ (a priori partial) from almost closed
typed states $\Polynomts\Realpc V\Natt$. The definition is given in
Figure~\ref{fig:PCF-Krivine-machine}.  Notice that there is no rule
for terms of shape $\Fix M$ and hence our machine is restricted to
\Fixfree{} states; the corresponding rule should be
$\Kreval\State{\Fix M}{\pi}=\Kreval\State{M}{\Starg{\Fix M}\pi}$. The
main reason is that, with this additional rule,
Lemma~\ref{lemma:Kreval-total} does not hold anymore.

\begin{figure}{\footnotesize
  \begin{alignat*}3
    &\Kreval\State{\Errconv}{\pi}=\Ptone
    &\quad
    &\Kreval\State{\Errdiv}\pi=\Ptzero
    \\
    &\Kreval\State{\Dice r}\pi=
    \Ptcl r{\Kreval\State{\Num 0}\pi}{(1-r)}{\Kreval\State{\Num 1}\pi}
    &\quad
    &\Kreval\State x\pi=\Ptvar x\cS
    \text{ where }\cS(\Nerr)=\Ptone \text{ and }
    \cS(n)=\Kreval\State{\Num n}\pi
%   &\Kreval\State x\pi=x(\Nerr)+\sum_{n\in\Nat}x(n)\Kreval\State{\Num n}\pi
    \\
    &\Kreval\State{\Num n}\Stempty=0
    &\quad
    &\Kreval\State{\Num n}{\Stsucc\pi}=\Kreval\State{\Num{n+1}}\pi
    \\
    &\Kreval\State{\Num 0}{\Stpred\pi}=\Kreval\State{\Num 0}{\pi}
    &\quad
    &\Kreval\State{\Num{n+1}}{\Stpred\pi}=\Kreval\State{\Num n}{\pi}
    \\
    &\Kreval\State{\Num 0}{\Stif PQ\pi}=\Kreval\State P\pi
    &\quad
    &\Kreval\State{\Num{n+1}}{\Stif PQ\pi}=\Kreval\State Q\pi\\
    &\Kreval\State{\Let xMN}\pi=\Kreval\State M{\Stlet xN\pi}
    &\quad
    &\Kreval\State{\Num n}{\Stlet xN\pi}=\Kreval\State{\Subst N{\Num n}x}\pi
    \\
    &\Kreval\State{\App MN}\pi=\Kreval\State M{\Starg N\pi}
    &\quad
    &\Kreval\State{\Abst x\sigma M}{\Starg N\pi}=\Kreval\State{\Subst MNx}{\pi}
    \\
    &\Kreval\State{\Fix M}{\pi}=\Kreval\State M{\Starg{\Fix M}\pi}
    &\quad
    &%\Kreval\State{\Dice r}\pi=r\Kreval\State{\Num 0}\pi
    % +(1-r)\Kreval\State{\Num 1}\pi
    % \Kreval\State{\Fix M}{\pi}=\Kreval\State{M}{\Starg{\Fix M}\pi}
    &&
  \end{alignat*}}
  \vspace{-1.8em}
  \caption{A Krivine function from almost closed states to polynomials}
  \label{fig:PCF-Krivine-machine}
\end{figure}

\subsection{Main property of the machine}

We prove now that this function $\Kreval$ is total on almost closed
\Fixfree{} states and yields elements of $\Polynomts\Realp
V\Natt$. The proof is by reducibility, following the general format
used by Jean-Louis Krivine in his work on classical realizability.

Let $V$ be a finite set of variables considered as a ground
context\footnote{This requires a total ordering on the elements of
  $V$, we keep this further information implicit identifying $V$ with
  such a sequence.}, we define first a \emph{pole} $\Pole V$ which is
the set of all \Fixfree{} states $q$ such that
\begin{itemize}
\item $\Tseqstate Vq$
\item $\Kreval(q)$ is a well-defined tree $S$ which belongs to
  $\Polynomts\Realp V\Natt$
\item and the associated polynomial
  $\Polyofpt(S)$ belongs to $\EPCOH((\Flate\Nat)^V,\Llbot)$ and satisfies
  $\Polyofpt(S)=\Psemstate qV$.
\end{itemize}

With each type $\sigma$ we associate a set $\Redintst\sigma$ of stacks
$\pi$ such that $\Tseqst V\pi\sigma$ by:
\begin{itemize}
\item $\Redintst\Tnat$ is the set of all \Fixfree{} stacks $\pi$ such that
  $\Tseqst{V}\pi\Tnat$ and
  $\forall n\in\Nat\ \State{\Num n}\pi\in\Pole V$. Notice that in
  particular $\Stempty\in\Redintst\Tnat$.
\item and
  $\Redintst{\Timpl\sigma\tau}=\Eset{\Starg M\pi\St M\in\Redint\sigma
    \text{ and } \pi\in\Redintst\tau}$ where
  $\Redint\sigma=\Eset{M\in\Termstyf{V}\sigma\St\forall\pi\in\Redintst\sigma\
    \State M\pi\in\Pole V}$. Here $\Termstyf{V}\sigma$ is the
  set of all \Fixfree{} terms $M$ such that $\Tseq VM\sigma$.
\end{itemize}

\begin{lemma}\label{lemma:push-int-stack}
  If $\pi\in\Redintst\Tnat$ then
  $\Stsucc\pi,\Stpred\pi\in\Redintst\Tnat$. If $P,Q\in\Redint\Tnat$
  then $\Stif PQ\pi\in\Redintst\Tnat$. If $\Tseq{V,y:\Tnat}P\Tnat$
  and $\forall n\in\Nat\ \Subst P{\Num n}y\in\Redint\Tnat$, then
  $\Stlet yP\pi\in\Redintst\Tnat$.
\end{lemma}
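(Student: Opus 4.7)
My plan is to treat each of the three clauses separately, but in each case the strategy is the same: verify that the proposed stack is typed correctly (immediate from Figure~\ref{fig:PCFstacks-typing-rules}), then check the three membership conditions of $\Pole V$ for the state $\State{\Num n}{\pi'}$ (where $\pi'$ is the new stack) by unfolding the Krivine machine rule and reducing to the assumption that $\pi\in\Redintst\Tnat$ (or $P,Q\in\Redint\Tnat$, resp.~$\Subst P{\Num n}y\in\Redint\Tnat$). Well-definedness of $\Kreval$ and the fact that its output lies in $\Polynomts\Realp V\Natt$ transfer immediately; the only real content is checking that the equation $\Polyofpt(\Kreval(\State{\Num n}{\pi'}))=\Psemstate{\State{\Num n}{\pi'}}V$ follows from the corresponding equation for the reduced state.

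For $\pi'=\Stsucc\pi$: the machine rule gives $\Kreval\State{\Num n}{\Stsucc\pi}=\Kreval\State{\Num{n+1}}\pi$, and $\State{\Num{n+1}}\pi\in\Pole V$ by assumption. On the semantic side, using the description of $\Psemst{\Stsucc\pi}V$ and the fact that $\Matappa{\Bfune\Fsucc}{\Base n}=\Base{n+1}$, one obtains $\Psemstate{\State{\Num n}{\Stsucc\pi}}V=\Psemstate{\State{\Num{n+1}}\pi}V$, so the equation carries over. The case $\Stpred\pi$ is identical, using $\Matappa{\Bfune\Fpred}{\Base 0}=\Base 0$ and $\Matappa{\Bfune\Fpred}{\Base{n+1}}=\Base n$.

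For $\pi'=\Stif PQ\pi$: the machine rule gives $\Kreval\State{\Num 0}{\Stif PQ\pi}=\Kreval\State P\pi$ and $\Kreval\State{\Num{n+1}}{\Stif PQ\pi}=\Kreval\State Q\pi$. Since $P,Q\in\Redint\Tnat$ and $\pi\in\Redintst\Tnat$, both $\State P\pi$ and $\State Q\pi$ lie in $\Pole V$. Semantically, plugging $x=\Base n$ into the formula $f(\Vect u,x)=g(\Vect u,x_\Nerr\Base\Nerr+x_0h_P(\Vect u)+(\sum_{m\geq 1}x_m)h_Q(\Vect u))$ for $\Psemst{\Stif PQ\pi}V$ gives $g(\Vect u,h_P(\Vect u))$ when $n=0$ and $g(\Vect u,h_Q(\Vect u))$ when $n\geq 1$, which are exactly $\Psemstate{\State P\pi}V$ and $\Psemstate{\State Q\pi}V$ respectively.

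For $\pi'=\Stlet yP\pi$: the machine rule gives $\Kreval\State{\Num n}{\Stlet yP\pi}=\Kreval\State{\Subst P{\Num n}y}\pi$, which lies in $\Pole V$ by the hypothesis that $\Subst P{\Num n}y\in\Redint\Tnat$. For the semantics, evaluating $\Psemst{\Stlet yP\pi}V$ at $x=\Base n$ yields $g(\Vect u,h(\Vect u,\Base n))$, and by the Substitution Lemma~\ref{lemma:PCF-substitution} together with $\Psem{\Num n}V(\Vect u)=\Base n$, $h(\Vect u,\Base n)$ coincides with $\Fun{\Psem{\Subst P{\Num n}y}V}(\Vect u)$, so the equation reduces to the one for $\State{\Subst P{\Num n}y}\pi$. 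The only mildly delicate step throughout is the semantic bookkeeping — in particular invoking linearity of $g$ in its second argument (Lemma~\ref{lemma:PCF-stack-interp-linearity}) implicitly to justify that evaluating the composed stack semantics at $\Base n$ really does isolate the $n$th branch — but no new idea is required.
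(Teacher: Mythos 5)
Your proof is correct. The paper actually states Lemma~\ref{lemma:push-int-stack} without proof, so there is nothing to compare against; your argument — unfold the relevant $\Kreval$ rule to reduce $\State{\Num n}{\pi'}$ to a state already known to be in $\Pole V$, then check that the explicit formula for $\Fun{\Psemst{\pi'}V}$ evaluated at $\Base n$ (together with the Substitution Lemma in the $\mathsf{let}$ case) gives the matching semantic identity — is exactly the intended routine verification, and all the steps check out.
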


\begin{lemma}\label{lemma:redint-ground-type}
  We have $V\subseteq\Redint\Tnat$ and
  $\forall n\in\Nat\ \Num n\in\Redint\Tnat$. Last
  $\Errdiv,\Errconv\in\Redint\Tnat$ and for all
  $r\in\Intercc 01\cap\Rational$ one has $\Dice r\in\Redint\Tnat$.
\end{lemma}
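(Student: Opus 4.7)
The plan is, for each term $M$ listed in the statement and each $\pi\in\Redintst\Tnat$, to verify directly that $\State M\pi\in\Pole V$ by unfolding the $\Kreval$ rule for $\State M\pi$ and checking the three clauses defining the pole: (a) well-typedness of the state, which in each case is an immediate consequence of the typing rules of Figures~\ref{fig:PCF-typing-rules} and~\ref{fig:PCFstacks-typing-rules}; (b) the tree output by $\Kreval$ belongs to $\Polynomts{\Realp}{V}{\Natt}$; (c) the associated polynomial is an $\EPCOH$-morphism $(\Flate\Nat)^V\to\Llbot$ equal to $\Psemstate{\State M\pi}V$. Throughout, fix $\pi\in\Redintst\Tnat$ and abbreviate $g=\Fun{\Psemst\pi V}$; by Lemma~\ref{lemma:PCF-stack-interp-linearity}, $g$ is linear in its second argument and satisfies $g(\Vect u,\Base\Nerr)=1$.

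The numeric case $M=\Num n$ is immediate: by definition, $\pi\in\Redintst\Tnat$ means literally that $\State{\Num n}\pi\in\Pole V$ for every $n\in\Nat$. Once this is in hand, the error and coin cases are routine. For $M=\Errdiv$ (resp.~$M=\Errconv$) the machine returns $\Ptzero$ (resp.~$\Ptone$), whose polynomial is the constant $0$ (resp.~$1$), and semantic equality with $\Psemstate{\State M\pi}V(\Vect u)=g(\Vect u,\Psem M V(\Vect u))$ follows from $\Psem{\Errdiv}V=0$ (resp.~$\Psem{\Errconv}V=\Base\Nerr$), linearity of $g$, and $g(\Vect u,\Base\Nerr)=1$. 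For $M=\Dice r$, the $\Num n$ case provides $S_0=\Kreval\State{\Num 0}\pi$ and $S_1=\Kreval\State{\Num 1}\pi$ in $\Polynomts\Realp V\Natt$ with $\Polyofpt(S_n)=\Psemstate{\State{\Num n}\pi}V$; hence $\Ptcl r{S_0}{(1-r)}{S_1}\in\Polynomts\Realp V\Natt$ and its polynomial is the convex combination $r\,\Psemstate{\State{\Num 0}\pi}V+(1-r)\,\Psemstate{\State{\Num 1}\pi}V$, an $\EPCOH$-morphism; the semantic identity then follows by linearity of $g$ applied to $\Psem{\Dice r}V=r\Base 0+(1-r)\Base 1$.

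The only slightly delicate case is $M=x\in V$: the machine returns $\Ptvar x\cS$ with $\cS(\Nerr)=\Ptone$ and $\cS(n)=\Kreval\State{\Num n}\pi$, all well defined with polynomials $\Psemstate{\State{\Num n}\pi}V$ by the $\Num n$ case. The associated polynomial is the formal expression
\[
x(\Nerr)+\sum_{n\in\Nat}x(n)\,\Psemstate{\State{\Num n}\pi}V.
\]
The key point here—the one nontrivial step of the whole lemma—is that this infinite sum still satisfies the finiteness condition on polynomials; this is precisely the content of Lemma~\ref{lemma:polynoms-variable-cases}, applied to the family indexed by $\Natt$. Semantic agreement is then a direct computation: $\Psem x V(\Vect u)=u_x=(u_x)_\Nerr\Base\Nerr+\sum_n(u_x)_n\Base n$, so by linearity of $g$ and $g(\Vect u,\Base n)=\Psemstate{\State{\Num n}\pi}V(\Vect u)$, the function $g(\Vect u,u_x)$ coincides with the value of the polynomial at $\Vect u$. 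Thus all five cases reduce to straightforward semantic bookkeeping once the $\Num n$ case is observed to be tautological, with the single substantive input being Lemma~\ref{lemma:polynoms-variable-cases} in the variable case.
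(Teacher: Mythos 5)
Your proposal is correct and follows essentially the same route as the paper: the $\Num n$ case is read off tautologically from the definition of $\Redintst\Tnat$, the error and coin cases are handled by linearity of $\Fun{\Psemst\pi V}$ together with $g(\Vect u,\Base\Nerr)=1$ and $g(\Vect u,0)=0$, and the variable case combines $\Polyofpt(\Ptvar x\cS)$ with the semantic identity $f(\Vect u)=g(\Vect u,u(x))$. Your explicit appeal to Lemma~\ref{lemma:polynoms-variable-cases} for the finiteness condition in the variable case is a point the paper leaves implicit (it is folded into the definition of $\Polyofpt$), but it is the same argument.
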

\begin{proof}
  Let $y\in V$ and $\pi\in\Redintst\Tnat$, we have to prove that
  $q=\State y\pi\in\Pole V$. We have $\Kreval(q)=\Ptvar y\cS$ where
  $\cS(n)=\Kreval\State{\Num n}\pi$ for each $n\in\Nat$ and
  $\cS(\Nerr)=\One$. By definition of $\Pole V$ we have
  $\forall n\in\Nat\ \cS(n)\in\Polynomts\Realp V\Natt$ and hence
  $\Kreval(q)$ is well-defined and belongs to
  $\Polynomts\Realp V\Natt$. Setting $C=\Extc{\Flate\Nat}^V$, the map
  $f=\Fun{\Psemstate qV}:\Pcoh C\to\Intercc 01$ is characterized by
  $f(\Vect u)=g(\Vect u,u(y))$ (remember that $\Vect u$ is a vector
  $(u(z))_{z\in V}$ of elements of $\Pcohp{\Extc{\Flate\Nat}}$) where
  $g=\Fun{\Psemst\pi\Gamma}$. On the other hand the polynomial
  associated with $\Kreval(q)$ is
  $\Polyofpt(\Kreval(q))=y_\Nerr+\sum_{n\in\Nat}y_n\Polyofpt(\cS(n))$
  which defines the function $f':\Pcoh C\to\Realpc$ by
  $f'(\Vect u)=u(y)_\Nerr+\sum_{n\in\Nat}u(y)_nf_n(\Vect u)$ where we
  know that
  $f_n(\Vect u)=\Fun{\Psemstate{\State{\Num n}{\pi}}V}(\Vect u)$ by
  our assumption that $\pi\in\Redintst\Tnat$, that is
  $f_n(\Vect u)=g(\Vect u,\Base n)$ by definition of the
  interpretation of states. This proves
  $f'(\Vect u)=g(\Vect u,u(y))=f(\Vect u)$ by
  Lemma~\ref{lemma:PCF-stack-interp-linearity}.

  Let $n\in\Nat$ and $\pi\in\Redintst\Tnat$, by definition of
  $\Redint\Tnat$ we have $\State{\Num n}{\pi}\in\Pole V$ and hence
  $\Num n\in\Redint\Tnat$.

  Let $\pi\in\Redintst\Tnat$, then $\Kreval\State\Errdiv\pi=\Ptzero$
  and $\Kreval\State\Errconv\pi=\Ptone$ are well-defined and belong to
  $\Polynomts\Realp V\Natt$. The map
  $g=\Fun{\Psemst\pi V}:\Pcoh
  C\times\Pcohp{\Extc{\Flate\Nat}}\to\Intercc 01$ satisfies
  $g(\Vect u,0)=0$ by Lemma~\ref{lemma:PCF-stack-interp-linearity} and
  since $\Psem\Errdiv V(\Vect u)=0$ we have
  $\Psemstate{\State\Errdiv\pi}V(\Vect u)=0$ and hence
  $\Psemstate{\State\Errdiv\pi}V=\Polyofpt(\Kreval\State\Errdiv\pi)$. We
  deal similarly with $\Errconv$ since
  $\Psem\Errconv V(\Vect u)=\Base\Nerr$ and $g(\Vect u,\Base\Nerr)=1$
  by Lemma~\ref{lemma:PCF-stack-interp-linearity}. Last we have
  $\Kreval\State{\Dice r}\pi=\Ptcl{r}{\Kreval\State{\Num
      0}\pi}{(1-r)}{\Kreval\State{\Num 1}\pi}$ and by our assumption
  about $\pi$ we have $\State{\Num 0}\pi,\State{\Num 1}\pi\in\Pole
  V$. It follows that $S=\Kreval\State{\Dice r}\pi$ is well-defined
  and belongs to $\Polynomts\Realp V\Natt$. Moreover
  $\Polyofpt(S)=r\Polyofpt(S_0)+(1-r)\Polyofpt(S_1)$ where
  $S_n=\Kreval\State{\Num n}\pi$ for $n=0,1$. We also know that
  $\Polyofpt(S_n)=\Psemstate{\State{\Num n}\pi}V$ for $n=0,1$. With
  the same notations as above we have
  $\Psemstate{\State{\Dice r}\pi}V(\Vect u)=g(\Vect u,r\Base
  0+(1-r)\Base 1)=rg(\Vect u,\Base 0)+(1-r)g(\Vect u,\Base 1)$ by
  Lemma~\ref{lemma:PCF-stack-interp-linearity}. This ends the proof
  that $\Polyofpt(S)=\Psemstate{\State{\Dice r}\pi}V$ since
  $\Fun{\Psemstate{\State{\Num n}\pi}V}(\Vect u)=g(\Vect u,\Base n)$.
\end{proof}

\begin{lemma}\label{lemma:Kreval-total}
  If $M$ is \Fixfree{} and if
  $\Tseq{V,x_1:\sigma_1,\dots,x_k:\sigma_k}M\tau$ and if
  $N_j\in\Redint{\sigma_j}$ for $j=1,\dots,k$, then
  $\Substbis M{\Vect N/\Vect x}\in\Redint\tau$.
\end{lemma}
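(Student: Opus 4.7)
My plan is to proceed by induction on the structure of the \Fixfree{} term $M$. Writing $M' = \Substbis M{\Vect N/\Vect x}$ and fixing an arbitrary $\pi \in \Redintst\tau$, the goal in each case is to show $\State{M'}{\pi} \in \Pole V$, which amounts to checking three things: that $\Kreval\State{M'}\pi$ is well-defined, that its underlying polynomial lies in $\Polynomts\Realp V\Natt$, and that the denotation of this polynomial agrees with $\Psemstate{\State{M'}\pi}V$. In every inductive case, the machine will make a step reducing to a strictly simpler state to which the inductive hypothesis or an earlier lemma applies, and the denotational equality will follow by unfolding the definitions in Section~\ref{sec:extensional-pcoh-semantics} and invoking Lemma~\ref{lemma:PCF-substitution}.

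For the base cases (a variable $x_j$ from the substituted batch, a variable $y \in V$, or the constants $\Num n$, $\Errdiv$, $\Errconv$, $\Dice r$), the conclusion is immediate: either $M' = N_j \in \Redint{\sigma_j}$ by hypothesis, or Lemma~\ref{lemma:redint-ground-type} already handles the term. For the arithmetic constructors $\Succ{M_0}$, $\Pred{M_0}$, $\If{M_0}{P}{Q}$ and $\Let y{M_0}{P}$, I would first use Lemma~\ref{lemma:push-int-stack} to push the constructor onto the stack, forming stacks like $\Stsucc\pi$ or $\Stif{P'}{Q'}\pi$ or $\Stlet y{P'}\pi$ that lie in $\Redintst\Tnat$, and then invoke the inductive hypothesis on $M_0$. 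For the $\mathsf{let}$ case, the side condition in Lemma~\ref{lemma:push-int-stack} --- that $\Subst{P'}{\Num n}{y} \in \Redint\Tnat$ for every $n$ --- is itself obtained by applying the inductive hypothesis to $P$ in the enlarged context $V, \Vect x:\Vect\sigma, y:\Tnat$ with the substitution augmented by $\Num n$ for $y$ (using $\Num n \in \Redint\Tnat$).

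For applications $M = \App{M_0}{M_1}$, the induction hypothesis yields $M'_0 \in \Redint{\Timpl\sigma\tau}$ and $M'_1 \in \Redint\sigma$, so the definition of $\Redintst{\Timpl\sigma\tau}$ gives $\Starg{M'_1}\pi \in \Redintst{\Timpl\sigma\tau}$, hence $\State{M'_0}{\Starg{M'_1}\pi} \in \Pole V$; the machine rule $\Kreval\State{\App{M'_0}{M'_1}}\pi = \Kreval\State{M'_0}{\Starg{M'_1}\pi}$ combined with the definitional unfolding of the semantics of states closes the case. For $M = \Abst y\sigma M_0$ with $\tau = \Timpl\sigma\tau'$, any $\pi \in \Redintst{\Timpl\sigma\tau'}$ is of the form $\Starg N\rho$ with $N \in \Redint\sigma$ and $\rho \in \Redintst{\tau'}$; I apply the induction hypothesis to $M_0$ in the context $V, \Vect x:\Vect\sigma, y:\sigma$ with the augmented substitution $[\Vect N/\Vect x, N/y]$ to conclude that $\Subst{M'_0}{N}{y} \in \Redint{\tau'}$, and then use the machine rule $\Kreval\State{\Abst y\sigma M'_0}{\Starg N\rho} = \Kreval\State{\Subst{M'_0}{N}{y}}\rho$ together with Lemma~\ref{lemma:PCF-substitution} for the semantic part. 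The $\Fix$ case is excluded by the hypothesis that $M$ is \Fixfree.

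The main obstacle is not the combinatorial shape of the induction, which is routine, but rather the denotational bookkeeping in the abstraction and $\mathsf{let}$ cases: verifying that $\Polyofpt(\Kreval\State{M'}\pi) = \Psemstate{\State{M'}\pi}V$ requires carefully matching the composition of morphisms defining $\Psemstate{\State{M'}\pi}V$ (with its implicit contractions) against the semantics obtained after the machine step. For abstraction this relies on Lemma~\ref{lemma:PCF-substitution} and on unfolding $\Psemst{\Starg N\rho}V$ as the expected evaluation; for $\mathsf{let}$ it depends on the linearity-in-the-last-argument statement of Lemma~\ref{lemma:PCF-stack-interp-linearity} used exactly as in the proof of Lemma~\ref{lemma:redint-ground-type}, to pull the sum $\sum_n u(y)_n \Psemstate{\State{\Num n}{\pi}}V$ through the stack's denotation.
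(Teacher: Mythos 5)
Your proposal is correct and follows essentially the same route as the paper's proof: induction over the (syntax-directed) typing of $M$, base cases discharged by Lemma~\ref{lemma:redint-ground-type}, ground-type constructors handled by pushing onto the stack via Lemma~\ref{lemma:push-int-stack} (with the $\mathsf{let}$ side condition obtained from the inductive hypothesis with the substitution augmented by $\Num n$), application via the definition of $\Redintst{\Timpl\sigma\tau}$, and abstraction via the inductive hypothesis with the augmented substitution. The semantic bookkeeping you flag is exactly what the paper dispatches with its ``we end the proof by observing that'' remarks, using Lemmas~\ref{lemma:PCF-substitution} and~\ref{lemma:PCF-stack-interp-linearity} as you indicate.
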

\begin{proof}
  By induction on the proof of the typing judgment
  $\Tseq{V,\Gamma}M\tau$ where
  $\Gamma=(x_1:\sigma_1,\dots,x_k:\sigma_k)$.  We use $M'$ for
  $\Substbis M{\Vect N/\Vect x}$ to increase readability.

  In cases $M=\Num n$, $M=\Dice r$, $M=\Errdiv$, $M=\Errconv$ and
  $M=y\in V$ we have $M'=M$ and $M\in\Redint\Tnat$ by
  Lemma~\ref{lemma:redint-ground-type}.

  Assume that $M=x_i$ for some
  $i\in\Eset{1,\dots,k}$, we have $\tau=\sigma_i$ and $M'=N_i$ so that
  $M'\in\Redint\tau$ by our assumption about $N_i$.

  Assume that $M=\If RPQ$, so that $\tau=\Tnat$, $M'=\If{R'}{P'}{Q'}$
  and, by inductive hypothesis, $R',P',Q'\in\Redint\Tnat$. Let
  $\pi\in\Redintst\Tnat$, we have
  $\Kreval\State{M'}\pi=\Kreval\State{R'}{\Stif{P'}{Q'}\pi}$ which is
  well-defined and belongs to $\Polynomts\Realp V\Natt$ by inductive
  hypothesis since $\Stif{P'}{Q'}\pi\in\Redintst\Tnat$ by
  Lemma~\ref{lemma:push-int-stack}. We end the proof that
  $\State{M'}\pi\in\Pole V$ by observing that
  $\Psemstate{\State{M'}\pi}V=\Psemstate{\State{R'}{\Stif{P'}{Q'}\pi}}V$.

  The cases $M=\Succ N$ and
  $M=\Pred N$ are similar and simpler.

  Assume that $M=\Let xRP$ so that $\tau=\Tnat$ and
  $M'=\Let x{R'}{P'}$. Let $\pi\in\Redintst\Tnat$, we have
  $\Kreval\State{M'}{\pi}=\Kreval\State{R'}{\Stlet x{P'}\pi}$. By
  inductive hypothesis we have $R'\in\Redint\Tnat$ and
  $\Subst{P'}{\Num n}x\in\Redint\Tnat$ since $\Num n\in\Redint\Tnat$
  for all $n\in\Nat$. Therefore $\Stlet x{P'}\pi\in\Redintst\Tnat$ by
  Lemma~\ref{lemma:push-int-stack}. It follows that
  $\Kreval\State{M'}{\pi}$ is well-defined and belongs to
  $\Polynomts\Realp V\Natt$. We end the proof that
  $\State{M'}\pi\in\Pole V$ by observing that
  $\Psemstate{\State{M'}\pi}V=\Psemstate{\State{R'}{\Stlet x{P'}\pi}}V$.

  Assume that $M=\App RP$ with $\Tseq{V,\Gamma}R{\Timpl\sigma\tau}$
  and $\Tseq{V,\Gamma}P{\sigma}$, we have $M'=\App{R'}{P'}$ and, by
  inductive hypothesis, $R'\in\Redint{\Timpl\sigma\tau}$ and
  $P'\in\Redint\sigma$. Let $\pi\in\Redintst\tau$, we have
  $\Starg{P'}\pi\in\Redintst{\Timpl\sigma\tau}$ by definition of this
  latter set and hence
  $\Kreval\State{M'}\pi=\Kreval\State{R'}{\Starg{P'}\pi}$ is
  well-defined and belongs to $\Polynomts{\Realp}{V}{\Natt}$.  We end
  the proof that $\State{M'}\pi\in\Pole V$ by observing that
  $\Psemstate{\State{M'}\pi}V=\Psemstate{\State{R'}{\Starg{P'}\pi}}V$.

  Assume last that $M=\Abst x\sigma P$ with
  $\Tseq{V,\Gamma,x:\sigma}{P}{\phi}$ and
  $\tau=(\Timpl\sigma\phi)$. Let $\rho\in\Redintst{\Timpl\sigma\phi}$,
  that is $\rho=\Starg N\pi$ with $N\in\Redint\sigma$ and
  $\pi\in\Redintst\phi$. By inductive hypothesis applied to $P$, we
  have that $\Subst{P'}Nx\in\Redint\phi$ and hence
  $\Kreval\State{M'}{\rho}
  =\Kreval\State{\Subst{P'}Nx}{\pi}\in\Polynomts{\Realp}{V}{\Natt}$. We
  end the proof that $\State{M'}\pi\in\Pole V$ by observing that
  $\Psemstate{\State{M'}{\Starg
      N\pi}}V=\Psemstate{\State{\Subst{P'}Nx}\pi}V$.
\end{proof}

\begin{theorem}
  Assume that $\Tseq VM\Tnat$ and that $M$ is \Fixfree. Then
  $\Kreval\State M\Stempty$ is a well-defined element $S$ of
  $\Polynomts\Realp V\Natt$ which satisfies
  $\Polyofpt(S)=\Scatch\Compl\Psem MV$, that is, for all
  $\Vect u\in\Pcoh{\Extc{\Tsem V}}$ one has
  $\Fun{\Polyofpt(S)}(\Vect u)=\Fun{\Psem MV}(\Vect u)_\Nerr$.
\end{theorem}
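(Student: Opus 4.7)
The plan is to deduce this theorem as a direct corollary of the reducibility machinery developed in Lemmas~\ref{lemma:push-int-stack}, \ref{lemma:redint-ground-type}, and especially \ref{lemma:Kreval-total}, combined with the definition of the denotational semantics of states when the stack is empty.

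First I would observe that the empty stack $\Stempty$ belongs to $\Redintst\Tnat$: indeed, this is explicitly noted in the definition of $\Redintst\Tnat$ (and it follows immediately from the fact that $\Num n \in \Redint\Tnat$ by Lemma~\ref{lemma:redint-ground-type}, so $\State{\Num n}{\Stempty} \in \Pole V$ trivially via the first rule that handles $\State{\Num n}\Stempty = 0$, whose interpretation matches). Then I would apply Lemma~\ref{lemma:Kreval-total} in the degenerate case where the non-ground context is empty ($k=0$, so no substitution is performed): since $M$ is \Fixfree{} and $\Tseq V M \Tnat$, we obtain directly $M \in \Redint\Tnat$. By the definition of $\Redint\Tnat$, this means that for every $\pi \in \Redintst\Tnat$, one has $\State M\pi \in \Pole V$. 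Instantiating this with $\pi = \Stempty$ gives $\State M\Stempty \in \Pole V$, which by the definition of $\Pole V$ means precisely that $S = \Kreval\State M\Stempty$ is well-defined, belongs to $\Polynomts\Realp V\Natt$, and satisfies $\Polyofpt(S) = \Psemstate{\State M\Stempty}V$.

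It then remains to identify $\Psemstate{\State M\Stempty}V$ with $\Scatch\Compl\Psem MV$. This is an unfolding of definitions: by the clause defining the semantics of a state, $\Psemstate{\State M\Stempty}V$ is the composition of $\Contr C$, $\Id \ITens \Psem MV$, and $\Psemst\Stempty V$. By the clause defining $\Psemst\Stempty V$, this last morphism is the composition $\Tens{\Excl C}{\Flate\Nat} \to \Flate\Nat \to \Llbot$ using $\Weak C$ and $\Scatch\Nat$, so at the level of functions $\Fun{\Psemst\Stempty V}(\Vect u, x) = x_\Nerr$. Plugging this into the formula for $\Psemstate{\State M\Stempty}V$ yields $\Fun{\Psemstate{\State M\Stempty}V}(\Vect u) = (\Fun{\Psem MV}(\Vect u))_\Nerr$, which is exactly $\Scatch\Compl\Psem MV$ evaluated at $\Vect u$.

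There is no real obstacle here: all the hard work—both the totality of $\Kreval$ on \Fixfree{} almost-closed states and the coincidence between $\Polyofpt\circ\Kreval$ and the denotational semantics—has already been absorbed into the reducibility predicates $\Redint\sigma$ and $\Redintst\sigma$ and into Lemma~\ref{lemma:Kreval-total}. The only minor verification to spell out is the semantic identity for the empty stack, which is immediate from the categorical definitions of $\Weak{}$ and $\Scatch\Nat$ (recall $\Fun{\Scatch I}(u) = u_\Nerr$, noted at the end of Section~\ref{sec:flate-general}).
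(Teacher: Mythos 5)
Your proposal is correct and follows exactly the route the paper takes: the theorem is the special case of Lemma~\ref{lemma:Kreval-total} with the non-ground context empty, instantiated at the stack $\Stempty\in\Redintst\Tnat$, plus the routine unfolding of $\Psemst\Stempty V$ to identify $\Psemstate{\State M\Stempty}V$ with $\Scatch\Compl\Psem MV$. You merely spell out the details that the paper leaves implicit in its one-line proof.
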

This is the special case of the above lemma when $\Gamma$ is the empty
context.

\subsection{Application}
Let $M$ be such that $\Tseq VM\Tnat$ (which typically can contain
fixpoint constructs). Then if $\Tseq V{M_0,M^0}\Tnat$ are \Fixfree{}
and satisfy $M_0\Termso M\Termso M^0$ then
$S_0=\Kreval\State{M_0}{\Stempty}$ and
$S^0=\Kreval\State{M^0}{\Stempty}$ are elements of
$\Polynomts\Realp V\Natt$ which satisfy
$\Polyofpt(S^0)=\Scatch\Compl\Psem{M^0}V$ and
$\Polyofpt(S_0)=\Scatch\Compl\Psem{M_0}V$ and hence
$\Polyofpt(S_0) \Exto{\Limpl{\Excl{(\Flate\Nat})^V}\Llbot}
\Scatch\Compl\Psem{M}V
\Exto{\Limpl{\Excl{(\Flate\Nat})^V}\Llbot}\Polyofpt(S^0)$ by
Theorem~\ref{th:psem-ext-order}.

Of course the polynomials $\Polyofpt(S_0)$ and $\Polyofpt(S^0)$ are
usually infinite but for any finite subset $J$ of $\Nat$ we can
precompose the former with $(\Idl J\Nat)^V$ and the latter with
$(\Idu J\Nat)^V$ in $\Kl\EPCOH$ and one obtains by
Lemma~\ref{lemma:upper-id} in that way two finite polynomials $t_0$
and $t^0$ such that
$t_0 \Exto{\Limpl{\Excl{(\Flate\Nat})^V}\Llbot} \Scatch\Compl\Psem{M}V
\Exto{\Limpl{\Excl{(\Flate\Nat})^V}\Llbot}t^0$.  Notice that these
restrictions by $J$ can be performed on the fly during the run of
$\Kreval$ which will then return finite polynomials.

\begin{remark}
  This reducibility proof would also work for terms containing some
  restricted form of recursion such as the higher order primitive
  recursion of Gödel System T. It turns out that, for such terms, the
  support of the interpretation of type $\sigma$ in $\PCOH$ is a
  finitary set in the sense of the Finiteness Space
  semantics~\cite{Ehrhard00b}. In that case we can apply the method
  above to the term $M$ itself, without taking before syntactic
  approximations $M_0$ and $M^0$. We just need to precompose
  $(\Idl J\Nat)^V$ and $(\Idu J\Nat)^V$ for getting finite polynomial
  approximations.
\end{remark}

\section*{Related work and conclusion}
This work takes place in a general trend trying to extract formal
tools from denotational models, much in the spirit of Abstract
Interpretation. Typical developments of this kind are the various
intersection typing systems dating back to the early work of Coppo and
Dezani~\cite{CoppoDezzani78} which are often deeply related with
denotational models such as Scott semantics of the relational model of
LL and of the $\lambda$-calculus. Among these contributions one of the
most relevant to the present work is~\cite{BreuvartLago18} where an
intersection typing system is designed for approximating probabilities
of convergence. It is still an open problem to understand the
connection between this type-based approximations and those, based on
PCS, that we develop here. Another interesting connection might be
found in the work of Salvati and Walukiewicz on Higher Order Recursion
Schemes where Krivine machines play a essential role, in connection
with denotational properties of $\lambda$-terms with fixpoints.

More examples and practical computations will be the object of a further paper.

% Citer:
%
% Salvati et Walukiewicz pour leur MK qui calcule des points de
% la sémantique relationnelle

%%% Local Variables:
%%% mode: latex
%%% TeX-master: "preprint"
%%% End:

%\bibliographystyle{alpha}
\bibliography{newbiblio}

% \section*{Appendix: additional comments on Theorem~\ref{th:Tdistobs-denot-dist}}
% We would like to add some further observations on
% Theorem~\ref{th:Tdistobs-denot-dist} for the referees. If the paper is
% accepted these comments will be made available online.

% \input{comment.tex}

\end{document}